%% file: main-2.tex
\documentclass[11pt]{article}

\input{0-macros}

\title{Graph Spectral Sparsification is in Catalytic Logspace}
\author{Cassandra Marcussen\\Harvard University\\\texttt{cmarcussen@g.harvard.edu} \and Edward Pyne\\MIT\\\texttt{epyne@mit.edu} \and Ronitt Rubinfeld\\MIT\\\texttt{ronitt@csail.mit.edu}}
\begin{document}
\maketitle
\begin{abstract}
    We give a catalytic logspace algorithm for the problem of graph spectral sparsification. Given an undirected graph $G$ on $n$ vertices and $\eps>0$, our algorithm outputs an $\eps$-spectral sparsifier of $G$ with $O(n\eps^{-2}\log n)$ edges, matching the effective resistance sampling of Spielman and Srivastava (STOC 2008). This gives a new, natural problem in catalytic logspace that is not known to be in deterministic $\NC$ or $\SC$. Our main contribution is an entirely new technique in the compress--or--random paradigm for catalytic logspace that we believe will have further applications. 
    
    We first analyze effective-resistance sparsification using a pessimistic estimator that can itself be computed in catalytic logspace. The estimator is motivated by the viewpoint of graph quasirandomness and immediately gives a simple, deterministic greedy algorithm for graph sparsification.
    
    Subsequently, we show that such a pessimistic estimator can be transformed into an algorithm that performs an in-place compression of a string with bad potential. Our algorithm is based on using the potential function to define a measure over strings, and implementing arithmetic coding using this measure in-place.  This compression technique is substantially distinct from all prior tools in the field of catalytic computation.
\end{abstract}

\section{Introduction}
Graph sparsification has proven to be a fundamental primitive across algorithms and complexity (see the survey of Batson, Spielman, Srivastava, and Teng \cite{DBLP:journals/cacm/BatsonSST13}). Initially defined as cut sparsifiers \cite{DBLP:conf/stoc/BenczurK96}, the work of Spielman and Teng  \cite{DBLP:journals/siamcomp/SpielmanT11} generalized cut sparsification to spectral sparsification:
\begin{definition}[Spectral Approximation]
    For an (undirected) graph $G$ with Laplacian $L_G$, we say a graph $H$ is an $\ep$-spectral approximation of $G$ if 
    \[
    (1-\ep)L_G\preccurlyeq L_H \preccurlyeq (1+\ep)L_G
    \]
    where $\preccurlyeq$ denotes the Loewner order on PSD matrices.
\end{definition}
Spielman and Teng showed that every graph $G$ has an $\ep$-spectral sparsifier with $\tO(n/\ep^2)$ edges, and gave a randomized algorithm to construct such a sparsifier.
While much existing work~\cite{SS11, DBLP:journals/siamcomp/BatsonSS12, DBLP:conf/stoc/Lee017} has the goal of constructing sparsifiers in small \textit{time}, several recent works have explored the goal of constructing sparsifiers in small \textit{space}~\cite{DBLP:conf/icalp/AhnG09, DBLP:conf/pods/AhnGM12, DBLP:journals/mst/KelnerL13, DBLP:journals/siamcomp/KapralovLMMS17, DBLP:journals/toct/DoronMVZ24}. 

Simultaneously, a new model of space-bounded computation known as catalytic computing has seen an explosion of recent work \cite{DBLP:conf/stoc/BuhrmanCKLS14, DBLP:journals/eatcs/Koucky16, DBLP:journals/mst/BuhrmanKLS18, DBLP:journals/toct/DattaGJST26, DBLP:conf/stoc/CookM24, DBLP:conf/stoc/CookLMP25,DBLP:journals/eatcs/Mertz23, DBLP:journals/cc/Pyne25, DBLP:journals/jacm/Williams26}. 
\begin{definition}[Catalytic Logspace]
    We say a language $L$ is in catalytic logspace ($\CL$) if there is a machine $A$ that works as follows on every input $x$ and catalytic tape content $\tau$. The machine has a read-only tape holding $x$, a worktape of length $O(\log n)$, and a catalytic tape of length $\zo^{n^c}$ holding $\tau$. At the end of the computation, the machine returns $L_x$ and the catalytic tape reads $\tau$. 
\end{definition}
Catalytic algorithms seem to have powers beyond that of standard logspace. For instance, they can compute matrix determinants (complete for the class $\DET$) and $s\ra t$ connectivity in directed graphs (complete for the class $\NL$). Recent work has built practical $\CL$ algorithms for interesting problems~\cite{DBLP:conf/stoc/CookM24}, and shown that the fundamental problem of graph matching lies in $\CL$ \cite{DBLP:conf/focs/AgarwalaM25,DBLP:conf/innovations/AgarwalaAV26,CDKMS26}. However, there was no known way to use the powers of catalytic computing to construct even the weaker notion of cut sparsifiers. 

We show that catalytic logspace algorithms can produce graph spectral sparsifiers of near-linear size. We first recall the definition of search problems in catalytic logspace:
\begin{definition}[Search catalytic logspace]
    For a relation $R\subseteq \zo^*\times \zo^*$, we say that $R$ is in $\searchCL$ if there is a valid catalytic logspace machine that on input $x$ and starting tape $\tau$, outputs $y_{x,\tau}$ such that $(x,y_{x,\tau})\in R$.\footnote{Note that we allow the output of the algorithm to depend on the initial catalytic tape (and our algorithm, like every other compress--or--random algorithm, will exploit this freedom), but the output must always be valid according to the relation.}
\end{definition}

\begin{theorem}\label{thm:main}
    There is a $\searchCL$ algorithm that, given an undirected graph $G$ on $n$ vertices and $\eps>0$, outputs a (weighted) graph $H$ with $O(n\eps^{-2}\log n)$ edges such that $H$ is an $\eps$-spectral approximation of $G$.
\end{theorem}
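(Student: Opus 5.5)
The plan follows the abstract's two-stage structure: a pessimistic estimator for effective-resistance sampling, then an in-place compress--or--random argument driven by that estimator.

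\textbf{Stage 1: the estimator.} Following Spielman--Srivastava, we sample $m=O(n\eps^{-2}\log n)$ edges i.i.d.\ with probabilities $p_e \propto w_e R_{\mathrm{eff}}(e)$ and reweight each sampled edge by $1/(m p_e)$. Write $Y_e = L_G^{+/2} L_e L_G^{+/2}/p_e$ for the normalized rank-one term. The output is an $\eps$-sparsifier iff $\frac1m\sum_i Y_{e_i}$ has all eigenvalues in $[1-\eps,1+\eps]$ on $\mathrm{im}(L_G)$.

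To control this, I would use a matrix-Chernoff style pessimistic estimator that avoids matrix exponentials, since those are hard in logspace. The choice is a trace of inverse, barrier type function in the style of Batson--Spielman--Srivastava, or a low-degree trace moment
\[
\Phi_t=\mathrm{tr}\Bigl(\bigl(\tfrac1m\textstyle\sum_{i\le t} Y_{e_i}-\tfrac{t}{m}I\bigr)^{2k}\Bigr), \qquad k=\Theta(\log n).
\]
This $\Phi$ should satisfy three properties:
\begin{itemize}
\item its expectation over the next sample does not increase, up to a controlled factor;
\item $\Phi_0$ is small, at most $\mathrm{poly}(n)$ times a small constant;
\item $\Phi_m<\eps^{2k}$ implies success.
\end{itemize}
Each quantity is a rational function of entries of $L_G^{+}$ and of matrix powers. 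These lie in $\DET$ and are therefore computable in $\CL$ to $\mathrm{poly}(n)$ bits of precision. The derandomized greedy choice, picking each $e_{t+1}$ to minimize the conditional expectation, then already gives a deterministic algorithm. However, it needs to store $t$ chosen edges and recompute, which costs $\Omega(m\log n)$ space. This is why catalytic space is needed.

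\textbf{Stage 2: compress--or--random.} Read a block of the catalytic tape as a sequence $(e_1,\dots,e_m)$, decoded from $O(\log n)$-bit chunks so that each $e_i$ is distributed approximately as $p$. Checking whether this sequence is good requires computing $\Phi_m$, which is done in $\CL$ using the rest of the tape. If the sequence is good, we output $H$ directly. If it is bad, its probability mass under the measure $\mu$ defined by the potential is tiny: define $\mu(e_{t+1}\mid e_{\le t}) \propto p_{e_{t+1}}\cdot \Phi_{t+1}/\mathbb E[\Phi_{t+1}]$, so that sequences with large final potential carry extra weight relative to $p^{\otimes m}$. Concretely, $\mu(\text{seq})/p(\text{seq})\ge \Phi_m/\Phi_0 \ge n^{\Omega(k)}$.

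Arithmetic coding under $\mu$ therefore encodes the bad string in $\log(1/p(\text{seq}))-\Omega(k\log n)$ bits, which frees $\Omega(\log^2 n)$ bits, or more with a larger $k$. We iterate this on fresh blocks. A counting argument shows that after polynomially many blocks either some block is good or the total freed space suffices to run the explicit greedy algorithm of Stage 1 in the freed region. Afterwards we decompress every block in reverse to restore $\tau$.

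\textbf{Main obstacle.} The hard step is performing the arithmetic coding \emph{in place} with only $O(\log n)$ free workspace, while the conditional probabilities $\mu(\cdot\mid e_{\le t})$ are computed by subroutines that themselves use the catalytic tape. The encoder must process the sequence one symbol at a time. It maintains the current interval as a long number stored in the tape region already vacated by previously consumed symbols. Each update must be reversible and exact, so the approach rounds the conditional probabilities to dyadic rationals with $O(\log n)$ bits in a way that keeps both $\mu$ and the ``$\Phi$ decreases in expectation'' property intact up to a $(1+1/\mathrm{poly})$ loss. The encoding then writes the shrinking interval over the prefix being consumed, with carries propagated by rescanning the tape. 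I would first try a rANS-style (asymmetric numeral system) encoder rather than interval arithmetic coding, since its state update $x\mapsto \lfloor x/f\rfloor\cdot M + c + (x\bmod f)$ is an exactly invertible bijection on integers and is easier to implement reversibly in place.
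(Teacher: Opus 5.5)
Your overall architecture (estimator, then measure, then in-place arithmetic coding on many blocks, then solving in the freed space, then decompressing) matches the paper's. However, both load-bearing steps have genuine gaps.

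\textbf{The estimator.} First replace $I$ by $\Pi$: with $I$, the $\one$ direction alone contributes $(t/m)^{2k}$, so $\Phi_m\ge 1$ and the success test never passes. Even after this fix, $S_t$ is a martingale and $S\mapsto\mathrm{tr}(S^{2k})$ is convex, so Jensen gives $\E[\Phi_{t+1}\mid e_{\le t}]\ge\Phi_t$. Your $\Phi$ is therefore a \emph{sub}martingale. Indeed $\Phi_0=0<\E[\Phi_1]$, so no multiplicative ``controlled factor'' yields the chain $\mu(\mathrm{seq})/p(\mathrm{seq})\ge\Phi_m/\Phi_0$ that Stage~2 needs.

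The missing idea is a term that pre-pays for the randomness of the undecided edges. The paper uses $\Tr\exp(cM_\sigma+V_\sigma)+\Tr\exp(-cM_\sigma+V_\sigma)$ with $V_\sigma=\log n\sum_{e\notin S}\Lambda_e$. The Bernstein mgf bound gives $\ln\E e^{\pm cX_e}\preccurlyeq Y_e$, and Lieb's inequality then makes $\Phi$ an exact supermartingale under independent $\Bern(q_e)$ choices. The scalar bound $u^{2t}\le\frac{(2t)!}{2\theta^{2t}}(e^{\theta u}+e^{-\theta u})$ converts the final value back into a trace-moment bound.

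Your reason for avoiding exponentials is mistaken. $\Tr\exp(A)$ with $\|A\|\le\mathrm{poly}(n)$ is computable in $\CL$ to $2^{-\mathrm{poly}(n)}$ accuracy: truncate the power series at $\mathrm{poly}(n)$ terms and use matrix powering in $\TC^1\subseteq\CL$. The barrier alternative is only named; you give no argument that it is a supermartingale for any fixed sampling distribution.

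Separately, $\mu/p\ge n^{\Omega(k)}$ on every bad sequence is impossible. We have $\mu(\mathrm{Bad})\le 1$, while at $O(n\eps^{-2}\log n)$ samples $p(\mathrm{Bad})\ge n^{-O(1)}$. This already happens for $G=K_n$, where a single vertex's sampled degree deviates by a $1\pm\eps$ factor with probability at least $n^{-O(1)}$. So a block is guaranteed to free only $O(\log n)$ bits. That suffices with polynomially many blocks, but only once parameters are set so the saving beats the per-block overhead; the paper samples at accuracy $\eps/2^{K}$ to obtain an $n^{K}$ gap.

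\textbf{In-place coding.} The forward scheme does not work as described. Once $e_{\le t}$ is overwritten by the interval, the next update needs $\mu(\cdot\mid e_{\le t})$, which depends on the erased prefix. Recovering it means decoding the whole prefix sequentially, with no room to store it. The paper instead encodes the suffix first ($i=t,\dots,1$), so $x_{\le i}$ stays on the tape. It then observes that the code bits contributed at step $i$ are determined by $x_{\le i}$ and a single pending ``resolving'' bit. (rANS is last-in-first-out, so run in reverse it would also sidestep this, but you do not say so.)

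The second obstacle, which neither of your schemes addresses, is space. The code for $x_{i+1..t}$ has length about $\lambda(t)-\lambda(i)$, with $\lambda(i)=-\log\mu(x_{\le i})$. For a generic string this exceeds $t-i$ by far more than a constant, so the growing code would overwrite the still-needed $x_{\le i}$. The paper resolves this by encoding only the shortest prefix $x_{\le t'}$ whose saving $s(t')=t'-\lambda(t')$ reaches the target. By minimality $s(t')\ge s(i)$ for all $i\le t'$ (a final compression record), which bounds the code for $x_{i+1..t'}$ by $t'-i+4$ bits. The rest of the string is kept verbatim behind a header. Without these two ideas, the step you correctly flag as the main obstacle remains open.
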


We first compare our work to what was previously known. The most relevant works are the leverage-score-based sparsifier of Spielman and Srivastava~\cite{SS11} and the (partial) derandomization of their work by Doron, Murtagh, Vadhan, and Zuckerman~\cite{DBLP:journals/toct/DoronMVZ24}. The work of Spielman and Srivastava \cite{SS11} showed that, to construct a sparsifier, the following suffices: for each edge $e\in G$, include $e$ independently with probability $p_e$, where $p_e$ is proportional to the leverage score of $e$. Since (approximations to) $p_e$ are computable by $\polylog(n)$ depth circuits, their result immediately implies a randomized $\NC$ ($\RNC$) algorithm for sparsification. However, their result does not give a deterministic parallel algorithm, nor a catalytic logspace one.

Later, the work of Doron, Murtagh, Vadhan, and Zuckerman~\cite{DBLP:journals/toct/DoronMVZ24} gave a partial derandomization of this result. They showed that if the samples $p_e$ were generated from a $t$-wise independent distribution, one could obtain a sparsifier with $\tO(n^{1+1/t}\eps^{-2})$ edges. Since for any $t=O(1)$ we can sample from such a distribution in logspace, their work implies a $\searchCL$ algorithm for sparsifying a graph to $n^{1+\delta}$ edges for any constant $\delta>0$. However, they show this strategy \textit{cannot} give near-linear sparsity unless $t=\Omega(\log n)$, and hence they require $\Omega(\log^2 n)$ free space in this regime. Further work of Putterman, Vadhan, and Zaripov~\cite{PVZ26} analyzed bounded-independence graph sampling, but still does not imply a deterministic small-space algorithm.

\subsection{Proof Overview}
Our proof consists of two parts: analyzing leverage score sparsification using a pessimistic estimator, and going from such an estimator to a $\CL$ algorithm. We first discuss the former. 

\subsubsection{Building the pessimistic estimator via quasirandomness and the trace-exponential norm}\label{sec:trace-exp-description}

Before giving an overview of sparsification in $\CL$, we show how to construct a greedy sparsifier by tracking the change in a potential function over the graph as we fix edges of the sparsifier. We remark that this component of our proof is essentially implicit by combining the papers of Wigderson and Xiao~\cite{DBLP:journals/toc/WigdersonX08} and Spielman and Srivastava~\cite{SS11}. We give a direct proof (using similar ideas) that we feel better motivates the construction.

Our greedy sparsifier can be viewed as a \textit{derandomization} of Spielman and Srivastava's~\cite{SS11} method of constructing a spectral sparsifier via leverage-score sampling. Leverage-score sampling keeps each edge $e$ of the original graph independently with probability $q_e\approx p_e$, where $p_e$ is proportional to the leverage score of $e$, and reweights $e$ by $1/q_e$ when it is kept. This procedure requires randomness, and putting it in $\CL$ is subtle for two reasons. First, even with a deterministic greedy algorithm, it is not clear how to track the past decisions made in $\CL$, since we lack space to store a partial transcript.

Second, the quantity we would most naturally track (the operator norm of the normalized error) is neither a supermartingale under the sampling nor easily computable. Thus, we cannot derandomize against it directly. To fix this, we instead design a potential function $\Phi$ on \textit{partial} assignments $(\chi, S)$ where $S \subseteq E$ and $\chi \in \zo^S$, where $\chi_e \in \zo$ captures our decision to include/not include edge $e$ in the sparsifier. 
This potential function satisfies three critical properties:
\begin{enumerate}
    \item\label{itm:intro:good} For a complete state $\sigma=(\chi,E)$, if $\Phi(\sigma)$ is small (relative to $\Phi(\emptyset)$) then the graph constructed according to $\chi$ is a good sparsifier. 
    \item\label{itm:intro:comp} $\Phi(\sigma)$ is computable in $\CL$ (and $\P$).
    \item\label{itm:intro:super} For every $\sigma=(\chi,S)$ and $e\in E\setminus S$, let $\sigma_{e=b}$ be the state extended by including or not including edge $e$. Then
    \[
    \E_{b\sim \Bern(q_e)}[\Phi(\sigma_{e=b})]\le \Phi(\sigma)
    \]
    (and hence $\min_{b\in \zo}\Phi(\sigma_{e=b})\le \Phi(\sigma)$).
\end{enumerate}
The final property is exactly that $\Phi$ is a pessimistic estimator, as initially defined in \cite{DBLP:journals/jcss/Raghavan88}.

Given such a $\Phi$, we immediately obtain a greedy algorithm as follows: take any fixed order $e_1,\ldots,e_m$ over the edges of $G$, and at each step choose to include or not include $e$ to minimize the value of $\Phi$, using that it is efficiently computable by~\Cref{itm:intro:comp}. At the end of this process, we will have found a complete state $\sigma$ where $\Phi(\sigma)\le \Phi(\emptyset)$ by~\Cref{itm:intro:super}, which immediately implies that we have constructed a good sparsifier by~\Cref{itm:intro:good}. Thus, at least to obtain a greedy algorithm, it suffices to construct $\Phi$ with these properties.

To construct $\Phi$, let us reason about the quantity we must control. Recall (\Cref{eq:approxcond}) that $H$ is an $\ep$-spectral approximation of $G$ precisely when $\|L_G^{+/2}(L_H-L_G)L_G^{+/2}\|\le\ep$.\footnote{The $\ell_2$ operator norm of a matrix $M$, denoted $||M||$, is defined as: $||M|| := \sup_{x \neq 0} \frac{||Mx||_2}{||x||_2}$. If $M$ is a symmetric real matrix, $||M|| = \max_i |\lambda_i(M)|$.} Thus the quantity we must keep small is the $\ell_2$ operator norm of the error matrix. However, the operator norm itself is not a good potential: it is not a supermartingale under the random $\Bern(q_e)$ choices, and it is not obvious how to compute its expectation. 

\textbf{Bounding the operator norm via quasirandomness.} A more tractable way to certify that the operator norm is small is by upper-bounding it with even-powered trace moments and showing that these even-powered trace moments are small. 
The reason for this connects to the theory of quasirandom graphs~\cite{chung1989quasi, chung2002sparse}. \cite{chung2002sparse} shows that the count of length-$2t$ walks implies other deterministic properties possessed by random graphs. Thus, length-$2t$ closed walks (captured by the trace of the power) govern how ``random'' a graph looks in the quasirandomness literature~\cite{chung2002sparse} (i.e., how good a sparsifier of the complete graph a graph is). 
As in the complete graph setting \cite{chung2002sparse}, obtaining our desired sparsity requires $t=\log n$.

However, we again run into the issue of tractability in computing the even-power trace moments, so we instead use a smooth, convex proxy for the trace moments, which takes the form of a symmetrized trace exponential (see~\Cref{def:potential}).

\textbf{Constructing a pessimistic estimator via matrix inequalities.} Second, Lieb's inequality (\Cref{prop:lieb}) lets us pull the expectation over the next random choice \textit{inside} the trace, replacing the expectation of a function of a random matrix by a deterministic upper bound. This is what makes $\Phi$ a supermartingale. Such a potential is also called a \textit{pessimistic estimator} in the derandomization literature, because the terms capturing the not-yet-specified edges assume the worst over the remaining randomness, so greedy choices can only lower the potential.

Recall that constructing such a $\Phi$ does \textit{not} immediately give a $\searchCL$ algorithm, because in catalytic space we do not have space to store our previous set of choices. We show that we can \textit{combine} this pessimistic estimator with a second idea that we now explain.

\subsubsection{The compress--or--random technique}
We first overview the basic idea of compress--or--random. Suppose there is a \textit{randomized} algorithm   $A(r)=A(x,r)$ that, for every input $x$, returns a valid solution with high probability over the random string $r$ (and for simplicity, assume it either produces some fixed solution $y$ or a failure symbol $\perp$). Moreover, suppose $A$ is computable in $\CL$ given $x$ and $r$.\footnote{Despite the fact that randomized $\CL$ is known to be equal to deterministic $\CL$, it is not the case that this immediately gives a $\CL$ algorithm. This is because a randomized $\CL$ machine is given \textit{read-once} access to the string $r$, whereas $A$ is allowed to read each bit many times.} For now, think of our $A$ as a randomized algorithm that implements leverage-score sampling, though we will later re-introduce the potential function developed earlier.

The compress--or--random strategy treats the catalytic tape $\tau$ as the random string used by $A$. Once we do so, there are two cases:
\begin{enumerate}
    \item $A(\tau)$ produces a good output $y$. In this case, we can return $y$ and halt, and since $A$ is given read-only access to its randomness, we certainly restore $\tau$.
    \item $A(\tau)$ fails to produce a good output. Since $A$ succeeds whp, $\tau$ must be \textit{compressible} in an information-theoretic sense. 
    If we can perform this compression \textit{algorithmically}, then we will have freed up a large amount of space on the catalytic tape, which we can then use to solve the problem using a (space-inefficient) brute-force algorithm. Finally, we can decompress $\tau$ and successfully restore the tape.
\end{enumerate}
Unfortunately, for a generic randomized algorithm $A$ (even one computable in $\CL$) it is likely hard to give a compression algorithm for bad random strings (as such a result would imply derandomization consequences that are currently out of reach~\cite{LPT24}). Thus, in essentially every case, the most challenging part of materializing compress--or--random is giving an algorithmic compression scheme for the specific search problem of interest. We think of such a scheme as a pair $\Enc_A,\Dec_A$, where for every bad $\tau$ we have $|\Enc_A(\tau)|<|\tau|$, and $\Dec_A(\Enc_A(\tau))=\tau$.

\paragraph{The compression model}
Even worse, it does not suffice to find $\Enc,\Dec$ that are computable in $\CL$, in the sense that there is a $\CL$ machine that is given $\tau$ as input and prints $\Enc(\tau)$ to an output tape. Since we do not have separate workspace to write $\Enc(\tau)$ while computing it, our compression algorithm must \textit{transform} $\tau$ into $\Enc(\tau)$ (and the reverse for the decoding algorithm) with only $O(\log n)$ additional free space. Formally, we must show that both functions lie in the following class:
\begin{definition}[In-place catalytic logspace~\cite{DBLP:journals/corr/abs-2510-12005}]
    We say a length-preserving function $f:\zo^*\ra\zo^*$ is computable in $\inplaceFCL$ if the following holds. There is a machine with a read--write primary tape that holds $x$, an auxiliary worktape of length $O(\log |x|)$, and a catalytic tape holding $\tau\in \zo^{\poly(|x|)}$. For every $x$ and $\tau$, when the machine halts the primary tape holds $f(x)$, and the catalytic tape holds $\tau$. 
\end{definition}
Note that this is a \textit{stronger} condition than each bit of $\Enc$ being computable in $\CL$, since once we begin to modify $x$ into $y=\Enc(x)$, we lose access to the bits of $x$ that have been replaced with $y$.

The approach of prior compress--or--random algorithms~\cite{DBLP:journals/cc/Pyne25,DBLP:conf/focs/AgarwalaM25,DBLP:conf/innovations/AgarwalaAV26,CDKMS26} was to show for every bad $\tau$, there is a block of $\tau$ of length $C\log n$ that can be compressed given the rest of $\tau$ and $c\log n$ bits of auxiliary information, for $C>c$. Since we can store the $O(\log n)$ size block on the auxiliary worktape while searching for this information, this side-steps the challenge of in-place compression. Unfortunately, such a block-compression result does not appear to be true for leverage score sparsification: being a bad sparsifier is a global property that does not decompose in this fashion, and this is why the randomized algorithm of~\cite{SS11} does not directly give sparsification in $\CL$. 

Stepping back, we now show how our pessimistic estimator described in \Cref{sec:trace-exp-description} allows us to implement the compress--or--random technique. Specifically, we show how our pessimistic estimator can be used to build a compression scheme.

\subsubsection{Building compression from a potential function}
Our first observation is that the pessimistic estimator allows us to \say{score} prefixes of a candidate random string. For now, suppose all the sampling probabilities $q_e$ are equal to $1/2$. For a string $\chi\in \zo^k$ that one can think of as specifying the first $k$ include/not include decisions of the sparsifier, define
\[
f(\chi) := 2^{-k}\cdot\frac{\Phi(\chi)}{\Phi(\emptyset)}.
\]
Using the pessimistic estimator condition (\Cref{itm:intro:super}), it is easy to show that this $f$ is a \textit{semimeasure} on strings: 
\begin{definition}\label{def:semimeasure}
    We say $f:\zo^*\ra[0,1]$ is a \emphdef{semimeasure} if \[
    f(\emptyset)=1,
    \qquad
    f(u||0)+f(u||1)\le f(u)
    \quad\text{for all }u\in\{0,1\}^*.
    \]
\end{definition}
Moreover, the semimeasure given by $f$ is computable in $\CL$ since the potential $\Phi$ is. By Shannon's coding theorem~\cite{DBLP:journals/bstj/Shannon48}, we immediately obtain that there is an injective encoding function $\Enc$ where $|\Enc(\chi)|\le -\log f(\chi)+O(1)$. If we take a $\tau$ that gives a bad sparsifier (and hence $\Phi(\tau)\gg \Phi(\emptyset)$ by~\Cref{itm:intro:good}), it must be the case that 
\[|\Enc(\tau)|\le -\log f(\tau)+O(1)\le |\tau|-\Omega(\log n).\]
That is, such a $\tau$ is compressible by the encoding map. However, Shannon's coding theorem is again an existential statement. Our final step is to show that for a $\CL$-computable semimeasure $f$, there exists an in-place compression algorithm for $f$.

\subsubsection{In-place compression from a measure}
Recall that we are given a semimeasure $f$, and our goal is to construct in-place compression and decompression algorithms. Formally, we prove the following:
\begin{restatable}{theorem}{semicomp}\label{thm:semicomp}
    For every $C\ge 3$ the following holds. Let $f$ be a semimeasure computable in $\CL$.\footnote{We assume that the bitlength of $f(x)$ is bounded by $|x|^c$ for a constant $c$.} Then there is an encoding function $\oE:\zo^t\ra \zo^{t-C\log t}$ and $\inplaceFCL$ algorithms $\Enc,\Dec$ that work as follows.\footnote{If the semimeasure $f$ is computable given additional read-only information $G$, we assume $\Enc,\Dec$ have access to $G$ on an auxiliary read-only tape.}
    For every $x$ where $-\log f(x)<|x|-2C\log t$, we have that $\Enc(x)$ halts with the read-write tape equal to $\oE(x)||0^{C\log t}$. Moreover, $\Dec(\oE(x)||0^{C\log t})$ halts with the read-write tape equal to $x$.
\end{restatable}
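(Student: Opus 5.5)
We implement arithmetic coding with respect to $f$, processed bit by bit from left to right so that it can run in place. To each prefix $u$ we assign a subinterval $I(u)\subseteq[0,1)$ of length $f(u)$. We set $I(\emptyset)=[0,1)$. Given $I(u)=[a_u,a_u+f(u))$, we let $I(u\|0)=[a_u,a_u+f(u\|0))$ and $I(u\|1)=[a_u+f(u\|0),a_u+f(u\|0)+f(u\|1))$. By the semimeasure property these are disjoint and nested, so $a_x=\sum_{i:x_i=1} f(x_{<i}\|0)$. An $x$ with $-\log f(x)<t-2C\log t$ has an interval of length more than $2^{-(t-2C\log t)}$. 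We encode $x$ as $\oE(x)$, the smallest multiple of $2^{-(t-C\log t)}$ lying in $I(x)$, written in binary with $t-C\log t$ bits; such a multiple exists because the interval is long. Decoding is then unambiguous: given a point $y$, we recover $x$ bit by bit by checking whether $y<a_u+f(u\|0)$. For other $x$ we only need $\oE$ to be defined arbitrarily.

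The in-place difficulty is that the encoder must overwrite $x$ while still needing $f(x_{<i}\|0)$ for all $i$, and $a_x$ has up to $\poly(t)$ bits of precision. My plan is to work with truncated values. Let $\tilde f(u)$ be $f(u)$ rounded down to $p=t+O(\log t)$ bits after the binary point, and replace $f$ by the renormalized version $g(u)=\tilde f(u)-(|u|)2^{-p}$. This is still a semimeasure (up to an additive slack of $t2^{-p}\ll 2^{-t}$), it is $\CL$-computable, and its values have $p$ bits. The loss in $-\log g(x)$ is at most $1$, and this is absorbed by the gap between $2C$ and $C$.

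The encoder then maintains a running sum $A_i=\sum_{j\le i,x_j=1} g(x_{<j}\|0)$. This sum has $t+O(\log t)$ bits, so it does not fit on the worktape, and we store it on the primary tape in the region already consumed. Concretely, the primary tape holds $(A_i \text{ in the first } i+O(\log t)\text{ cells?})$; this is the main obstacle. To handle it, I would exploit the fact that $A_i$ lies in the current interval $I(x_{\le i})$, whose length is $g(x_{\le i})$. I would therefore store $A_i$ in a \emph{normalized} form, namely the offset of the target point relative to the interval, with only $i-\lfloor -\log g(x_{\le i})\rfloor+O(\log t)$ significant leading bits "committed". The intended invariant is that after step $i$ the tape reads $c_i\|x_{>i}$, where $c_i$ has $i+O(\log t)$ bits and carries information $\le i$. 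This layout requires $C\log t$ of slack, which we borrow from the catalytic tape. At the end, the total committed length is at most $-\log g(x)+O(\log t)\le t-C\log t$, and the spare cells are zeroed. Each step is a $\CL$ computation of $g(x_{<i}\|0)$ that reads the still-unencoded suffix of $x$ and the prefix $x_{<i}$. The prefix, however, has been overwritten, so it must be recomputed by running the decoder on $c_i$. I would therefore make the decoder itself a left-to-right streaming procedure that recovers $x_{<i}$ on demand bit by bit, using recursion depth $O(1)$ in the worktape via the catalytic tape (transparent computation, as in standard $\CL$ composition). This recursion is where I expect the real work to lie.

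Decoding reverses the process step by step. Given $c_t$, we peel off bit $t$ by comparing against $g(x_{<t}\|0)$, but computing that quantity requires $x_{<t}$, which the decoder can read from $c$ itself by the same decoding comparisons. Both maps are then reversible sequences of $\CL$-computable local updates, and by the invertibility of each step, the catalytic tape is restored by the standard reversible-$\CL$ arguments.
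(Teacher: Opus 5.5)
There is a genuine gap, and it is exactly the step you defer as ``the real work.'' In your left-to-right encoder, once $x_{\le i}$ has been overwritten by $c_i$, every later step still needs $f(x_{<j}\|0)$, so the $\CL$ evaluator needs the whole overwritten prefix as input. Recovering $x_{<i}$ from the code is arithmetic decoding, in which $x_j$ can only be determined after $x_{<j}$ is known. That is a chain of $i$ mutually dependent $\CL$ computations, and there is nowhere to keep the partially decoded prefix. The worktape has only $O(\log t)$ bits, and the catalytic tape has arbitrary content that must be restored, so it cannot hold $x_{<j}$ in the clear. Closure of $\CL$ under composition covers a constant number of stages, not a chain of $\Theta(i)$ dependent calls, so ``recursion depth $O(1)$ via transparent computation'' is unsupported. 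Your decoder, which peels off $x_t$ using $x_{<t}$ read ``from $c$ itself,'' is circular in the same way.

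The paper names precisely this obstacle and sidesteps it by reversing direction. \Cref{alg:reverse-encode} runs $i=t,\dots,1$ and replaces the \emph{suffix} $x_{>i}$ by $\oA_{i+1,\ldots,t}\|\oA_f$, so the prefix $x_{\le i}$ is always still on the tape. The idea you are missing is that the code bits attributable to the prefix, $\oA_{\le i}=p_i\|\sigma\|\bar\sigma^{S(i)-1}$, are determined by $x_{\le i}$ plus a single resolving bit $\sigma$ kept on the worktape (\Cref{clm:extend}, \Cref{cor:R-prefix}). Hence the encoder never has to decode. The decoder then runs left to right: it rewrites $x_{<i}$ onto the tape and reconstructs the full code from $x_{<i}$, $\sigma$, and the not-yet-consumed code bits.

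Your space invariant also fails as stated. The code emitted for $x_{\le i}$ has length within $2$ of $\lambda(i)=-\log f(x_{\le i})$ (\Cref{lem:Elength}). The excess $\lambda(i)-i=-s(i)$ can be $\Omega(t)$ even when $x$ is very compressible overall, for example after an early branch of probability $2^{-t/2}$ followed by a deterministic continuation. So $C\log t$ slack cannot absorb it. Likewise, decoding a left-aligned point $y$ in place destroys about $s(i)$ code bits that $x_{\le i}$ does not determine.

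The paper resolves both problems with the final-compression-record condition. If $s(t)\ge s(i)$ for all $i$, then $|\oA_{i+1,\ldots,t}\|\oA_f|\le t-i+4$ (\Cref{prop:records}), so a right-aligned code with a $5$-bit buffer suffices in both directions. A general $x$ is handled by encoding only the least prefix $x_{\le t'}$ with $s(t')\ge (C+3)\log t$, which automatically has an FCR, and storing $\langle t'\rangle\|\langle \ell\rangle$ together with the raw suffix.

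Your truncation to $g$ is harmless but unnecessary. The paper never stores $a_x$; it recomputes interval endpoints in $\CL$ from the prefix that is still on the tape.
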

We believe~\Cref{thm:semicomp} will have further applications in building catalytic algorithms. 

Our strategy for building $\Enc,\Dec$ is through arithmetic coding~\cite{DBLP:journals/cacm/WittenNC87}. Arithmetic coding associates an interval $I_x$ for every $x\in \zo^*$, with $|I_x|=f(x)$ (and the disjoint intervals $I_{x||0}$ and $I_{x||1}$ are nested inside $I_x$). In arithmetic coding, $x$ is then encoded by giving a string $y\in \zo^*$ that specifies a dyadic interval $D_y := [0.y,0.y+2^{-|y|})$ where $D_y\subseteq I_x$. It is straightforward to show that there always exists such a $y$ with $|y|\le -\log f(x)+2$. 

We show that there is an \textit{in-place transformation} between $x$ and such a $y$. For now, fix $x\in \zo^t$.
The most natural idea (and the one adopted by encoders used in practice~\cite{DBLP:journals/ibmrd/RissanenL79, DBLP:journals/cacm/WittenNC87, DBLP:journals/tois/MoffatNW98}) is to proceed for $i=1,\ldots,t$ and gradually replace $x_{1..i}$ with $y_{1..i}$. This has the benefit that it is always the case that $|y_{1..i}|\le |x_{1..i}|+O(1)$ (so after adding a constant amount of slack to the tape, we will not run out of buffer). However, we do not see a way to materialize this approach in-place in $\CL$. This is because after transforming the tape into
\[
y_{1..i} || x_{i+1..t},
\]
to determine the next bit $y_{i+1}$ of the encoding we must know the entire string $x$ (in particular, the prefix $x_{1..i}$ that we have already encoded), so it seems that we must \textit{decode} the prefix to determine the next bit. This results in linear computational depth and hence it is not clear how to implement it in $\CL$. 

We observe that we can materialize this strategy if we instead encode the \textit{suffix} of $x$. For $i=t,\ldots,1$, we transform $x$ into
\[
x_{1..i}||y_{i+1..t}.
\]
The most important observation is that the bits of $y$ corresponding to encoding $x_{1..i}$ can be determined from $x_{1..i}$ and a single extra bit (that essentially captures whether the interval will eventually go to the left or right of the midpoint of $I_{x_{1..i}}$). Thus, we do not need to decompress the suffix in order to determine the next bit of the encoding to output. 

We finally solve one additional issue. For a generic $f$ and $x$ with $-\log f(x)\ll |x|$, the string $y_{i+1..t}$ encoding $x_{i+1..t}$ can be arbitrarily longer than $x_{i+1..t}$, so we cannot replace it in-place without a large amount of scratch space. However, we observe that we \textit{can} perform this replacement if $x$ has the following property:
\begin{definition}[Informal]
    We say that $x$ has a \emphdef{final compression record (FCR)} if, letting $s(i) = i + \log f(x_{1..i})$, we have that $s(t)\ge s(i)$ for every $i\in [t]$.
\end{definition}
Note that $s(i)$ is roughly the number of bits saved by running arithmetic encoding on $x_{1..i}$. Thus, the property says that $x$ is most compressible if we compress it in its entirety. We show that for these strings, we can perform this in-place replacement using constant slack:
\begin{proposition}[Informal]
    Suppose $x$ has a final compression record. Then the bits $y_{i..t}$ encoding $x_{i..t}$ satisfy $|y_{i..t}|\le |x_{i..t}|+O(1)$ for every $i$.
\end{proposition}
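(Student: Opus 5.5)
The plan is to make precise which bits of the suffix encoding are ``determined'' at step $i$, and then bound their number by a difference of code lengths. For a prefix $u$, arithmetic coding assigns an interval $I_u$ with $|I_u|=f(u)$, and the children $I_{u\|0},I_{u\|1}$ are nested inside $I_u$. Write $I_u=[a_u,a_u+f(u))$. After processing $x_{1..i}$, the portion of the final codeword $y$ that is already fixed is the longest common binary prefix of all points of $I_{x_{1..i}}$. Call this prefix $p_i$, possibly extended by the one extra ``side'' bit discussed in the overview. The bits $y_{i+1..t}$ written when encoding the suffix are then exactly $p_t$ with $p_i$ stripped off. So I will define $y_{i+1..t}$ as the bits of $p_t$ beyond $p_i$, with the convention that the final codeword is $p_t$ followed by $O(1)$ bits selecting a dyadic subinterval $D_y\subseteq I_x$.

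The key bound is $|p_i|\ge -\log f(x_{1..i})-O(1)$. This holds because an interval of length $\ell$ lies inside a dyadic interval of length about $2\ell$ once we allow the one extra bit for straddling a dyadic midpoint; this is the standard underflow/straddle issue. Conversely, $|p_i|\le -\log f(x_{1..i})+O(1)$, since $I_{x_{1..i}}$ is contained in $D_{p_i}$, which has length $2^{-|p_i|}$. Combining the two bounds,
\[
|y_{i+1..t}| = |p_t|-|p_i| \le -\log f(x)+\log f(x_{1..i}) + O(1).
\]

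Now I use the final compression record hypothesis. Since $s(t)\ge s(i)$, we have $t+\log f(x)\ge i+\log f(x_{1..i})$. Rearranging gives $-\log f(x)+\log f(x_{1..i})\le t-i=|x_{i+1..t}|$. Substituting this into the display yields $|y_{i+1..t}|\le |x_{i+1..t}|+O(1)$. The final $O(1)$ dyadic-selection bits add only a constant, which proves the proposition after reindexing from $i+1$ to $i$.

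The main obstacle is making ``determined bits'' robust. The naive common-prefix length can be much shorter than $-\log f$ when $I_{x_{1..i}}$ straddles a dyadic point such as $1/2$, giving an arbitrarily long run $0111\ldots$ versus $1000\ldots$. To handle this, I would first try the standard pending-bit trick, encoding the side bit plus a run-length counter of pending opposite bits. Since the counter fits in $O(\log t)$ bits it could live on the worktape, but that breaks the per-bit accounting. The cleaner route I would try instead is to define $p_i$ as the shortest string $p$ such that $I_{x_{1..i}}\subseteq D_p\cup D_{p'}$ for the dyadic neighbour $p'$ of $p$, together with one bit indicating which side. This guarantees $|p_i|\ge -\log f(x_{1..i})-2$. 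I then need to check that $p_i$ remains a prefix of $p_t$ up to this side bit, which follows from the nesting $I_{x_{1..t}}\subseteq I_{x_{1..i}}$. Verifying this monotonicity is where care is needed.
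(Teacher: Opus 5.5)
Your argument is essentially the paper's. The neighbour-pair prefix in your last paragraph, with ``longest'' in place of ``shortest'' (otherwise $p_i$ always has length $1$ and the bound $|p_i|\ge\lambda(i)-2$ fails), is the paper's committed output $\oU_{\le i}=p_i\|u\|\bar u^{S(i)-1}$ extended by one more pending bit, since the pair is $D_{p_i\|0\|1^{S(i)}}$ and $D_{p_i\|1\|0^{S(i)}}$ on either side of $\mu_{p_i}$; so its length is sandwiched around $\lambda(i)$ as in \Cref{lem:Elength}, your side bit is the paper's resolving bit, and your final FCR computation is exactly the proof of \Cref{prop:records}. Note also that the justification in your second paragraph is false for the plain common prefix (one side bit cannot absorb an unbounded straddle run) and holds only after this redefinition, and that the pending-bit trick does not break the per-bit accounting: the $S(i)$ pending positions are counted as emitted bits, all fixed by that single resolving bit.
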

\say{The bits encoding $x_{i..t}$} is nontrivial to formally define, but this captures the main idea. Finally, we still must compress $x$ that do not have an FCR. To do so, we take the first index $i$ where $x_{1..i}$ is compressible by at least $C\log t$ bits (and hence $x_{1..i}$ must have an FCR) and invoke our in-place encoder only on this prefix (and store the remainder of the string unmodified). This suffices to free up the claimed number of bits, so we are done. 

\subsubsection{Putting it all together}
Finally, we informally overview the ultimate construction, which follows the conventional recipe of a compress--or--random algorithm. We think of the catalytic tape $\tau$ as being divided into $\tau_1,\ldots,\tau_{m}$, where $\tau_i\in \zo^E$ specifies a candidate sparsifier. If for some $i$, $\Phi(\tau_i)/\Phi(\emptyset)$ is small, we output the corresponding sparsifier without modifying the catalytic tape. Otherwise, we use~\Cref{thm:semicomp} with the measure $f$ defined in terms of $\Phi$ to compress $\tau_i$ into $\oE(\tau_i)||0^{C\log n}$ for every $i$. After doing so and rearranging the tape, we obtain $\Omega(n\log n)$ free bits on the catalytic tape. Using this large amount of free space, we find a sparsifier by brute force, output it, then finally decode $\oE(\tau_i)$ to $\tau_i$ for every $i$, restoring the tape.

\subsection{Open questions}
It is a well-known open problem to determine the relative power of catalytic space and parallel algorithms (i.e. $\NC$ versus $\CL$)~\cite{DBLP:journals/eatcs/Mertz23}. In the specific case of compress--or--random, techniques applicable to one often imply or draw on results for the other~\cite{DBLP:journals/cc/Pyne25,DBLP:conf/focs/AgarwalaM25}. Thus, in our view, an especially interesting open question is to understand the randomness complexity of \textit{parallel} algorithms for sparsification. Recall that~\cite{SS11} gives a randomized parallel algorithm for this problem. Every prior compress--or--random result places the corresponding search problem in the class $\LOSSY[\NC]\subseteq \ZPNC$~\cite{DBLP:conf/focs/Korten21}, because the compression and decompression functions $\Enc_A,\Dec_A$ are computable in $\NC$. Interestingly, this does \textit{not} appear to be true here: we do not see a way to implement the decoder in $\NC$. This is because the measure is essentially given as a conditional oracle (where given $x_{1..i}$ we can determine the size of intervals $I_{x_{1..i||0}}$ and $I_{x_{1..i||1}}$), and hence we must decode the first $i$ symbols to learn how to decode the $i+1$st.

\subsection{Roadmap}
In~\Cref{sec:prelims} we define relevant complexity classes and give useful facts about matrices and graphs. In~\Cref{sec:leverage} we build the potential function $\Phi$. In~\Cref{sec:measure} we show how to go from $\Phi$ to a measure. In~\Cref{sec:semicomp} we show how to compress in-place using a $\CL$-computable measure. In~\Cref{sec:final} we combine these ingredients and prove~\Cref{thm:main}.

\section{Preliminaries}\label{sec:prelims}
For binary strings $x_{<},x\in \zo^*$ where $x_<$ is a prefix of $x$, we let $x-x_{<}$ denote the (possibly empty) suffix of $x$ not contained in $x_{<}$.

For a symmetric matrix $M$ we
write $\|M\| = \sup_{\|x\|_2 = 1} |x^{\intercal} M x|$ for the $\ell_2$-operator norm, and $\preccurlyeq$ for the Loewner
order.

\subsection{Catalytic Logspace}
We define the notion of functions computable in small space.
\begin{definition}[Functions in catalytic logspace]
    For a function $f:\zo^*\ra\zo^*$, we say that $f$ is computable in catalytic logspace if the map $(x,i)\ra f(x)_i$ is computable in $\CL$. 
\end{definition}

\subsection{Graphs and Laplacians}

For a (weighted or unweighted) graph
$G$ on vertex set $[n]$ let
\[L_G := D_G - A_G\] 
denote its Laplacian. 
Let $L_G^+$ denote the Moore--Penrose pseudoinverse of $L_G$, and let $L_G^{+/2}$ be the square root of the pseudoinverse. Note that for graphs $G,H$, we have that $H$ is an $\ep$-spectral approximation of $G$ if $\ker L_G\subseteq \ker L_H$ and
\begin{equation}\label{eq:approxcond}
\|L_G^{+/2}(L_H-L_G)L_G^{+/2}\|\leq \ep.
\end{equation}
Let
$\Pi_G := L_G^{+/2} L_G L_G^{+/2}$ be the orthogonal projection onto
$\mathrm{im}(L_G)$; when $G$ is clear from context, we refer to this as $\Pi$. Note that if the graph is connected, $\Pi = I - \tfrac1n \one\one^{\intercal}$.

\paragraph{Rank-$1$ Graph Decomposition}
For a pair
$e = \{x, y\} \in \binom{[n]}{2}$, let $b_e := e_x - e_y \in \R^n$, so that for a graph $G$ we have
$L_G = \sum_{e \in E(G)} b_e b_e^{\intercal}$.

\paragraph{Exponential inequality}
We state a useful inequality for the exponential.
\begin{fact}[Polynomial-exponential comparison]\label{fact:poly}
For every $u \in \R$, every integer $t \geq 1$, and every $c > 0$,
$$u^{2t} \leq \frac{(2t)!}{2 c^{2t}} \left( e^{c u} + e^{-c u} \right).$$
\end{fact}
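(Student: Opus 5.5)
The statement to prove is \Cref{fact:poly}: for all real $u$, integers $t\ge 1$ and $c>0$, $u^{2t} \le \frac{(2t)!}{2c^{2t}}(e^{cu}+e^{-cu})$. The plan is to compare a single term of a Taylor series against the whole series. The right-hand side equals $\frac{(2t)!}{c^{2t}}\cosh(cu)$. The power series of $\cosh$ has only nonnegative terms, so $\cosh$ dominates each of its terms.

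First I would write
\[
\frac{e^{cu}+e^{-cu}}{2}=\cosh(cu)=\sum_{k\ge 0}\frac{(cu)^{2k}}{(2k)!}.
\]
This series converges absolutely for every real $u$. Every term is nonnegative, because only even powers of $cu$ appear, so discarding all terms except $k=t$ can only decrease the sum. This gives
\[
\cosh(cu)\ge \frac{(cu)^{2t}}{(2t)!}=\frac{c^{2t}u^{2t}}{(2t)!}.
\]

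Next I would multiply both sides by $(2t)!/c^{2t}$, which is positive since $c>0$. This yields $u^{2t}\le \frac{(2t)!}{c^{2t}}\cosh(cu)=\frac{(2t)!}{2c^{2t}}(e^{cu}+e^{-cu})$, which is the claim.

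I do not expect a real obstacle here. The only point needing care is the nonnegativity of the discarded terms, and this holds for every real $u$ (negative values included) because each term involves an even power. The inequality is also valid for $t=0$, although the statement only requires $t\ge1$.
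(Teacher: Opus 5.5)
Your proof is correct and is essentially the paper's argument. The paper writes $e^{cu}+e^{-cu}=2\sum_{k\text{ even}}(cu)^k/k!$ and keeps only the $k=2t$ term, using that every term is nonnegative, which is exactly your truncation of the $\cosh$ series.
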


\begin{proof}
$e^{cu} + e^{-cu} = 2 \sum_{k \text{ even}} \frac{(cu)^k}{k!}
\geq 2 \frac{(cu)^{2t}}{(2t)!}$, since all terms are nonnegative.
\end{proof}

\subsection{Matrix Preliminaries}

We say a matrix $M\in \F^{n\times n}$ is positive semidefinite (PSD) if $x^\top Mx\ge 0$ for every $x\in \F^n$. We use $A \preccurlyeq B$ to denote that $B-A$ is PSD. 

\begin{definition}[Matrix exponential]
For a matrix $M\in \F^{n\times n}$, we define the matrix exponential $e^M\in \F^{n\times n}$ as
$$e^M := \sum_{i\ge 0}\frac{M^i}{i!}.$$
\end{definition}

\begin{definition}[Matrix logarithm]
    If $M$ is real symmetric and positive definite, let 
$$M = U \mathrm{diag}(\lambda_1, \lambda_2, \dots, \lambda_n) U^{\top}, ~~ \lambda_i > 0,$$
be its spectral decomposition. Define its principal matrix logarithm by:
$$\log M := U \mathrm{diag}(\ln \lambda_1, \ln \lambda_2, \dots, \ln \lambda_n) U^{\top}.$$
\end{definition}

Throughout, scalar $\log$ denotes $\log_2$ while $\ln$ denotes natural logarithm. The matrix logarithm uses natural logarithms of the eigenvalues.

We now list some useful facts.

\begin{fact}\label{fct:normpower}
    For a symmetric matrix $M$, $\|M\|^{2t} =\max_i\{\lambda_i(M)^{2t}\}$.
\end{fact}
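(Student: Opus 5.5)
The plan is to diagonalize $M$ and read off both sides in the eigenbasis. Since $M$ is real symmetric, the spectral theorem gives $M = U\Lambda U^{\top}$ with $U$ orthogonal and $\Lambda=\mathrm{diag}(\lambda_1,\ldots,\lambda_n)$ real. The paper has already recorded that for symmetric $M$ one has $\|M\| = \max_i |\lambda_i(M)|$ (equivalently, by the convention $\|M\|=\sup_{\|x\|_2=1}|x^\top Mx|$ from the preliminaries). I will make this explicit rather than take it on faith. Substitute $x = Uy$, noting that $\|y\|_2=\|x\|_2$ because $U$ is orthogonal. Then
\[
x^\top M x=\sum_i \lambda_i y_i^2 ,
\]
whose absolute value is at most $\max_i|\lambda_i|\sum_i y_i^2=\max_i|\lambda_i|$. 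Equality is attained by taking $y$ to be the standard basis vector at the index maximizing $|\lambda_i|$.

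Next I raise both sides to the power $2t$. The map $u\mapsto u^{2t}$ is monotone nondecreasing on $[0,\infty)$, so applying it to nonnegative reals commutes with taking a maximum:
\[
\|M\|^{2t}=\Bigl(\max_i|\lambda_i|\Bigr)^{2t}=\max_i |\lambda_i|^{2t}.
\]
Finally, $|\lambda_i|^{2t}=\lambda_i^{2t}$ because the exponent $2t$ is even. This gives $\|M\|^{2t}=\max_i \lambda_i(M)^{2t}$, as claimed.

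There is no real obstacle here. The only point requiring care is the evenness of the exponent, which is what allows the absolute values to be dropped; for an odd power the identity would fail when the dominant eigenvalue is negative. It is also worth noting that this same observation is what later makes $\mathrm{tr}(M^{2t})\ge\|M\|^{2t}$ usable: every eigenvalue contributes a nonnegative term $\lambda_i^{2t}$ to the trace, so the trace is at least the largest such term.
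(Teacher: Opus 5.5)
Your argument is correct and matches the paper's implicit reasoning. The paper states this fact without proof and relies on $\|M\|=\max_i|\lambda_i(M)|$ for symmetric $M$, which it notes in a footnote to the introduction; your spectral-theorem verification of that identity, together with monotonicity of $u\mapsto u^{2t}$ on $[0,\infty)$ and the evenness of $2t$, is exactly what is needed, with the only implicit hypothesis being that $t$ is a positive integer (guaranteed since the paper takes $t=\log n$ with $\log n$ assumed integral).
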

\begin{fact}[Convexity of the matrix trace exponential (from Klein's lemma)]\label{fct:TrExpconvex}
    The function $M\ra \Tr(\exp(M))$ is convex for symmetric matrices $M$.
\end{fact}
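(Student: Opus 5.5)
The target is \Cref{fct:TrExpconvex}: $M\mapsto\Tr(\exp(M))$ is convex on real symmetric matrices. The plan is to reduce it to Klein's inequality, i.e.\ a first-order (supporting hyperplane) characterization of convexity. Fix symmetric $A,B$ and write $g(t)=\Tr(\exp(A+t(B-A)))$ for $t\in[0,1]$. It suffices to show the tangent-line inequality
\[
\Tr(e^{B})\ \ge\ \Tr(e^{A})+\Tr\big(e^{A}(B-A)\big)\qquad\text{for all symmetric }A,B,
\]
since a differentiable function lying above all its tangent hyperplanes is convex. The gradient of $\Tr e^{M}$ at $A$ is $e^{A}$. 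This follows from Duhamel's formula $\frac{d}{dt}e^{A+tH}=\int_0^1 e^{sA}He^{(1-s)A}\,ds$ at $t=0$ together with cyclicity of the trace, which gives $\frac{d}{dt}\Tr e^{A+tH}\big|_{t=0}=\Tr(e^{A}H)$.

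To prove the tangent inequality, diagonalize $A=\sum_i a_i u_iu_i^\top$ and $B=\sum_j b_j v_jv_j^\top$ in orthonormal eigenbases, and set $w_{ij}=(u_i^\top v_j)^2\ge0$. The matrix $(w_{ij})$ is doubly stochastic: each row and each column sums to $1$. Then:
\begin{itemize}
\item $\Tr(e^B)=\sum_{i,j}w_{ij}e^{b_j}$ and $\Tr(e^A)=\sum_{i,j}w_{ij}e^{a_i}$;
\item $\Tr(e^AB)=\sum_{i,j}w_{ij}e^{a_i}b_j$ and $\Tr(e^AA)=\sum_{i,j}w_{ij}e^{a_i}a_i$.
\end{itemize}
The desired inequality therefore becomes
\[
\sum_{i,j}w_{ij}\big(e^{b_j}-e^{a_i}-e^{a_i}(b_j-a_i)\big)\ge 0 .
\]
This holds termwise, because scalar $e^x$ is convex, so $e^{b}\ge e^{a}+e^{a}(b-a)$ for all reals $a,b$, and every $w_{ij}\ge0$.

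To finish, I apply the tangent inequality at the point $C_\lambda=\lambda A+(1-\lambda)B$ with targets $A$ and $B$ respectively. Taking the convex combination with weights $\lambda$ and $1-\lambda$, the linear terms cancel, leaving
\[
\lambda\Tr e^{A}+(1-\lambda)\Tr e^{B}\ \ge\ \Tr e^{C_\lambda},
\]
which is exactly convexity.

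The main obstacle is the noncommutativity of $A$ and $B$. The doubly stochastic overlap matrix $w_{ij}$ is what handles it: it lets the scalar convexity be applied pairwise over eigenvalues without ever needing $A$ and $B$ to be simultaneously diagonalizable. The only other delicate point is checking that the trace identities hold with the same weights $w_{ij}$. For $\Tr(e^AB)=\sum_{i,j}e^{a_i}b_j(u_i^\top v_j)^2$ this is immediate by expanding both spectral decompositions.
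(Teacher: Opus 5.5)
Your proof is correct and follows the route the paper intends: the paper gives no argument beyond citing Klein's lemma, and your doubly stochastic overlap computation is the standard proof of Klein's inequality $\Tr(e^{B}) \ge \Tr(e^{A}) + \Tr(e^{A}(B-A))$, from which your supporting-hyperplane step at $C_\lambda$ yields convexity. The Duhamel/gradient remark is not needed, since that final step uses only the tangent inequality with the linear functional $H\mapsto \Tr(e^{A}H)$ and never uses that this functional is the gradient.
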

\begin{fact}[Perturbation of trace powers]\label{fct:trace-perturb}
    Let $A,B$ be $n\times n$ real matrices with $\|A\|,\|B\|\le T$ for some
    $T\ge 1$. Then for every $k\in\N$,
    \[
    \left|\Tr(A^k)-\Tr(B^k)\right| \le k n T^{k-1}\cdot \|A-B\|.
    \]
\end{fact}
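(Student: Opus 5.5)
We prove the perturbation bound by a telescoping identity for the difference of $k$-th powers, and then bound each term with the operator norm. Since $\Tr(A^k)$ and $\Tr(B^k)$ only enter through their difference, we write
\[
A^k - B^k = \sum_{j=0}^{k-1} A^{j}(A-B)B^{k-1-j}.
\]
This is checked by telescoping: the $j$-th summand equals $A^{j+1}B^{k-1-j} - A^{j}B^{k-j}$, so consecutive terms cancel. The sum collapses to $A^kB^0 - A^0B^k = A^k - B^k$. For $k=0$ both traces equal $n$, so the claim is trivial and we assume $k\ge 1$.

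Taking traces and using linearity, it suffices to bound each $|\Tr(A^{j}(A-B)B^{k-1-j})|$ by $nT^{k-1}\|A-B\|$. For any $n\times n$ matrix $M$ we have
\[
|\Tr(M)| = \Bigl|\sum_{i=1}^n e_i^\top M e_i\Bigr| \le n\|M\|,
\]
where $\|M\|$ is the $\ell_2$ operator norm $\sup_{x\ne0}\|Mx\|_2/\|x\|_2$ (valid for non-symmetric matrices, as in the footnote definition). Submultiplicativity of the operator norm gives
\[
\|A^{j}(A-B)B^{k-1-j}\| \le \|A\|^{j}\,\|A-B\|\,\|B\|^{k-1-j} \le T^{k-1}\|A-B\|.
\]
Summing the $k$ terms yields $|\Tr(A^k)-\Tr(B^k)|\le knT^{k-1}\|A-B\|$.

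The only point requiring care is the norm convention. The preliminaries define $\|M\|$ via $\sup|x^\top Mx|$ for symmetric $M$, but the products $A^j(A-B)B^{k-1-j}$ need not be symmetric. We therefore use the general operator norm throughout. This is harmless because the two definitions agree on symmetric matrices. The hypothesis $T\ge1$ is not needed for this argument; it only makes the statement convenient downstream.
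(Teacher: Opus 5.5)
Your proof is correct and follows the paper's argument exactly: the telescoping identity $A^k-B^k=\sum_{j=0}^{k-1}A^{j}(A-B)B^{k-1-j}$, submultiplicativity to bound each summand by $T^{k-1}\|A-B\|$, and $|\Tr(C)|\le n\|C\|$. Your remark about the norm convention is a worthwhile clarification the paper leaves implicit. The products $A^{j}(A-B)B^{k-1-j}$ are not symmetric, so the general $\ell_2$ operator norm, not the quadratic-form definition from the preliminaries, is the one for which submultiplicativity holds.
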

\begin{proof}
    Telescoping, $A^k - B^k = \sum_{j=0}^{k-1} A^{j}(A-B)B^{k-1-j}$, and each
    summand has norm at most $T^{k-1}\|A-B\|$ by submultiplicativity, and then since $|\Tr(C)| \le n\|C\|$ the claim follows.
\end{proof}

We note the following fact regarding trace exponentials. 
\begin{fact}[Monotonicity of the trace exponential]\label{fact:mono}
If $A \preccurlyeq B$ then $\Tr \exp(A) \leq \Tr \exp(B)$.
\end{fact}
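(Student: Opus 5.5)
This is immediate from the spectral theorem. Write $A=\sum_i \lambda_i u_iu_i^\top$ and $B=\sum_j \mu_j v_jv_j^\top$, with eigenvalues in decreasing order, $\lambda_1\ge\cdots\ge\lambda_n$ and $\mu_1\ge\cdots\ge\mu_n$. Then $\exp(A)$ has eigenvalues $e^{\lambda_i}$, so
\[
\Tr\exp(A)=\sum_i e^{\lambda_i(A)},\qquad \Tr\exp(B)=\sum_i e^{\lambda_i(B)} .
\]
It therefore suffices to show $\lambda_i(A)\le\lambda_i(B)$ for every $i$, because $x\mapsto e^x$ is increasing and we can then sum termwise.

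The main step is this eigenvalue comparison, which follows from the Courant--Fischer min--max characterization:
\[
\lambda_i(M)=\max_{\dim V=i}\ \min_{x\in V,\ \|x\|_2=1} x^\top M x .
\]
Since $B-A$ is PSD, we have $x^\top A x\le x^\top B x$ for every $x$. Hence for each fixed subspace $V$ the inner minimum for $A$ is at most the inner minimum for $B$. Taking the maximum over all $V$ of dimension $i$ preserves this inequality, so $\lambda_i(A)\le\lambda_i(B)$. This is Weyl's monotonicity theorem.

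There is no real obstacle beyond invoking min--max. An alternative route avoids it entirely: use the derivative identity $\frac{d}{ds}\Tr\exp(A+s(B-A))=\Tr\bigl[\exp(A+s(B-A))(B-A)\bigr]$. The right-hand side is nonnegative, since it is the trace of a product of a positive definite matrix and a PSD matrix. Integrating over $s\in[0,1]$ then gives the claim. I would present the Courant--Fischer argument since it is shorter.
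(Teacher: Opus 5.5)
Your proof is correct and follows the paper's argument exactly: Courant--Fischer gives $\lambda_k(A)\le\lambda_k(B)$ for every $k$. Since $u\mapsto e^u$ is increasing, summing over $\Tr\exp(\cdot)=\sum_k e^{\lambda_k(\cdot)}$ finishes it, and the derivative-based alternative you mention is also valid but unnecessary.
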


\begin{proof}
By the Courant--Fischer theorem, $A \preccurlyeq B$ implies
$\lambda_k(A) \leq \lambda_k(B)$ for every $k$, and
$\Tr \exp(\cdot) = \sum_k e^{\lambda_k(\cdot)}$ with $e^u$ increasing.
\end{proof}

We now give Lieb's inequality, which is a key technical tool we rely on.
\newcommand{\bH}{\mathbf{H}}
\newcommand{\bY}{\mathbf{Y}}

\begin{proposition}[Corollary of Theorem 6 of \cite{lieb1973convex}, as in \cite{Tropp12}]\label{prop:lieb}
Let $H$ be a fixed symmetric matrix and let $X$ be
a random symmetric matrix. Suppose $Y$ is a fixed
symmetric matrix satisfying $\ln \E e^{X} \preccurlyeq Y$. Then
$$\E \left[ \Tr \exp\left( H + X \right) \right]
  \leq \Tr \exp\left( H + Y \right).$$
\end{proposition}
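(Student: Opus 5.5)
The proof follows Tropp's derivation: Lieb's concavity theorem, then Jensen's inequality, then the monotonicity of the trace exponential (\Cref{fact:mono}).

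\textbf{Step 1: the concavity statement.} The key ingredient is Theorem 6 of \cite{lieb1973convex}: for a fixed symmetric matrix $H$, the map
\[
A \mapsto \Tr\exp\left(H + \log A\right)
\]
is concave on the cone of positive definite matrices. I take this as a black box, since it is the cited result.

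\textbf{Step 2: rewrite the expectation.} Since $X$ is symmetric, $e^{X}$ is positive definite and $\log(e^{X}) = X$. This holds because $e^X$ has the same eigenvectors as $X$, with eigenvalues $e^{\lambda_i(X)}>0$. Hence
\[
\E\left[\Tr\exp(H+X)\right] = \E\left[\Tr\exp\left(H + \log e^{X}\right)\right].
\]

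\textbf{Step 3: apply Jensen.} Apply Jensen's inequality to the concave function from Step 1 to get
\[
\E\left[\Tr\exp\left(H+\log e^X\right)\right] \le \Tr\exp\left(H + \log \E e^{X}\right).
\]
This needs $\E e^X$ to lie in the domain. It does, because an average of positive definite matrices is positive definite, so $\log \E e^X$ is well defined.

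In our applications $X$ takes finitely many values, since it arises from a $\Bern(q_e)$ choice. In that case Jensen is exactly the definition of concavity applied to a finite convex combination, and there are no integrability issues. In general one needs $\E e^X$ to be finite, which is implicit in the hypothesis.

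\textbf{Step 4: monotonicity.} The hypothesis $\log\E e^X \preccurlyeq Y$ gives $H + \log \E e^X \preccurlyeq H + Y$. By \Cref{fact:mono},
\[
\Tr\exp\left(H+\log\E e^X\right) \le \Tr\exp(H+Y).
\]
Chaining Steps 2 through 4 proves the claim.

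\textbf{Main obstacle.} The only nontrivial step is Lieb's concavity theorem itself, which is deep. I would cite it rather than reprove it, as the paper does. Everything else is routine.
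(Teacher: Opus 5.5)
Your proof is correct and is exactly the derivation the paper relies on. The paper gives no proof of its own and only cites Lieb and Tropp, where Lieb's concavity of $A\mapsto\Tr\exp(H+\log A)$ plus Jensen gives $\E\Tr\exp(H+X)\le\Tr\exp(H+\log\E e^{X})$, and monotonicity of the trace exponential then replaces $\log\E e^{X}$ by $Y$.
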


To simplify the use of Lieb's inequality, which involves any symmetric matrix $Y$ satisfying $\ln \E e^{X} \preccurlyeq Y$, we will rely on the following bounded Bernstein mgf inequality.

\begin{proposition}[Bounded Bernstein mgf, as in \cite{Tropp12}, after scaling]
Suppose that $X$ is a random symmetric matrix with $\E[X] = 0$ and
$\lambda_{\max}(X) \leq a$ almost surely. Then, for every $\theta >0$,
$$\ln \E\left[ e^{\theta X} \right]
  \preccurlyeq \frac{h(a \theta)}{a^2} \cdot \E\big[X^2\big],
  \qquad \text{where } h(u) := e^u - u - 1.$$
\end{proposition}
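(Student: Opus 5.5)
The plan is the standard scalar-to-matrix transfer argument. First prove a one-dimensional polynomial upper bound on $u\mapsto e^{\theta u}$ that holds for all $u\le a$. Then lift it to $X$ through its eigenvalues. Finally take expectations and use operator monotonicity of the logarithm.

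First I would handle the degenerate case. If $a\le 0$, then $X\preccurlyeq 0$ and $\E X=0$ together force $X=0$ almost surely, so the claim is trivial. Hence assume $a>0$.

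\textbf{Scalar step.} Define $g(v):=(e^v-v-1)/v^2$ for $v\neq 0$, with $g(0)=1/2$. Taylor's theorem with integral remainder gives
\[
g(v)=\int_0^1 (1-s)e^{sv}\,ds .
\]
Since the integrand is increasing in $v$ for every $s\in[0,1]$, the function $g$ is nondecreasing on all of $\R$. Now for $u\le a$ we have $\theta u\le \theta a$, because $\theta>0$. Therefore
\[
e^{\theta u}=1+\theta u+\theta^2 u^2 g(\theta u)\le 1+\theta u+\theta^2u^2 g(\theta a)=1+\theta u+\frac{h(a\theta)}{a^2}\,u^2 .
\]
This uses $u^2\ge 0$, and the last equality is just $\theta^2 g(\theta a)=h(a\theta)/a^2$.

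\textbf{Transfer to matrices.} Write $X=\sum_i\lambda_i v_iv_i^\top$. Every eigenvalue satisfies $\lambda_i\le a$. Applying the scalar inequality to each eigenvalue, which is the standard transfer rule for functions of a symmetric matrix, gives, almost surely,
\[
e^{\theta X}\preccurlyeq I+\theta X+\frac{h(a\theta)}{a^2}X^2 .
\]
Expectation preserves the Loewner order, and $\E X=0$, so
\[
\E e^{\theta X}\preccurlyeq I+\frac{h(a\theta)}{a^2}\E[X^2].
\]

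\textbf{Logarithm step.} The matrix logarithm is operator monotone on positive definite matrices, and both sides above are positive definite. Hence
\[
\ln\E e^{\theta X}\preccurlyeq \ln\!\Big(I+\tfrac{h(a\theta)}{a^2}\E X^2\Big).
\]
It remains to show $\ln(I+A)\preccurlyeq A$ for $A:=\tfrac{h(a\theta)}{a^2}\E X^2$, which is PSD since $h\ge 0$ and $\E X^2\succcurlyeq 0$. Because $\ln(I+A)$ and $A$ commute, this follows eigenvalue-wise from the scalar inequality $\ln(1+x)\le x$ for $x\ge 0$.

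The only step I regard as needing care is the logarithm step. The inequality $\ln A\preccurlyeq\ln B$ for $A\preccurlyeq B$ is not a pointwise eigenvalue statement; it relies on operator monotonicity of $\ln$, a Löwner-type fact that I would cite, or derive from $\ln x=\int_0^\infty\big(\tfrac{1}{1+s}-\tfrac{1}{x+s}\big)ds$ together with the order-reversing property of matrix inversion. The scalar monotonicity of $g$ is the other nontrivial ingredient, and the integral representation above settles it cleanly.
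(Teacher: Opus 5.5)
Your proof is correct and is essentially the standard argument. The paper gives no proof of its own but imports this from Tropp, whose bounded Bernstein mgf lemma (stated for $a=1$, then rescaled via $X\mapsto X/a$, $\theta\mapsto a\theta$) is proved exactly as you do: monotonicity of $(e^v-v-1)/v^2$, the transfer rule, taking expectations, and operator monotonicity of $\ln$. The one cosmetic difference is that Tropp bounds $I+A\preccurlyeq e^{A}$ before taking logarithms, whereas you take logarithms first and then use $\ln(I+A)\preccurlyeq A$; the two are equivalent.
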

Using the fact that $h(u)\le u^2$ on $[0,1]$ we obtain a user-friendly corollary.
\begin{corollary}[Simplified Bernstein bound]\label{cor:bernstein}
Suppose that $X$ is a random symmetric matrix with $\E[X] = 0$ and
$\lambda_{\max}(X) \leq a$ almost surely. Then, for every $\theta \in (0,1/a)$,
$$\ln \E\left[ e^{\theta X} \right]
  \preccurlyeq \theta^2\cdot \E\big[X^2\big].$$
\end{corollary}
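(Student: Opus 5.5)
The statement is the final item, \Cref{cor:bernstein}. I would derive it directly from the preceding bounded Bernstein mgf proposition by bounding the scalar factor $h(a\theta)/a^2$ by $\theta^2$.

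\textbf{Step 1.} Apply the bounded Bernstein mgf proposition to $X$ with the given $a$ and any $\theta\in(0,1/a)$. This yields
\[
\ln \E\left[e^{\theta X}\right]\preccurlyeq \frac{h(a\theta)}{a^2}\,\E[X^2].
\]
Here $a>0$ is implicit, since the interval $(0,1/a)$ must be nonempty.

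\textbf{Step 2.} Show the scalar inequality $h(u)\le u^2$ for $u\in[0,1]$. The Taylor series gives
\[
h(u)=\sum_{k\ge 2}\frac{u^k}{k!}\le u^2\sum_{k\ge 2}\frac{1}{k!}=(e-2)u^2\le u^2,
\]
using $u^k\le u^2$ for $0\le u\le 1$ and $k\ge 2$. Setting $u=a\theta\in(0,1)$ gives $h(a\theta)/a^2\le \theta^2$.

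\textbf{Step 3.} Transfer the scalar bound to the matrix bound. Since $\E[X^2]$ is an expectation of squares of symmetric matrices, it is PSD. Scaling a PSD matrix by a larger nonnegative scalar increases it in Loewner order, so
\[
\frac{h(a\theta)}{a^2}\E[X^2]\preccurlyeq \theta^2\E[X^2].
\]
Transitivity of $\preccurlyeq$ then gives the claim.

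No step is a real obstacle. The only point to check is that $\E[X^2]$ is PSD, which is what lets the scalar comparison pass through the Loewner order.
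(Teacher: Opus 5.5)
Your proposal is correct and matches the paper's approach, which applies the bounded Bernstein mgf bound and then uses $h(u)\le u^2$ on $[0,1]$ with $u=a\theta$. Your Taylor-series check that $h(u)\le (e-2)u^2$ and your observation that $\E[X^2]\succcurlyeq 0$ (needed to pass the scalar inequality through the Loewner order) spell out details the paper leaves implicit.
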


\section{Derandomized leverage-score sparsification}\label{sec:leverage}

For this section, we always denote the graph to be sparsified by $G$. We assume without essential loss of generality that  $G$ is connected, as otherwise we can apply the algorithm independently on each component. A \emph{state} is a pair $\sigma = (S, \chi)$ where $S$ is a set of decided
edges and $\chi \in \{0,1\}^{S}$, where $\chi_e=1$ if we include the edge $e$ in the sparsifier $H$. We say a state is complete if $S=E$.

The standard randomized leverage score sampling keeps edge $e$ with probability $p_e$ and,
when kept, gives it weight $1/p_e$, for probabilities $p_e$ that depend on each edge.
We show a potential function that enables us to measure \say{how well leverage score sampling is going} as we progress over the list of edges.

To do this, we define 
\[
\Lambda_e := L_G^{+/2} b_e b_e^{\intercal} L_G^{+/2},\qquad r_e := \Tr(\Lambda_e)=R_{\mathrm{eff}}(e)
\]
where the second quantity is, by definition, the \emphdef{effective resistance} of $e$. Note that $\Lambda_e$ is always a PSD rank-one matrix, and we have the following two useful facts
\begin{equation}\label{eq:ell-facts}
  \Lambda_e^2 = r_e \Lambda_e
  \qquad \text{and} \qquad
  \sum_{e \in E} \Lambda_e = L_G^{+/2}\Big( \sum_e b_e b_e^{\intercal} \Big) L_G^{+/2}
  = L_G^{+/2} L_G L_G^{+/2} = \Pi 
\end{equation}
and hence
\begin{equation}\label{eq:foster}
  \sum_{e \in E} r_e = n-1.
\end{equation}
Next, fix a parameter $s$ (to be decided later) and set 
\begin{equation}\label{eq:pe}
  p_e := \min\big(1, s r_e\big).
\end{equation}
In fact, our algorithm works with small multiplicative over-estimates of the probabilities $p_e$, which enables working with approximate leverage scores without substantially affecting the number of edges or approximation quality. 
\begin{definition}
    We say a set of sampling probabilities $\{q_e\}_{e\in E}$ are \emphdef{good} if $q_e\in [p_e,3p_e]$, $q_e\le 1$, and $q_e = 2^{-j_e}$ for some $j_e\in \N$ for every $e$.
\end{definition}
Note that for any set of true probabilities $\{p_e\}_{e\in E}$, a corresponding set of good probabilities exists since we can round up $p_e$ to the next power of two. We require good probabilities to be powers of two purely to aid the compression argument, and the potential function is well defined without this condition.
We say $e$ is a \textit{randomized} edge if $q_e<1$. 

Every non-randomized edge will always be included with weight $1$.
We thus define the (weighted) sparsifier as follows, for a complete state $\sigma$
\begin{equation}\label{eq:Hdef}
    L_{H(\sigma)} := \sum_{e \in E} \frac{\chi_e}{q_e} b_e b_e^{\intercal}.
\end{equation}

\begin{theorem}[Pessimistic estimator for graph approximation]\label{thm:pot}
    Given a graph $G=(V,E)$ on $n$ vertices and a set of good sampling probabilities $\{q_e\}_{e\in E}$ and $\ep\in (0,1/2)$, there is a potential function $\Phi$ defined over states with the following properties.
    \begin{enumerate}
        \item\label{itm:comp} $\Phi(\sigma)$ is computable in $\CL$ for any (partial or complete) state $\sigma$.
        \item\label{itm:impliesgood} For every integer $K \geq 1$ where $2^K\cdot \ep< 1/2$ and complete state $\sigma$ where $\Phi(\sigma) \le n^{K}\Phi(\emptyset)$, we have that the graph $H(\sigma)$ defined by $\sigma$ is an $\ep\cdot 2^K$-spectral sparsifier of $G$, and contains $O(n\ep^{-2}\log n)$ edges. 
        \item\label{itm:super} For every $e\in E$ and state $\sigma=(S,\chi)$ where $e\notin S$, let
        $\sigma \cup (e \mapsto b)$ be the extended state where $\chi_e=b$. Then
        \[\E_{b \sim \Bern(q_e)} \big[ \Phi(\sigma \cup (e \mapsto b)) \big]
        \leq \Phi(\sigma).\]
    \end{enumerate}
\end{theorem}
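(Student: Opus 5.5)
We will use the symmetrized trace-exponential potential of Tropp-style matrix martingale arguments, in the form suggested by \Cref{sec:trace-exp-description}. For each edge $e$ define the centered increment $X_e(b) := (b/q_e - 1)\Lambda_e$, so that for a complete state $L_G^{+/2}(L_{H(\sigma)}-L_G)L_G^{+/2} = \sum_e X_e(\chi_e)$ by \eqref{eq:ell-facts}. Non-randomized edges have $X_e\equiv 0$. Fix $\theta := c\ep^{-1}\ln n$, whose value will be tuned below, and set $v_e := \theta^2 \E[X_e^2] = \theta^2(1/q_e-1)r_e\Lambda_e$. For a state $\sigma=(S,\chi)$ define
\[
\Phi(\sigma) := \Tr\exp\Big(\theta\sum_{e\in S}X_e(\chi_e) + \sum_{e\notin S} v_e\Big) + \Tr\exp\Big(-\theta\sum_{e\in S}X_e(\chi_e) + \sum_{e\notin S} v_e\Big).
\]

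\textbf{Supermartingale property (\Cref{itm:super}).} Fix $\sigma$ and $e\notin S$, and write $H_\pm := \pm\theta\sum_{S}X + \sum_{f\notin S\cup\{e\}}v_f$. We have $\E X_e=0$ and $\lambda_{\max}(\pm X_e)\le r_e/q_e\le 1/s$, since $q_e\ge sr_e$ for randomized edges. Taking $\theta<s$, \Cref{cor:bernstein} gives $\ln\E e^{\pm\theta X_e}\preccurlyeq v_e$. Then \Cref{prop:lieb}, applied to each of the two terms, gives $\E_b\Phi(\sigma\cup(e\mapsto b))\le\Phi(\sigma)$. This step is routine.

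\textbf{Goodness (\Cref{itm:impliesgood}).} First we bound $\Phi(\emptyset)$. We have $\sum_e v_e \preccurlyeq \theta^2 s^{-1}\sum_e\Lambda_e=(\theta^2/s)\Pi$, using $(1/q_e-1)r_e\le r_e/q_e\le 1/s$. Hence $\Phi(\emptyset)\le 2n e^{\theta^2/s}$. Now choose $s=C\ep^{-2}\ln n$ and $\theta=\ep^{-1}\ln n$. Then $\theta^2/s=\ln n/C$ and $\theta<s$, so $\Phi(\emptyset)\le 2n^{1+1/C}$. For a complete state, $\Phi(\sigma)\ge e^{\theta\|M\|}$ with $M=\sum_e X_e(\chi_e)$, because $\Tr e^{\pm\theta M}\ge e^{\theta\lambda_{\max}(\pm M)}$. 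So $\Phi(\sigma)\le n^K\Phi(\emptyset)$ gives $\|M\|\le (K+2)\ln n/\theta=(K+2)\ep$. This is at most $2^K\ep$ once $K\ge 2$; for $K=1$ we rescale the constants, or equivalently absorb the constant into $\ep$ by running with $\ep/4$. The condition $2^K\ep<1/2$ is only needed so that the result is a meaningful approximation. We must also check $\ker L_G\subseteq\ker L_H$. Since $G$ is connected and $\|M\|<1$, $L_H$ is nonsingular on $\mathrm{im}(L_G)$, and $\one$ always lies in $\ker L_H$. For the edge count, note that $\chi$ is a single fixed outcome, so we cannot use expectations. Instead we use that the spectral bound controls weight: $\Tr(L_G^{+/2}L_HL_G^{+/2})=\sum_e \chi_e r_e/q_e \le (1+2^K\ep)(n-1)$. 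Every randomized edge has $r_e/q_e\ge 1/(3s)$ and every non-randomized edge has $r_e\ge 1/s$. Therefore the number of included edges is at most $3s(1+2^K\ep)(n-1)=O(n\ep^{-2}\log n)$.

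\textbf{Computability (\Cref{itm:comp}), the main obstacle.} We must evaluate $\Phi$ in $\CL$, presumably to polynomially many bits of precision, with the supermartingale and goodness properties tolerating this error. Entries of $L_G^{+}$, and of $L_G^{+/2}$ or an equivalent factorization, are computable in $\DET\subseteq\CL$. To avoid square roots, we can rewrite $\Tr\exp(A)$ using the similarity $L_G^{+/2}ZL_G^{+/2}\sim L_G^{+}Z$ restricted to $\mathrm{im}(L_G)$. This gives $\Tr \exp(\cdot)=\Tr\exp(L_G^+ B_\pm)-1$ for an explicit rational Laplacian-combination $B_\pm$; the $-1$ accounts for the kernel. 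The exponential is then approximated by a truncated Taylor series $\sum_{k\le N}\Tr(A^k)/k!$ with $N=\poly(\log n)$ or $O(\theta+\log n)$ terms, using $\|A\|\le O(\theta\cdot\text{poly})$. Matrix powers and traces are iterated matrix products, hence in $\DET\subseteq\CL$, and \Cref{fct:trace-perturb} controls how rounding errors propagate. The delicate part is that exact supermartingale-ness fails under approximation. I would handle this either by defining $\Phi$ to be the computed approximation and proving a slack version $\E\Phi(\sigma_{e=b})\le(1+1/\poly)\Phi(\sigma)$, which accumulates to a constant factor over $m$ steps, or by adding a tiny increasing correction term to restore exact monotonicity. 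I expect this precision bookkeeping to be where most care is needed.
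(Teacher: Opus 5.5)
Your proposal is correct and shares the paper's overall architecture. It uses the same symmetrized trace-exponential potential with a variance proxy for undecided edges, \Cref{cor:bernstein} plus Lieb for \Cref{itm:super}, and the bound $|E(H)|\le 3s\Tr(L_G^+L_H)$ for sparsity. For \Cref{itm:comp} it uses a truncated Taylor series evaluated by matrix powering, made exactly supermartingale by an additive slack per undecided edge; your second option is precisely the paper's $\Phi'$.

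The real difference is the step from small potential to small norm. The paper goes through even trace moments:
\begin{itemize}
\item \Cref{fact:poly} gives $\Tr((M_\sigma/\log n)^{2t})\le D\,\Phi(\sigma)$ (\Cref{lem:pot-final-bound});
\item a Stirling estimate gives $D\,\Phi(\emptyset)\le\ep^{2t}$ (\Cref{lem:pot-initial});
\item with $t=\log n$ this yields $\|M_\sigma/\log n\|\le\ep\,n^{K/2t}=\ep\,2^{K/2}$.
\end{itemize}
That route carries the closed-walk/quasirandomness motivation, and its multiplicative $2^{K/2}$ loss fits the statement for every $K\ge 1$ with no constant tuning.

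You instead use $\Tr e^{\pm\theta M}\ge e^{\theta\lambda_{\max}(\pm M)}$ directly. This removes $D$, \Cref{fact:poly} and Stirling entirely and gives the additive bound $\|M\|\le (K+O(1))\ep$, which is better than $2^{K/2}\ep$ for large $K$. The price is tuning constants for small $K$. Do that tuning through $\theta$, not by changing $\ep$ inside $s$, because the given good $q_e$ fix $s$. With the paper's $s\approx 23\ep^{-2}\ln n$, taking $\theta=3\ep^{-1}\ln n$ keeps $\theta<s$. It gives $\|M\|\le\tfrac{K+3.4}{3}\ep\le 2^K\ep$ for all $K\ge 1$, even after the $K\to K+1$ loss from the approximation step. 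Your exact variance $v_e$ in place of the uniform proxy $Y_e=\log n\cdot\Lambda_e$ is an inessential variation. So is computing $L_G^+$ exactly in $\DET$ rather than via the truncated Neumann series of \Cref{clm:compLperp}.

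A few small repairs are needed.
\begin{itemize}
\item Your first option for absorbing approximation error, a $(1+1/\poly)$-slack inequality, does not prove \Cref{itm:super} as stated. Exactness matters downstream because $f$ must be a semimeasure. Use the additive slack, or fold the slack in as a factor $(1+\delta)^{|E\setminus S|}$.
\item The Taylor truncation needs $\poly(n)$ terms, not $\polylog(n)$ or $O(\theta+\log n)$. Keeping every edge of $K_n$ already makes $\|\theta M_\sigma\|=\Omega(\ep n)$. This is harmless, since the exponent is given in unary.
\item The identity is $\Tr\exp(L_G^{+/2}BL_G^{+/2})=\Tr\exp(L_G^+B)$ with no $-1$: the $k=0$ terms are both $n$, and cyclicity handles $k\ge 1$.
\item For exact rational arithmetic, take $\theta$ rational (the paper uses integer $\log_2 n$) rather than $\ep^{-1}\ln n$.
\item A non-randomized edge only satisfies $r_e\ge 1/(3s)$, from $1=q_e\le 3p_e$, not $r_e\ge 1/s$. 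Your edge count only uses $1/(3s)$, so the conclusion stands.
\end{itemize}
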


\subsection{Analyzing the potential function}
We first define the function $\Phi$. We assume WLOG that $\ep = 2^{-\ell}$ for $\ell\in \N$, and assume WLOG that $\log n=\log_2 n$ is an integer. We set 
\[
s := \frac{16}{\ep^2} \log n , \qquad a:=\frac{\log n}{s},\qquad t := \log n,
  \qquad c := 4\ep^{-1}, \qquad D := \frac{(2t)!}{2 (c\log n)^{2t}}.
\]
Recall $q\in [0,1]^E$ is the set of good sampling probabilities. Given a state $(S,\chi)$, we first define the matrices
\[
X_e := \log n \cdot \left(\frac{\chi_e}{q_e}-1\right)\Lambda_e,\qquad Y_e := \log n \cdot \Lambda_e.
\]
Observe that $X_e$ is only well-defined when $e \in S$, but $Y_e$ is well-defined for all $e \in E$.

Next, define:
\[M_\sigma := \sum_{e \in S} X_e,
  \qquad V_\sigma := \sum_{e \in E\setminus S} Y_e.\]

Intuitively, $M_\sigma$ represents the centered contribution of the
decided slots (note that a slot decided to $0$ contributes the negative matrix
$-\Lambda_e$, not zero), and $V_\sigma$ represents an upper bound on the potential of the undecided slots. We can then define the potential. 
  \begin{definition}[Potential function]\label{def:potential}
$$\Phi(\sigma) := 
  \Tr \exp\big( c M_\sigma + V_\sigma \big)
      + \Tr \exp\big( -c M_\sigma + V_\sigma \big).$$
\end{definition}
We now prove each item of~\Cref{thm:pot}.

\subsubsection{The potential is a supermartingale}
In this section we prove~\Cref{itm:super}. We first establish some useful facts about the matrix $X_e$, viewed as a random matrix over $\chi_e\sim\Bern(q_e)$.

\begin{lemma}\label{lem:slot-lev}
Let $e\in E$ be arbitrary. We have 
\begin{enumerate}
  \item $\E_{\chi_e\sim \Bern(q_e)}[X_e] = 0$;
  \item $\lambda_{\max}(X_e) \le a$ and $\lambda_{\max}(-X_e) \le a$, almost surely;
  \item $\E_{\chi_e\sim\Bern(q_e)}\big[X_e^2\big] \preccurlyeq \frac{(\log n)^2}{s} \Lambda_e = c^{-2}\cdot Y_e$. Hence summing over all edges,
  $$\sum_{e \in E} \E_{\chi_e\sim\Bern(q_e)}\big[X_e^2\big]
    \preccurlyeq \sum_{e \in E} \frac{(\log n)^2}{s} \Lambda_e
    \preccurlyeq \frac{(\log n)^2}{s} \Pi .$$
\end{enumerate}
\end{lemma}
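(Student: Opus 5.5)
The plan is a direct computation for each item. It uses only that $X_e$ is a scalar random variable times the fixed PSD rank-one matrix $\Lambda_e$, together with the identity $\Lambda_e^2=r_e\Lambda_e$ from \eqref{eq:ell-facts}. Write $X_e=\log n\cdot Z_e\Lambda_e$ with $Z_e:=\chi_e/q_e-1$.

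\textbf{Item 1.} With $\chi_e\sim\Bern(q_e)$ we have $\E[\chi_e/q_e]=1$, so $\E[Z_e]=0$. Linearity then gives $\E[X_e]=0$.

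\textbf{Item 2.} The only nonzero eigenvalue of $\Lambda_e$ is $r_e$. Hence the eigenvalues of $X_e$ are $0$ and $\log n\cdot Z_e r_e$, and it suffices to show $|Z_e|\,r_e\le 1/s$, since $a=\log n/s$.

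If $q_e=1$ (a non-randomized edge), then $\chi_e=1$ is forced, so $Z_e=0$ and $X_e=0$.

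Otherwise $q_e<1$, and I use the two branches of $p_e=\min(1,sr_e)$. If $p_e=1$ then $q_e\ge p_e=1$, a contradiction. So $p_e=sr_e$, and therefore $r_e=p_e/s\le q_e/s$. Now $Z_e$ takes only the values $1/q_e-1\le 1/q_e$ and $-1$. This gives
\[|Z_e|\,r_e\le \max(1/q_e,1)\cdot q_e/s\le 1/s,\]
using $q_e\le 1$. The same bound applied to $-X_e$ handles both signs at once.

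\textbf{Item 3.} Since $\Lambda_e^2=r_e\Lambda_e$, we have $X_e^2=(\log n)^2Z_e^2r_e\Lambda_e$. The variance of $Z_e$ is
\[\E[Z_e^2]=\mathrm{Var}(\chi_e)/q_e^2=(1-q_e)/q_e.\]
So $\E[X_e^2]=(\log n)^2\,r_e(1-q_e)/q_e\cdot\Lambda_e$, and it remains to bound the scalar $r_e(1-q_e)/q_e$ by $1/s$.

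When $q_e=1$ the scalar is $0$. Otherwise $r_e\le q_e/s$ as shown above, so $r_e(1-q_e)/q_e\le (1-q_e)/s\le 1/s$. This gives $\E[X_e^2]\preccurlyeq\frac{(\log n)^2}{s}\Lambda_e$.

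To match the stated form, note $c^{-2}Y_e=\frac{\ep^2}{16}\log n\,\Lambda_e$ and $\frac{(\log n)^2}{s}=\frac{\ep^2}{16}\log n$, so the two expressions agree.

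Finally, summing over $e$ and using $\sum_e\Lambda_e=\Pi$ from \eqref{eq:ell-facts} gives the last claim. Each $\Lambda_e$ is PSD and the scalar coefficients are nonnegative, so the Loewner inequalities add termwise.

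\textbf{Expected obstacle.} The only delicate point is the case split on whether $p_e=1$, which is needed to obtain $r_e\le q_e/s$ for randomized edges. Everything else is routine.
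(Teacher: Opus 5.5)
Your proposal is correct and follows essentially the same argument as the paper. Both proofs compute the single nonzero eigenvalue $\log n\,(\chi_e/q_e-1)r_e$ and the variance $(1-q_e)/q_e$, then bound everything using $q_e\ge p_e=sr_e$ for randomized edges. Your only additions are to make explicit that $q_e<1$ forces $p_e=sr_e$ (the paper uses this implicitly), to treat both eigenvalue signs at once via $|Z_e|\,r_e\le 1/s$, and to check directly that $\frac{(\log n)^2}{s}=c^{-2}\log n$.
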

\begin{proof}~
All facts are immediate if $e$ is not a random edge (i.e. $q_e=1$), so we WLOG assume it is.
\begin{enumerate}
    \item This is immediate from $\E[\chi_e/q_e] = 1$.
    \item The nonzero eigenvalue of
    $X_e$ is $\log n \cdot \big(\tfrac{\chi_e}{q_e} - 1\big) r_e$. When $\chi_e = 1$ this equals
    \[
    \log n \cdot \frac{1 - q_e}{q_e} r_e \le \log n \cdot \frac{1}{p_e}\cdot r_e \le \frac{\log n}{s} = a
    \]
    using
    $q_e\ge p_e $ and $p_e= s r_e$; when $\chi_e = 0$ it equals $-\log n \cdot r_e$. Thus
    $\lambda_{\max}(X_e)\le a$, and
    $\lambda_{\max}(-X_e) = \log n\cdot r_e\le \log n/s=a$ since $p_e = s r_e < 1$.
     \item First note that $\mathrm{Var}(\chi_e/q_e) = \mathrm{Var}(\chi_e)/q_e^2=(1-q_e)/q_e$, and $\Lambda_e^2 = r_e \Lambda_e$ from \Cref{eq:ell-facts}. Thus
    $$\E[X_e^2] =(\log n)^2\cdot \frac{1 - q_e}{q_e} r_e \Lambda_e \preccurlyeq \frac{(\log n)^2}{p_e} r_e \Lambda_e =\frac{(\log n)^2}{s} \Lambda_e = c^{-2}\cdot Y_e.$$ Then taking a sum over the randomized edges gives the final claim, where the final step uses \Cref{eq:ell-facts} (and that $\Lambda_e$ is PSD for every $e$). \qedhere
\end{enumerate}
\end{proof}

We next observe that the matrices $Y_e$ serve as an upper bound for $X_e$ in the potential function.
\begin{claim}\label{clm:Ye}
    We have $\log \E_{\chi_e\sim \Bern(q_e)} e^{c X_e} \preccurlyeq Y_e$ and $\log \E_{\chi_e\sim \Bern(q_e)} e^{-c X_e} \preccurlyeq Y_e$.
\end{claim}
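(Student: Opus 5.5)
The plan is to apply \Cref{cor:bernstein} to the random matrix $X_e$ and to $-X_e$, with $\theta = c$, and then compare the resulting bound with $Y_e$ using part 3 of \Cref{lem:slot-lev}.

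First, if $e$ is not randomized ($q_e=1$), then $X_e=0$ always. So $\ln \E e^{\pm cX_e} = 0 \preccurlyeq Y_e$, since $Y_e=\log n\cdot\Lambda_e$ is PSD. From now on assume $q_e<1$.

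\Cref{lem:slot-lev} gives $\E[X_e]=0$, and $\lambda_{\max}(X_e)\le a$ and $\lambda_{\max}(-X_e)\le a$ almost surely. Also $-X_e$ has mean zero and $(-X_e)^2=X_e^2$. Hence the hypotheses of \Cref{cor:bernstein} hold for both $X_e$ and $-X_e$, provided $\theta=c$ lies in $(0,1/a)$.

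This parameter check is the one step that needs care. We have
\[
1/a = s/\log n = 16\ep^{-2},
\]
and $c = 4\ep^{-1} < 16\ep^{-2}$ because $\ep<1/2$ (indeed $\ep<4$ suffices). So $c\in(0,1/a)$ and $ca<1$.

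Applying the corollary then gives
\[
\ln \E\big[e^{\pm cX_e}\big] \preccurlyeq c^2\,\E[X_e^2].
\]
Part 3 of \Cref{lem:slot-lev} gives $\E[X_e^2]\preccurlyeq c^{-2}Y_e$. Multiplying by $c^2$ preserves the Loewner order, so
\[
\ln \E\big[e^{\pm cX_e}\big] \preccurlyeq Y_e,
\]
which is the claim.

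One notational point: the claim writes $\log$ for the matrix logarithm. Per the paper's convention the matrix logarithm is the natural one, matching the $\ln$ in \Cref{cor:bernstein}. Moreover, $\E e^{\pm cX_e}$ is positive definite, so its logarithm is well defined.
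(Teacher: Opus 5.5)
Your proof is correct and follows the paper's argument: apply \Cref{cor:bernstein} with $\theta=c$ to $\pm X_e$ (using Items 1--2 of \Cref{lem:slot-lev} and $c<1/a$), then use Item 3 to bound $c^2\E[X_e^2]$ by $Y_e$. Your explicit check that $1/a=16\ep^{-2}>4\ep^{-1}=c$ spells out a step the paper only asserts.
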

\begin{proof}
    We have that
    \begin{align*}
        \log \E_{\chi_e\sim \Bern(q_e)} e^{c X_e} &\preccurlyeq c^2 \E[X_e^2] && \text{(\Cref{cor:bernstein})}\\
        &\preccurlyeq Y_e && \text{Item 3 of~\Cref{lem:slot-lev}}
    \end{align*}
    where the first step uses that $\E[X]=0$ and $c<1/a$ and $\lambda_{\max}(X_e)\le a$ from Item 1 and 2 of~\Cref{lem:slot-lev}. The second bound is identical, using that $\E[(-X_e)^2] = \E[X_e^2]$.
\end{proof}

We can then prove~\Cref{itm:super}.
\begin{proof}[Proof of~\Cref{itm:super}]
First note that fixing $e \in E\setminus S$ to the value $\chi_e$ effectively replaces $V_\sigma$ by
$V_\sigma - Y_e$ and adds the matrix $X_e$ to $M_\sigma$. Set
$\mathbf{H}_{\pm} := \pm c M_\sigma + V_\sigma - Y_e$, which are fixed matrices that do not depend on $\chi_e$. We first consider the positive exponent in the potential
\[
    \E_{\chi_e\sim \Bern(q_e)} \Tr \exp\big( \mathbf{H}_{+} + c X_e \big) \le \Tr \exp\big( \mathbf{H}_{+} + Y_e \big) =\Tr \exp\big( c M_\sigma + V_\sigma \big) 
\]
where the inequality follows from~\Cref{prop:lieb} and~\Cref{clm:Ye}. The equivalent bound for the negative exponent is identical, and adding the two branches yields the claim.
\end{proof}

\subsubsection{Small potential implies sparsification}
In this section we prove that for a complete state $\sigma$ with a potential that is small (within a $\poly(n)$ factor) of the initial potential, the resulting graph is a good sparsifier (both in the sense that it spectrally approximates $L_G$ and has few edges).

The first component of the proof is showing the initial empty state has small potential.
\begin{lemma}\label{lem:pot-initial}
The empty state satisfies $D\cdot \Phi(\emptyset) \le \ep^{2t}$. 
\end{lemma}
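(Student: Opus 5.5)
The plan is to compute $\Phi(\emptyset)$ exactly and then bound $D$ with Stirling's approximation.

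\textbf{Computing $\Phi(\emptyset)$.} For the empty state $S=\emptyset$, so $M_\emptyset = 0$. By \Cref{eq:ell-facts},
\[
V_\emptyset=\sum_{e\in E} Y_e = \log n\cdot \Pi .
\]
Since $G$ is connected, $\Pi = I-\tfrac1n\one\one^\intercal$ has eigenvalue $1$ with multiplicity $n-1$ and eigenvalue $0$ once. Both trace exponentials in \Cref{def:potential} therefore coincide, and
\[
\Phi(\emptyset) = 2\Tr\exp(\log n\cdot \Pi) = 2\big((n-1)e^{\log n}+1\big)\le 2n\cdot e^{\log n}.
\]
Here $\log=\log_2$, so $e^{\log n} = n^{1/\ln 2}$. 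This gives $\Phi(\emptyset)\le 2 n^{1+1/\ln 2}$.

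\textbf{Bounding $D$.} Substituting $c\log n = 4\ep^{-1}t$ into the definition of $D$ gives
\[
D=\frac{(2t)!}{2(4t/\ep)^{2t}} .
\]
The crude bound $(2t)!\le (2t)^{2t}$ would give only $D\le \tfrac12 (\ep/2)^{2t} = \tfrac12\ep^{2t} n^{-2}$. That loses, since $\Phi(\emptyset)$ is about $n^{2.44}$. So the plan is to use Stirling in the form
\[
(2t)!\le e\sqrt{2t}\,(2t/e)^{2t}.
\]
This gives
\[
D \le \frac{e\sqrt{2t}}{2}\Big(\frac{\ep}{2e}\Big)^{2t} = \frac{e\sqrt{2t}}{2}\,\ep^{2t}\, 2^{-2t} e^{-2t}.
\]
With $t=\log_2 n$ we have $2^{-2t}=n^{-2}$ and $e^{-2t}=n^{-2/\ln 2}$.

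\textbf{Combining.} Multiplying the two bounds,
\[
D\cdot\Phi(\emptyset)\le \ep^{2t}\cdot e\sqrt{2\log n}\cdot n^{1+1/\ln2-2-2/\ln 2} = \ep^{2t}\cdot e\sqrt{2\log n}\cdot n^{-1-1/\ln 2}.
\]
It remains to check that $e\sqrt{2\log n}\le n^{1+1/\ln 2}\approx n^{2.44}$ for all $n\ge 2$. This holds already at $n=2$, where $e\sqrt2\approx 3.84<2^{2.44}\approx 5.4$, and the right side grows much faster thereafter. (If $n=1$ there is nothing to sparsify.)

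\textbf{Main obstacle.} The only delicate step is the constant bookkeeping between the $\log_2$ in $t$ and the natural base of the exponential. The naive factorial bound is insufficient, and the $e^{-2t}$ saving from Stirling is exactly what absorbs the $n^{1/\ln 2}$ blow-up coming from $e^{\log_2 n}$.
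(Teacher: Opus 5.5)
Your proof is correct and follows essentially the same route as the paper: bound $\Phi(\emptyset)\le 2n^{1+1/\ln 2}$, then apply Stirling's bound $k!\le e\sqrt{k}(k/e)^k$ with $k=2t$, so that the factor $(2e)^{-2t}=n^{-2(1+1/\ln 2)}$ absorbs $\Phi(\emptyset)$. The differences are minor: you compute $\Tr\exp(\log n\cdot\Pi)$ exactly from the spectrum of $\Pi$, where the paper uses $\Pi\preccurlyeq I$ and monotonicity of the trace exponential, and your explicit check that $e\sqrt{2\log n}\le n^{1+1/\ln 2}$ for $n\ge 2$ fills in the final step the paper leaves implicit.
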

\begin{proof}
First note that
\begin{equation}\label{eq:V-bound-gen}
  V_\emptyset  = \sum_{e\in E}Y_e = \log n \sum_{e\in E}  \Lambda_e = \log n \cdot \Pi \preccurlyeq I\cdot \log n
\end{equation}
and so we have
\[\Phi(\emptyset) = 2\Tr \exp\big( V_\emptyset \big) \le 2\Tr \exp\big( \log n I \big) \le 2n^{1+1/\ln 2}.\]
where the first inequality follows from~\Cref{eq:V-bound-gen} and~\Cref{fact:mono}. 
It remains to bound
\[
\frac{D\Phi(\emptyset)}{\ep^{2t}}\le n^{1+1/\ln 2}\cdot \frac{(2t)!}{\left(4t\right)^{2t}}
\]
by $1$. 
Using the standard bound $k! \le e\sqrt{k}(k/e)^{k}$
with $k = 2t$,
\[
\frac{(2t)!}{(4t)^{2t}}
\le e\sqrt{2t}\left(\frac{2t}{4et}\right)^{2t}
= e\sqrt{2t}(2e)^{-2t}.
\]
Finally $(2e)^{-2t} = 2^{-2t(1+\log e)} = n^{-2(1+1/\ln 2)}$, and hence we have that the bound holds.
\end{proof}

We next show that the potential of a complete state upper bounds the norm of the term $M_\sigma/\log n$, which captures the quality of the sparsifier by~\Cref{eq:approxcond}. This is exactly where the \textit{analogy to graph quasirandomness} comes in. We observe that
$$\left\|\frac{1}{\log n}M_\sigma\right\|^{2t} \leq \Tr((M/\log n)^{2t})$$
which is a \textit{quasirandom-style condition} because $\Tr((M/\log n)^{2t})$ is an even-power trace moment that can be viewed as a weighted and signed closed-walk statistic of the corresponding graph. This is reminiscent of a quasirandomness notion of \cite{chung1989quasi, chung2002sparse}, where being a good sparsifier of the complete graph (i.e., looking ``random enough'') is implied by the count of length-$2t$ walks in the graph.

Given this connection, we are ready to state and prove the bound on the norm.

\begin{lemma}\label{lem:pot-final-bound}
Let $\sigma$ be a complete state and let $H$ be the sparsifier as defined in~\Cref{eq:Hdef}. 
Then
$$\left\|\frac{1}{\log n}M_\sigma\right\|^{2t} \le D\cdot \Phi(\sigma) .$$
\end{lemma}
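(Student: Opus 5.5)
For a complete state $\sigma$ we have $S=E$, so $V_\sigma=0$ and the potential reduces to
\[
\Phi(\sigma)=\Tr\exp(cM_\sigma)+\Tr\exp(-cM_\sigma).
\]
The plan is to pass to the eigenbasis of the symmetric matrix $M_\sigma$ and apply \Cref{fact:poly} eigenvalue by eigenvalue.

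Let $\mu_1,\dots,\mu_n$ be the eigenvalues of $M_\sigma$. By spectral calculus,
\[
\Phi(\sigma)=\sum_i \big(e^{c\mu_i}+e^{-c\mu_i}\big).
\]
Set $u_i:=\mu_i/\log n$, so that $c\mu_i=(c\log n)\,u_i$. By \Cref{fct:normpower},
\[
\left\|\tfrac{1}{\log n}M_\sigma\right\|^{2t}=\max_i u_i^{2t}.
\]
We apply \Cref{fact:poly} with the constant $c\log n>0$ in place of $c$. This gives, for each $i$,
\[
u_i^{2t}\le \frac{(2t)!}{2(c\log n)^{2t}}\big(e^{c\mu_i}+e^{-c\mu_i}\big)=D\big(e^{c\mu_i}+e^{-c\mu_i}\big).
\]
The right-hand side is at most $D\cdot\Phi(\sigma)$, because every other summand in $\Phi(\sigma)$ is positive. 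Taking the maximum over $i$ then yields the claim. (One could equally bound the maximum by the sum $\Tr\big((M_\sigma/\log n)^{2t}\big)$, which makes the quasirandom trace-moment connection explicit, but the maximum suffices.)

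The argument is short, and the one step needing care is the bookkeeping of the scaling. The constant $D$ is defined with $(c\log n)^{2t}$ in the denominator precisely so that \Cref{fact:poly}, applied to the normalized eigenvalue $u_i=\mu_i/\log n$ with constant $c\log n$, reproduces the exponent $c\mu_i$ appearing in $\Phi$. It also remains to confirm $V_\sigma=0$ for complete states, which holds because $V_\sigma$ is a sum over the empty set $E\setminus S$.
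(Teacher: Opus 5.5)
Your proof is correct and follows the paper's argument: both use $V_\sigma=0$ for complete states, apply \Cref{fact:poly} to each eigenvalue of $M_\sigma/\log n$ with constant $c\log n$, and compare against $\Tr\exp(cM_\sigma)+\Tr\exp(-cM_\sigma)=\Phi(\sigma)$. The only difference is immaterial. The paper first bounds $\|M_\sigma/\log n\|^{2t}$ by the trace moment $\Tr\big((M_\sigma/\log n)^{2t}\big)$ and then sums, whereas you bound the single maximal eigenvalue term directly by the full positive sum.
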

\begin{proof}
Let $M:=M_\sigma$ for convenience.
Note that it suffices to prove 
$ \Tr((M/\log n)^{2t})\le D\cdot \Phi(\sigma)$, because $\left\|\frac{1}{\log n}M_\sigma\right\|^{2t} =\max_i\{\lambda_i(M_\sigma)/\log n\}^{2t}\le \sum_i (\lambda_i(M_\sigma)/\log n)^{2t}=\Tr((M/\log n)^{2t})$ where the first equality is~\Cref{fct:normpower}. Apply \Cref{fact:poly} to each eigenvalue
$\lambda_i/\log n$ of $M/\log n$ with parameters $t$ and $c\log n$ and sum:
\begin{align*}
    \Tr\big((M/\log n)^{2t}\big) &= \sum_i (\lambda_i/\log n)^{2t}\\
  &\leq \frac{(2t)!}{2 (c\log n)^{2t}} \sum_i \big( e^{c \lambda_i} + e^{-c \lambda_i} \big)\\
  &= \frac{(2t)!}{2 (c\log n)^{2t}} \big( \Tr e^{c M} + \Tr e^{-c M} \big) = D\cdot \Phi(\sigma),
\end{align*}
using that $\exp(\pm c M)$ has eigenvalues $e^{\pm c \lambda_i}$, and the final equality holds since $\sigma$ is complete and hence $V_\sigma=0$. The first
inequality holds since all summands $(\lambda_i/\log n)^{2t}$ are nonnegative.
\end{proof}

We likewise prove that any complete state that is a good sparsifier is also sparse:
\begin{lemma}[Edge count]\label{lem:edge-count}
Let $\sigma$ be a complete state where $L_H \preccurlyeq (1+\gamma) L_G$. Then
$$|E(H)| \le (1+\gamma) 3sn.$$
\end{lemma}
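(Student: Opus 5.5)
The plan is to bound the edge count by taking a trace against $L_G^+$, i.e. by comparing total weighted effective resistance in $H$ and $G$. Note that $|E(H)|$ counts the edges $e$ with $\chi_e=1$; this covers both the non-randomized edges ($q_e=1$, always included) and the randomized edges that were sampled. Each included edge $e$ contributes $\frac{1}{q_e} b_e b_e^{\intercal}$ to $L_H$. Conjugating $L_H \preccurlyeq (1+\gamma)L_G$ by $L_G^{+/2}$ gives
\[
\sum_{e} \frac{\chi_e}{q_e}\Lambda_e \preccurlyeq (1+\gamma)\Pi ,
\]
and conjugation preserves the Loewner order. Taking traces, and using $\Tr(\Lambda_e)=r_e$ and $\Tr(\Pi)=n-1$ (equivalently \Cref{eq:foster}), yields
\[
\sum_{e:\chi_e=1} \frac{r_e}{q_e} \le (1+\gamma)(n-1).
\]

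The remaining task is a per-edge lower bound $r_e/q_e \ge 1/(3s)$ for every edge. Summing it over the included edges then gives $|E(H)|/(3s) \le (1+\gamma)(n-1)$, so $|E(H)| \le (1+\gamma)3s(n-1) \le (1+\gamma)3sn$. We split into the two cases of \Cref{eq:pe}.
\begin{itemize}
\item If $p_e = s r_e < 1$, then $q_e \le 3p_e = 3 s r_e$, so $r_e/q_e \ge 1/(3s)$.
\item If $p_e = 1$, then $s r_e \ge 1$ and $q_e = 1$, so $r_e/q_e = r_e \ge 1/s \ge 1/(3s)$.
\end{itemize}

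The only point requiring care is this second case. Edges with $q_e=1$ carry weight $1$ rather than $1/q_e$, but since $q_e=1$ these coincide, so the same bound applies. This handles the non-randomized edges and completes the argument.
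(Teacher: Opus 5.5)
Your proof is correct and follows the paper's argument. The paper likewise uses the per-edge bound $q_e \le 3p_e \le 3sr_e$ together with $\Tr(L_G^+L_H)\le(1+\gamma)\Tr(L_G^+L_G)=(1+\gamma)\Tr\Pi$, which is your conjugate-and-trace step up to cyclicity; your case split is harmless but unnecessary, since $p_e=\min(1,sr_e)\le sr_e$ always holds.
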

\begin{proof}
Write $w_e := \chi_e/q_e$ for the weight $H$ assigns to $e$, so that 
$L_H = \sum_e w_e b_e b_e^{\intercal}$, and recall
$r_e = \Tr(L_G^+ b_e b_e^{\intercal})$. We first claim the per-edge bound
\begin{equation}\label{eq:edge-ineq}
  \chi_e \le 3 s r_e w_e \qquad \text{for every } e.
\end{equation}
This is equivalent to $q_e\le 3sr_e$, which we know is true since $q_e\le 3p_e \leq 3sr_e$ because the algorithm outputs a good set of probabilities $\{q_e\}_{e \in E}$.
Summing
\Cref{eq:edge-ineq} and using $r_e w_e = \Tr(L_G^+ w_e b_e b_e^{\intercal})$ we obtain that $\sum_{e\in E}r_ew_e= \Tr\!\big(L_G^+ L_H\big)$, and hence
$$|E(H)| = \sum_e \chi_e \le 3s \sum_e r_e w_e
  = 3s\Tr\!\big(L_G^+ L_H\big).$$
Since $L_H \preccurlyeq (1+\gamma)L_G$ and $L_G^+ \succcurlyeq 0$, the trace inequality $\Tr(CA) \le \Tr(CB)$ for $0 \preccurlyeq A \preccurlyeq B$ and
$C \succcurlyeq 0$ gives
\[|E(H)|\le 3s\Tr\!\big(L_G^+ L_H\big)
  \le 3s(1+\gamma) \Tr\!\big(L_G^+ L_G\big)
  = (1+\gamma) 3s \Tr \Pi
  \le (1+\gamma) 3sn.\qedhere\]
\end{proof}

Finally, we can combine these elements and prove~\Cref{itm:impliesgood}.
\begin{proof}[Proof of~\Cref{itm:impliesgood}]
    Let $\sigma$ be a complete state where $\Phi(\sigma) \le \Phi(\emptyset)\cdot n^K$ and thus $D\Phi(\sigma) \le D\Phi(\emptyset)\cdot n^K$. By~\Cref{lem:pot-initial}, we have $D\Phi(\sigma)\le \ep^{2t}\cdot n^K$, and hence by~\Cref{lem:pot-final-bound} we have $\left\|\frac{1}{\log n}M_\sigma\right\|^{2t} \le \ep^{2t}\cdot n^K$, i.e.
    \begin{align*}
        \left\|L_G^{+/2}(L_H-L_G) L_G^{+/2}\right\| &= \left\| \frac{1}{\log n}M_{\sigma}\right\|\\
        &\le \ep\cdot n^{K/2t}=\ep\cdot 2^{K/2}
    \end{align*}
    and hence $H(\sigma)$ is an $\ep\cdot 2^{K/2}$-spectral approximation of $G$ by~\Cref{eq:approxcond}. Finally, by~\Cref{lem:edge-count} applied with $\gamma=\ep \cdot 2^{K/2}$, we have that 
    \[
    |E(H)| \le 48\left(1+\ep\cdot 2^{K/2}\right)\frac{n\log n}{\ep^2} = O(n \log n/\ep^2)
    \]
    as claimed.
\end{proof}

\subsection{Computing the potential in catalytic logspace}
Finally, we show~\Cref{itm:comp}. We actually show that we can approximate the potential to very small error, then define a modified potential $\Phi'$ based on this approximation and still show it satisfies~\Cref{itm:impliesgood} and~\Cref{itm:super}.

We require the fundamental result of~\cite{DBLP:conf/stoc/BuhrmanCKLS14} that matrix powering is computable in $\CL$. 
\begin{theorem}[\cite{DBLP:conf/stoc/BuhrmanCKLS14}]\label{thm:BCKLS}
    Evaluation of logspace-uniform $\TC^1$ circuits is in $\CL$. In particular, given an $n\times n$ matrix $M$ with $\poly(n)$ bit entries and $k$ in unary, computing $M^k$ is in $\CL$.
\end{theorem}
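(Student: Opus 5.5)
The plan is to follow the transparent-computation (register program) approach of \cite{DBLP:conf/stoc/BuhrmanCKLS14}. We reduce the circuit-evaluation claim to the construction of \emph{clean reversible programs} over a large prime field $\F_p$, and then derive matrix powering as a consequence. Throughout we treat the catalytic tape as a collection of registers $\tau_1,\ldots,\tau_N\in\F_p$, each holding $O(\log n)$ bits, so that arbitrary initial content is simply an arbitrary initial value. The only allowed instructions are reversible updates of the form $\tau_i\gets\tau_i\pm f(\tau_{j_1},\ldots)$ with $i\notin\{j_1,\ldots\}$.

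\textbf{Step 1: the invariant.} A program $P$ \emph{cleanly computes} $f$ into register $\tau_o$ if, for every initial content, running $P$ maps $\tau_o\mapsto\tau_o+f(x)$ and leaves every other register unchanged. Inverses come for free: running the instructions backwards with signs flipped subtracts $f(x)$.

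\textbf{Step 2: closure under arithmetic gates, with length bounds.}
\begin{itemize}
\item Input literals are trivial.
\item Addition follows by running the two subprograms into the same output register.
\item Multiplication uses the standard four-step trick. Run $P_f$ into $\tau_a$ and $P_g$ into $\tau_b$; add $\tau_a\tau_b$ to $\tau_o$; undo $P_f$ and add $-\tau_a\tau_b$; redo $P_f$ and undo $P_g$, adding $-\tau_a\tau_b$; undo $P_f$ and add $\tau_a\tau_b$. The four products expand to $(a+f)(b+g)-a(b+g)-(a+f)b+ab=fg$, so exactly $f(x)g(x)$ is added and all other registers are restored.
\item Unbounded fan-in sums are cheap. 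An iterated sum or product of fan-in $k$ is handled by a balanced tree of depth $\log k$.
\end{itemize}
Each level of multiplication multiplies the program length by a constant, so a depth-$d$ arithmetic formula yields length $2^{O(d)}$. That length is $\poly(n)$ precisely when $d=O(\log n)$.

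\textbf{Step 3: threshold gates, the main obstacle.} Threshold gates are the step I expect to be hardest, because they are not low-degree polynomials. The first thing I would try is the BCKLS approach. For a majority gate over bits $z_1,\ldots,z_k$, first cleanly compute $w=\sum z_i\in\F_p$ with $p>k$. The output bit $[w\ge k/2]$ equals $\sum_{j\ge k/2}\big(1-(w-j)^{p-1}\big)$ by Fermat's little theorem. Computing $(w-j)^{p-1}$ by repeated squaring would add $\log p$ multiplication levels per gate. Over $\log n$ levels of the circuit this costs $O(\log n\cdot\log p)$ total depth, which is too much.

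The fix is to avoid nested squaring. Either take $p$ a small prime, $p=\poly(n)$, and evaluate $w^{p-1}$ through a clean program computing all powers simultaneously, or use the BCKLS trick of computing threshold via interpolation with only $O(1)$ multiplicative overhead per gate, by "adding" each power into its own register. The remaining task is to argue carefully that each $\TC^0$ layer incurs only a constant-factor blowup in program length. This is exactly where I expect the technical work to lie.

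\textbf{Step 4: implementation and the corollary.} Given the $\poly(n)$-length clean program from Steps 2--3, the $\CL$ machine runs it sequentially. It stores only the instruction counter and $O(1)$ field elements in $O(\log n)$ worktape, and instructions are generated by logspace uniformity. It reads the output register before and after adding $f(x)$, or equivalently computes into a register and compares with its value after inversion, and finally runs the inverse so that the tape is restored.

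For matrix powering, iterated integer matrix multiplication of $k\le\poly(n)$ matrices with $\poly(n)$-bit entries lies in logspace-uniform $\TC^1$: use a balanced product tree of depth $\log k$, each level being iterated addition and multiplication in $\TC^0$. Alternatively, in $\F_p$ directly, one can use Step 2 with a depth-$\log k$ product tree, recovering integer entries via CRT over polynomially many primes, each bit computed separately. Either route gives $M^k$ in $\CL$.
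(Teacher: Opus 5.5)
The paper does not prove this statement; it imports it from \cite{DBLP:conf/stoc/BuhrmanCKLS14}. Your outline follows the BCKLS register-program strategy. Your second route to matrix powering (Ben-Or--Cleve over $\F_p$ along a depth-$\log k$ product tree, then logspace CRT reconstruction) uses no threshold gates and is essentially sound on its own.

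The main claim, $\TC^1\subseteq\CL$, is not established. Step~3 is the heart of the BCKLS proof, and you leave it as ``the remaining task.'' Your hints point in the right direction but are not a construction, and their natural readings fail:
\begin{itemize}
\item Taking $p=\poly(n)$ is irrelevant. $p$ must exceed the fan-in anyway; the issue is the number of recursive calls per layer, not the size of $p$.
\item Interpolation in its usual finite-difference form, $\sum_{j=0}^{d}(-1)^{d-j}\binom{d}{j}(\tau+jw)^d=d!\,w^d$, needs $d+1\approx p$ calls to the previous layer's program. That gives length $p^{\Theta(\log n)}$ and a $\Theta(\log^2 n)$-bit program counter.
\item Producing the $\Theta(m)$ terms $(w-j)^{p-1}$ separately would cost a factor $\Theta(m)$ per layer. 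The gate must instead be treated as the single polynomial $q(w)=\sum_{j\ge m/2}(1-(w-j)^{p-1})$ of degree $p-1$.
\end{itemize}

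The missing lemma is the following. If $P$ cleanly adds $w$ to $\tau_1$, then for any $q(w)=\sum_k c_kw^k$ of degree less than $p$, there is a clean program adding $q(w)$ to $\tau_o$. It calls $P^{\pm1}$ only four times, using fresh registers $r_0,\dots,r_{p-1}$ and $\poly(p)$ local instructions. Expanding $w=(\tau+w)-\tau$ gives $q(w)=\sum_j\ell_j(\tau)(\tau+w)^j$ with $\ell_j(\tau)=\sum_{k\ge j}c_k\binom{k}{j}(-\tau)^{k-j}$. The program is:
\begin{enumerate}
\item $\tau_o\gets\tau_o-\sum_j\ell_j(\tau_1)r_j$;
\item run $P$;
\item $r_j\gets r_j+\tau_1^j$ for all $j$;
\item run $P^{-1}$;
\item $\tau_o\gets\tau_o+\sum_j\ell_j(\tau_1)r_j$;
\item run $P$;
\item $r_j\gets r_j-\tau_1^j$ for all $j$;
\item run $P^{-1}$.
\end{enumerate}
Both updates of $\tau_o$ see $\tau_1=\tau$. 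They differ only in that each $r_j$ has meanwhile gained $(\tau+w)^j$, so the garbage $\sum_j\ell_j(\tau)r_j^{(0)}$ cancels and exactly $q(w)$ is added. The last three steps restore every $r_j$. Since $w=\sum_ix_i$ is cleanly computed from two calls to the previous layer's program, each threshold layer multiplies the length by a constant and adds $\poly(n)$. A depth-$O(\log n)$ circuit therefore gives a $\poly(n)$-length program with $O(1)$ bits of recursion control per layer.

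A smaller point you gloss over in Step~1: unless $p$ is a power of two, a $\lceil\log p\rceil$-bit block of an adversarial catalytic tape need not encode an element of $\F_p$. So ``arbitrary content is an arbitrary initial value'' needs its own argument, and this affects both of your routes.
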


\begin{claim}\label{clm:compLperp}
    There is a $\CL$ algorithm that, given $G$ and $N$ in unary, outputs a symmetric matrix $\widehat{L}^{+}$ with $\poly(n)$-bit rational entries and $\big\|\widehat{L}^{+}-L_G^{+}\big\|\le 2^{-N}$.
\end{claim}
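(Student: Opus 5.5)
We use the standard identity for a connected graph: $L_G + J/n$ is invertible, where $J=\one\one^{\intercal}$, and
\[
L_G^{+} = (L_G + J/n)^{-1} - J/n .
\]
This holds because $L_G$ and $J/n$ act on orthogonal complementary subspaces, $\mathrm{im}(L_G)=\one^\perp$ and $\mathrm{span}(\one)$. It therefore suffices to approximate the inverse of the positive definite integer-entry matrix $A := L_G + J/n$, or of $nA$ to keep entries integral. Its eigenvalues lie in $[\lambda_2(L_G),\, 2n]$. Since $G$ is connected and unweighted, $\lambda_2(L_G)\ge 1/\poly(n)$; the standard bound is $\lambda_2 \ge 4/(n\cdot\mathrm{diam}) \ge 4/n^2$. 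So the condition number of $A$ is at most $\poly(n)$.

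We approximate $A^{-1}$ by a truncated Neumann series. Set $\alpha := 1/(4n)$, a power-of-two-scaled rational, so that $B := I - \alpha A$ satisfies $0 \preccurlyeq B \preccurlyeq (1-\alpha\lambda_2)I$. Hence $\|B\|\le 1-\delta$ with $\delta = \alpha\lambda_2 \ge 1/\poly(n)$. Then
\[
A^{-1} = \alpha\sum_{k\ge 0} B^k ,
\]
and truncating after $K$ terms incurs error at most $\alpha(1-\delta)^K/\delta$. Choosing $K = \poly(n)\cdot N$ makes this at most $2^{-N}$. We output
\[
\widehat{L}^{+} := \alpha\sum_{k=0}^{K-1} B^k - J/n ,
\]
which is symmetric with rational entries.

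For the $\CL$ computation, note that each power $B^k$ with $k\le K$ (given in unary, since $K=\poly(n,N)$) is computable entrywise in $\CL$ by \Cref{thm:BCKLS}. Here $B$ has rational entries with denominators powers of two times $n$, so we scale by $(4n^2)^k$ to work with integer matrices. The entries of $B^k$ then have bit-length $O(k\log n)=\poly(n,N)$. Alternatively, the whole sum is a $\TC^1$ computation: iterated matrix products and iterated addition of $\poly$ many $\poly$-bit numbers are both in $\TC^0/\TC^1$. So directly invoking the first part of \Cref{thm:BCKLS} on the logspace-uniform circuit computing $\sum_k B^k$ with exact rational arithmetic gives the entries bit by bit. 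An equivalent and cleaner route is to form the block matrix $\begin{pmatrix} B & I\\ 0 & I\end{pmatrix}$, whose $K$-th power has top-right block $\sum_{k<K}B^k$, so a single matrix power suffices.

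The main obstacle is bookkeeping rather than conceptual. We must ensure the output entries have $\poly(n)$ bits (with $N$ in unary, so the bit count is $\poly(n,N)$), and we must justify the spectral-gap lower bound that controls $K$. If $\poly(n)$ bits are strictly required independent of $N$, we would instead round each entry of the sum to $N+O(\log n)$ bits, which perturbs the operator norm by at most $n2^{-N-O(\log n)}$. Rounding is logspace given the $\CL$-computed bits, since we can recompute any bit on demand. Splitting the error budget as $2^{-N-1}$ for truncation and $2^{-N-1}$ for rounding gives the claimed bound.
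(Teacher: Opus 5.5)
Your proposal is correct and follows essentially the paper's route: a truncated Neumann series for $I$ minus a rescaled Laplacian, a $1/\poly(n)$ lower bound on $\lambda_2(L_G)$ to justify truncating at $K=\poly(n)\cdot N$ terms, and \Cref{thm:BCKLS} to compute the powers in $\CL$; the only difference is cosmetic, in that you remove the kernel by adding $J/n$ and subtracting it at the end, whereas the paper sandwiches the series between copies of $\Pi$. One small correction: since $(L_G+J/n)\one=\one$, the least eigenvalue of $A$ is $\min(\lambda_2(L_G),1)$ rather than $\lambda_2(L_G)$ (these differ, e.g., for the complete graph), so take $\delta=\alpha\min(\lambda_2(L_G),1)$, which is still at least $1/\poly(n)$ and leaves the rest of your argument unchanged.
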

\begin{proof}
    WLOG assume $G$ is connected (as otherwise we can apply the algorithm to each connected component). Write $L:=L_G$ and $M:=I-\tfrac{L}{2n}$. As $G$ is simple, $\lambda_{\max}(L)\le 2(n-1)<2n$, so
    $M$ is symmetric with $M\one=\one$ and, on $\one^{\perp}$, eigenvalues in $[\tfrac1n,1-\gamma]$
    where $\gamma:=\lambda_{\min}^{+}(L)/(2n)$. A standard spectral-gap bound gives
    $\lambda_{\min}^{+}(L)\ge 1/n^{2}$ for connected $G$, and hence $\gamma\ge \frac{1}{2n^{3}}$.
    
    Since $\Pi$ commutes with $M$ and $L=2n(I-M)$ is invertible on $\one^{\perp}$,
    \[
    L^{+}=\frac{1}{2n}\Pi\Big(\sum_{k\ge 0}M^{k}\Big)\Pi ,
    \]
    the series is converging on $\one^{\perp}$ as $\|M|_{\one^{\perp}}\|\le 1-\gamma<1$. A standard analysis shows that we can truncate the power series at 
    $K:=\poly(n)\cdot N$ terms and obtain error $2^{-N}$, so we output a $\poly(n)$-bit rounding of $\widehat{L}^{+}:=\tfrac{1}{2n}\Pi\big(\sum_{k=0}^{K}M^{k}\big)\Pi$. Since $M$ can be explicitly computed given $G$ in logspace and matrix powering with $\poly(n)$-bit entries is in $\TC^1$ and hence $\CL$ by~\Cref{thm:BCKLS}, we obtain the desired result by composition.
\end{proof}
We recall a basic fact that the sampling probabilities are nontrivial. Recall that $p_e := \min\big(1, s r_e\big).$

\begin{fact}\label{fct:pe-nonneg}
    For a simple graph $G$, we have $p_e \ge 1/n$ for every $e\in E$.
\end{fact}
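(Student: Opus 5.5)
Since $p_e=\min(1,s r_e)$ with $s=16\ep^{-2}\log n\ge 1$, whenever $p_e=1$ the claim is trivial. In the remaining case $p_e=s r_e\ge r_e$, so it suffices to show $r_e\ge 1/n$ for every edge $e$ of a simple connected graph $G$. (The reduction to connected $G$ made earlier is harmless here, since effective resistance is determined by the component containing $e$.)

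The plan is to lower bound the effective resistance through the Loewner order. Recall $r_e=b_e^{\intercal}L_G^{+}b_e$, and that $b_e\perp \one$ lies in $\mathrm{im}(L_G)$. Let $K_n$ be the complete graph on the same vertex set. Because $G$ is simple, $G$ is a subgraph of $K_n$, and therefore $L_G\preccurlyeq L_{K_n}$. Both Laplacians have kernel exactly $\mathrm{span}(\one)$ (using connectivity of $G$), so restricted to $\one^{\perp}$ both are positive definite. Operator monotonicity of inversion on positive definite matrices then gives $L_{K_n}^{+}\preccurlyeq L_G^{+}$ on $\one^{\perp}$.

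Hence $r_e\ge b_e^{\intercal}L_{K_n}^{+}b_e$. Since $L_{K_n}=nI-\one\one^{\intercal}=n\Pi$, we have $L_{K_n}^{+}=\tfrac1n\Pi$, and so
\[
b_e^{\intercal}L_{K_n}^{+}b_e=\tfrac1n\|b_e\|_2^2=\tfrac2n.
\]
This gives $r_e\ge 2/n\ge 1/n$, and therefore $p_e\ge \min(1,s/n)\ge 1/n$.

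The only step needing care is the inversion of the Loewner inequality for pseudoinverses. I would handle it by restricting both matrices to the common invariant subspace $\one^{\perp}$, where the usual fact $A\preccurlyeq B\Rightarrow B^{-1}\preccurlyeq A^{-1}$ for positive definite $A,B$ applies. This is exactly where connectivity (equal kernels) is used.

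Alternatively, one could avoid operator monotonicity entirely. By Rayleigh monotonicity, adding edges to a network only decreases effective resistance, so $r_e\ge R^{K_n}_{\mathrm{eff}}(e)=2/n$.
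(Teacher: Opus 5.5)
\textbf{Verdict.} Your proof is correct. The paper states this as a basic fact and gives no proof, so there is no competing argument to compare against. Your comparison $L_G \preccurlyeq L_{K_n}$, followed by inversion on $\one^{\perp}$ (or, equivalently, Rayleigh monotonicity together with $R^{K_n}_{\mathrm{eff}}(e)=2/n$), is the standard way to fill this gap. The remaining steps are also sound: $s\ge 1$ because $\ep<1/2$ and $n\ge 2$, and for a disconnected graph you get $r_e\ge 2/n_C\ge 2/n$, where $n_C$ is the size of the component containing $e$.

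\textbf{Your bound is stronger than the stated fact.} What you actually prove is $r_e\ge 2/n$, and that is what the paper really relies on. In the proof of \Cref{cor:CLqe}, the paper needs $r_e\ge 1/n$ to conclude $\widetilde r_e\le r_e+\tfrac{1}{2n}\le\tfrac32 r_e$. This does not follow from the fact as stated: when $p_e=1$, the bound $p_e\ge 1/n$ says nothing about $r_e$, and when $p_e<1$ it only gives $r_e\ge 1/(sn)$. So your argument covers both places where the fact is used: the bound $1/q_e\le n$ in \Cref{clm:Wnorm}, and the resistance bound in \Cref{cor:CLqe}.

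\textbf{Optional simplification.} You can avoid the equal-kernel bookkeeping by using the variational formula
\[
b^{\intercal}L^{+}b=\sup_{x:\,x^{\intercal}Lx>0}\frac{(b^{\intercal}x)^2}{x^{\intercal}Lx}, \qquad b\in\mathrm{im}(L).
\]
With this, monotonicity in $L$ is immediate for any graph, connected or not. The only observation needed is that $b_e\perp\ker L_G$, so vectors in $\ker L_G$ contribute $0$ to the sup for $K_n$.
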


Then we have that we can compute a good set of probabilities by approximating $L_G^+$. Recall that a good set of probabilities $\{q_e\}_{e\in E}$ satisfies $q_e \in [p_e, 3p_e], q_e \leq 1,$ and $q_e = 2^{-j_e}$ for some $j_e$ for each $e$.
\begin{corollary}\label{cor:CLqe}
    There is a $\CL$ algorithm that, given a graph $G$, outputs a good set of probabilities $\{q_e\}_{e\in E}$.
\end{corollary}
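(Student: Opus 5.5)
We will compute, for each edge, an approximation $\widehat r_e$ of the effective resistance and round $s\widehat r_e$ up to a power of two. First, invoke \Cref{clm:compLperp} with $N$ a sufficiently large polynomial (say $N=10\log n$ suffices; any $N=O(\log n)$ or polynomial in $n$ works) to get $\widehat L^+$ with $\|\widehat L^+-L_G^+\|\le 2^{-N}$. Then set $\widehat r_e := b_e^\intercal \widehat L^+ b_e$, which is a single entry-combination $\widehat L^+_{xx}+\widehat L^+_{yy}-2\widehat L^+_{xy}$ and hence computable in $\CL$ by composing with the claim. Since $r_e=b_e^\intercal L_G^+ b_e$ and $\|b_e\|_2^2=2$, we get $|\widehat r_e-r_e|\le 2\cdot 2^{-N}$.

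Next, lower-bound $r_e$ so that this additive error is a small multiplicative error. Since $\lambda_{\max}(L_G)<2n$, we have $L_G^+\succcurlyeq \frac{1}{2n}\Pi$, and $b_e\perp \one$, so $r_e\ge \|b_e\|^2/(2n)=1/n$. (This is consistent with \Cref{fct:pe-nonneg}.) Choosing $N$ with $2^{-N+1}\le \frac{1}{4n}$ gives $\widehat r_e\in[r_e(1-1/4), r_e(1+1/4)]$. Define $\widetilde r_e:=\tfrac{4}{3}\widehat r_e\in[r_e,\tfrac{5}{3}r_e]$; in $\CL$ this is an exact rational computation with $\poly(n)$ bits.

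Finally define $q_e:=1$ if $s\widetilde r_e\ge 1$, and otherwise $q_e:=2^{-j_e}$ with $j_e$ the largest integer such that $2^{-j_e}\ge s\widetilde r_e$, found by comparing $s\widetilde r_e$ against $2^{-j}$ for $j=0,1,\dots,O(\log n)$ (the range is logarithmic since $s\widetilde r_e\ge s/n\ge 1/n$). We then check goodness. In the case $q_e=1$, we have $q_e\le 1$, and $p_e=\min(1,sr_e)\le 1=q_e$. Also $sr_e\ge \tfrac35 s\widetilde r_e\ge \tfrac35$, so $3p_e\ge 1$. In the other case, $q_e\ge s\widetilde r_e\ge sr_e=p_e$ (here $sr_e<1$), and $q_e<2s\widetilde r_e\le \tfrac{10}{3}sr_e$. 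This last bound is slightly above $3p_e$, so the constants must be tightened: use $\widetilde r_e\in[r_e,1.4 r_e]$ by taking $N$ larger, so that $\widehat r_e$ is within factor $1\pm 1/20$ and $\widetilde r_e:=\widehat r_e/(1-1/20)\le \tfrac{21}{19}r_e$, giving $q_e<2.3\,sr_e\le 3p_e$. The main (mild) obstacle is exactly this constant bookkeeping together with the lower bound $r_e\ge 1/n$; everything else is composition of $\CL$ computations, using that a logspace post-processing of a $\CL$-computable function remains in $\CL$.
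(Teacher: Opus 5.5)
Your proposal is correct and follows the paper's route: approximate $L_G^+$ via \Cref{clm:compLperp}, form $\widehat r_e=b_e^\intercal\widehat L^+b_e$, use $r_e\ge 1/n$ to turn the additive error into a one-sided multiplicative overestimate, and round $s\widetilde r_e$ up to a power of two. The differences are cosmetic. The paper adds $1/(4n)$ to get $\widetilde r_e\le\frac32 r_e$ already with $N=\lceil\log_2(8n)\rceil$, whereas your multiplicative rescaling needs a smaller error (your $21/19$ fix); you also justify $r_e\ge 1/n$, which the paper uses without proof.
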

\begin{proof}
    Since $r_e = b_e^{T} L^+ b_e$, set estimate $\widehat{r}_e = b_e^{T} \widehat{L}^+ b_e$. Because $||b_e||^2 = 2$, we have that $|\widehat{r}_e - r_e|  \leq 1/(4n)$ (using $N = \lceil \log_2(8n) \rceil$ in \Cref{clm:compLperp}):
    $$|\widehat{r}_e - r_e| = |b_e^{T} (\widehat{L}^+ - L^+) b_e| \leq ||b_e||^2 ||\widehat{L}^+ - L^+|| \leq \frac{1}{4n}.$$
    
    Adding $1/(4n)$ to the estimate, we obtain an estimate $\widetilde{r}_e$  of $r_e$ satisfying $r_e \leq \widetilde{r}_e \leq \frac{3}{2} r_e$, where the last inequality uses $r_e \geq 1/n$.

    Thus, using $\widetilde{p}_e = \min (1, s \widetilde{r}_e)$ as the estimate, we have $p_e \leq \widetilde{p}_e \leq \frac{3}{2} p_e$. Let $q_e$ be the smallest number of the form $2^{-j}, j \in \mathbb{N}$ satisfying $q_e \geq \widetilde{p}_e$. Then 
    $$p_e \leq q_e < 2 \widetilde{p}_e \leq 3 p_e$$
    and $q_e \leq 1$. Thus, $\{q_e\}_{e \in E}$ is good. The construction of $\{q_e\}_{e \in E}$ is in $\CL$ because of \Cref{clm:compLperp}, and since all remaining operations can be performed in logspace. 
\end{proof}

With this we can prove that the potential function can be computed in $\CL$.

\begin{claim}\label{clm:Wnorm}
Let $\sigma=(S,\chi)$ be any state and $r\in\zo$, and define
\[
W_r := (-1)^r\cdot c\log n\sum_{e\in S}\Big(\tfrac{\chi_e}{q_e}-1\Big)b_eb_e^\intercal
+ \log n\sum_{e\notin S}b_eb_e^\intercal .
\]
Then $W_r$ is an integer matrix with entries of magnitude at most
    $10n^5$ (and $\|W_r\|\le 10n^5$), and there is a logspace algorithm that, given $G$, $\{q_e\}$,
    $\sigma$, and $r$, outputs $W_r$.
\end{claim}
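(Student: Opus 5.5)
Integrality is the first thing to verify, and it comes from how the parameters were fixed. We assumed $\log n$ is an integer and $\ep=2^{-\ell}$, so $c=4\ep^{-1}=2^{\ell+2}$ is an integer, and so is $c\log n$. The matrices $b_eb_e^\intercal$ are integer matrices whose entries lie in $\{-1,0,1\}$. Since $\{q_e\}$ is good, $q_e=2^{-j_e}$, which makes $\chi_e/q_e\in\{0,2^{j_e}\}$ an integer and $\chi_e/q_e-1$ an integer as well. Every coefficient in the definition of $W_r$ is therefore an integer, and so $W_r$ is an integer matrix.

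For the magnitude bound, we bound each coefficient. \Cref{fct:pe-nonneg} gives $q_e\ge p_e\ge 1/n$, so $|\chi_e/q_e-1|\le n$. We then need to control $c$. Without loss of generality we may take $\ep\ge 1/n$: if $\ep<1/n$, the target size $O(n\ep^{-2}\log n)$ already exceeds $\binom n2$, and outputting $G$ itself is a valid answer. Under that assumption $c\le 4n$, so each coefficient has absolute value at most $c\log n\cdot n\le 4n^2\log n$. Entry $(x,y)$ of $W_r$ receives contributions only from edges incident to $x$, which is at most $n$ edges (and the off-diagonal entry receives exactly one, from the edge $\{x,y\}$). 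The absolute value of any entry is therefore at most $4n^3\log n\le 10n^5$. For the norm, note that a symmetric matrix has operator norm at most $n$ times its maximum entry, which gives $4n^4\log n\le 10n^5$ after absorbing constants. Alternatively, we can sum $|{\rm coef}|\cdot\|b_eb_e^\intercal\|=2|{\rm coef}|$ over at most $n^2/2$ edges and obtain $4n^4\log n$. If the precise constants fail to fit, the fallback is to state the bound as $O(n^5)$, since only the polynomial bit-length is used later.

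For logspace computability, each entry $(x,y)$ can be produced by iterating over the edges $e\ni x$ while keeping a running integer sum of $O(\log n)$ bits. For each edge we read off from the input whether $e\in S$, the bit $\chi_e$, and $j_e$. We then compute the coefficient $(-1)^r c\log n(\chi_e2^{j_e}-1)$ or $\log n$ and add $\pm$ it to the sum. All of these numbers have $O(\log n)$ bits, so the arithmetic fits in logspace. We print the entries of $W_r$ in order.

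The main obstacle is the norm/entry bound, which depends on bounding $c$, that is, on a lower bound for $\ep$. I would make this dependence explicit through the WLOG reduction above: when $\ep<1/n$ the trivial output $H=G$ suffices.
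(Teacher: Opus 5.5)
Your proposal is correct and follows essentially the same argument as the paper. Integrality holds because $c\log n$, $\log n$ and $1/q_e=2^{j_e}$ are integers. The coefficient bound $|\chi_e/q_e-1|\le 1/q_e\le n$ comes from \Cref{fct:pe-nonneg}, together with the WLOG assumption $\ep\ge 1/n$ (justified by outputting $H=G$). Logspace computability follows from arithmetic on $O(\log n)$-bit integers.

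The only difference is the order of the magnitude bounds. You bound the entries first and then take the norm as $n$ times the maximum entry, while the paper bounds $\|W_r\|$ directly by the triangle inequality over edges (your stated alternative) and then bounds the entries by the norm. Either way the constants stay below $10n^5$, so your $O(n^5)$ fallback is not needed.
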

\begin{proof}
For the first item, recall that $c\log n$ and $\log n$ are integers, and each
$1/q_e = 2^{j_e}$ are integers, and each $b_eb_e^\intercal$ is an integer
matrix, so $W_r$ is an integer matrix. For the magnitude, each coefficient
satisfies $|\chi_e/q_e - 1| \le 1/q_e \le 1/p_e \le n$
(\Cref{fct:pe-nonneg}), and $\|b_eb_e^\intercal\| = 2$, so using
$\ep \ge 1/n$\footnote{We can assume this WLOG since otherwise outputting $H=G$ suffices.} (hence $c\log n = 4\ep^{-1}\log n \le 4n\log n$) and
$|E|\le n^2$,
\[
\|W_r\| \le c\log n\cdot n\cdot 2\cdot |S| + \log n\cdot 2\cdot|E|
=O(n^5)
\]
and every entry of a matrix is bounded in magnitude by its norm.
Constructing $W_r$ explicitly is integer arithmetic on $O(\log n)$-bit
quantities and is hence in logspace. 
\end{proof}

\begin{proposition}\label{prop:phicomp}
There is a catalytic logspace algorithm that, given $G$, $\{q_e\}_{e\in E}$,
a state $\sigma = (S,\chi)$, and $N\le\poly(n)$ in unary, outputs $\widehat\Phi$ with $\poly(n)$ bits such that
$|\widehat\Phi-\Phi(\sigma)|\le 2^{-N}$.
\end{proposition}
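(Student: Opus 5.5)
Our plan is to reduce the computation of $\Phi(\sigma)$ to approximating $\Tr\exp(\cdot)$ of two explicitly computable matrices via truncated Taylor series, with matrix powering done in $\CL$ by~\Cref{thm:BCKLS}. Since $\Lambda_e = L_G^{+/2} b_e b_e^\intercal L_G^{+/2}$, we have
\[
\pm c M_\sigma + V_\sigma = L_G^{+/2} W_r L_G^{+/2}
\]
with $W_r$ as in~\Cref{clm:Wnorm} ($r=0$ for the $+$ sign, $r=1$ for the $-$ sign). By cyclicity of the trace, $\Tr\big((L_G^{+/2}W_rL_G^{+/2})^k\big)=\Tr\big((L_G^{+}W_r)^k\big)$ for every $k\ge 1$. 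Hence we can avoid square roots entirely and work with $A_r := L_G^+W_r$, using
\[
\Tr\exp(L_G^{+/2}W_rL_G^{+/2}) = n + \sum_{k\ge 1}\frac{\Tr(A_r^k)}{k!},
\]
where the $n$ accounts for the $k=0$ identity term.

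The steps are as follows.
\begin{enumerate}
\item Compute $W_r$ in logspace using~\Cref{clm:Wnorm}.
\item Compute $\widehat L^+$ via~\Cref{clm:compLperp} with precision parameter $N'=\poly(n)\cdot N$, to be fixed later, and form $\widehat A_r := \widehat L^+W_r$.
\item Output
\[
\widehat\Phi := \sum_{r\in\{0,1\}}\Big(n+\sum_{k=1}^{K}\frac{\Tr(\widehat A_r^k)}{k!}\Big),
\]
rounded to $\poly(n)$ bits, where $K=\poly(n)\cdot N$. Each power $\widehat A_r^k$ is a power of an explicit matrix with $\poly(n)$-bit rational entries, so by~\Cref{thm:BCKLS} all the powers (and then the traces and the weighted sum, which are $\TC^1$ arithmetic) are computable in $\CL$, and composition preserves this.
\end{enumerate}

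For the error analysis, first bound the operator norms. Since $\lambda^+_{\min}(L_G)\ge 1/n^2$, we get $\|L_G^+\|\le n^2$, and therefore $\|A_r\|\le 10n^7 =: T$; the same bound holds for $\widehat A_r$ up to an additive $1$.

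There are two error sources. The first is \emph{truncation}: the tail $\sum_{k>K} nT^k/k!$ is at most $2^{-N-2}$ once $K \ge 2e^2T + N + \log n$, which is $\poly(n)N$. The second is \emph{perturbation}: by~\Cref{fct:trace-perturb},
\[
|\Tr(A_r^k)-\Tr(\widehat A_r^k)| \le knT^{k-1}\,\|W_r\|\,2^{-N'},
\]
and summing over $k\le K$ against $1/k!$ gives a total of at most $n e^{T}\cdot 10n^5\cdot 2^{-N'}$. Choosing $N' = N + O(T) = \poly(n)N$ makes this at most $2^{-N-2}$. A final rounding to $N+2$ bits adds a further $2^{-N-2}$.

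The main obstacle we anticipate is the magnitude blowup. The true matrix $L_G^{+/2}W_rL_G^{+/2}$ has much smaller norm (about $\log n\cdot\poly(1/\ep)$), but working with the non-symmetric $A_r$ forces us to use the crude bound $T=\poly(n)$. This makes $e^T$ exponentially large, so the precision parameters must be chosen correspondingly large. This remains fine because $\poly(n)$-bit entries and $\poly(n)$ matrix powers are allowed by~\Cref{thm:BCKLS}. One subtlety to check is that~\Cref{fct:trace-perturb} requires $T\ge 1$ and a common bound on both $\|A_r\|$ and $\|\widehat A_r\|$, which we arrange by taking $T = 10n^7+1$.
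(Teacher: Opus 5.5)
Your proposal is correct and follows essentially the same route as the paper. Both arguments use cyclicity of the trace to replace $L_G^{+/2}W_rL_G^{+/2}$ by $L_G^+W_r$. Both truncate the Taylor series at $\poly(n)+O(N)$ terms using the crude bound $T=O(n^7)$, approximate $L_G^+$ via \Cref{clm:compLperp} with error controlled by \Cref{fct:trace-perturb}, and do the powering in $\CL$ via \Cref{thm:BCKLS}. The differences are cosmetic: you sum the perturbation error against $1/k!$ to get a single $e^{T}$ factor rather than bounding each term separately, and you round once at the end instead of per term.
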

\begin{proof}
Let $W_0, W_1$ be as in~\Cref{clm:Wnorm} and let
$A_r := L_G^{+/2}W_rL_G^{+/2}$, so that by~\Cref{def:potential} and
$\Tr(e^X) = \sum_{k\ge 0} \Tr(X^k)/k!$,
\[
\Phi(\sigma) = \sum_{k\ge 0} \frac{\Tr(A_0^k)+\Tr(A_1^k)}{k!} .
\]
By cyclicity of trace, $\Tr(A_r^k)=\Tr\big((L_G^+W_r)^k\big)$ for every
$k\ge 1$. Let $T:=O(n^7)$ be such that 
$\|L_G^+W_r\|, \|A_r\| \le T$, where the bound follows from~\Cref{clm:Wnorm} and the fact that $G$ is undirected (and hence $\|L_G^+\|=O(n^2)$).
 
\textbf{Truncation.} Set 
\[k' := 10\cdot (T + N + n).\]  
Using $k!\ge (k/e)^k$ and
$\Tr(A_r^k)\le n\|A_r\|^k \le nT^k$, for every $k\ge k'$ we have
$eT/k \le 1/2$ and hence
\[
\frac{|\Tr(A_r^k)|}{k!} \le n\left(\frac{eT}{k}\right)^{k}
\le n 2^{-k},
\]
so the total contribution of all terms with $k\ge k'$, over both values of
$r$, is at most $4n\cdot 2^{-k'} \le 2^{-N-2}$.
 
\textbf{Computing one term.}
Fix $k < k'$ and $r\in\zo$, and set
$N' := 20k'\log n + N=\poly(n)$.
By~\Cref{clm:Wnorm} the matrix $W_r$ can be constructed explicitly in
logspace, and by~\Cref{clm:compLperp} we can compute $\widehat{L}^{+}$ with
$\|\widehat{L}^{+}-L_G^{+}\|\le 2^{-N'}$ in $\CL$. Then
$\|\widehat{L}^{+}W_r - L_G^+W_r\| \le 2^{-N'}\cdot \|W_r\|$ and in particular
$\|\widehat{L}^{+}W_r\| \le T+1$, so by~\Cref{fct:trace-perturb},
\[
\Big|\Tr\big((\widehat{L}^{+}W_r)^k\big) - \Tr\big((L_G^+W_r)^k\big)\Big|
\le 2^{-N - 10k'\log n}=\gamma.
\]
Computing $\Tr\big((\widehat{L}^{+}W_r)^k\big)$ exactly is iterated
multiplication of $\poly(n)$ matrices with $\poly(n)$-bit rational entries, which is
computable by a logspace-uniform threshold circuit of depth $O(\log n)$ and
hence in $\CL$ by~\Cref{thm:BCKLS}; dividing by $k!$ and rounding the
result to $N+10k'\log n$ fractional bits incurs
additional error at most $2^{-N-10k'\log n}$ per term.
 
\textbf{Combining.} Summing the (at most $2k'$) rounded terms is iterated
addition which is computable in logspace. The total error
is at most
\[
\underbrace{2^{-N-2}}_{\text{truncation}}
+ \underbrace{2k'\cdot 2\cdot \gamma}_{\text{per-term}}
\le 2^{-N},
\]
and the output has $\poly(n)$ bits, as claimed.
\end{proof}

Our last ingredient is a lower bound on the potential function:
\begin{claim}\label{clm:PhiLB}
For every state $\sigma$ we have $\Phi(\sigma)\ge 2n$.
\end{claim}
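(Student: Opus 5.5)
We need $\Phi(\sigma)\ge 2n$ for every state $\sigma$. Write $A_\pm := \pm cM_\sigma + V_\sigma$, so that $\Phi(\sigma)=\Tr\exp(A_+)+\Tr\exp(A_-)$. Both $A_+$ and $A_-$ are real symmetric $n\times n$ matrices. The plan is to apply convexity of the trace exponential, not monotonicity, since $-cM_\sigma$ need not be PSD and so $A_\pm$ need not dominate $0$.

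The first step uses \Cref{fct:TrExpconvex}, the convexity of $X\mapsto \Tr\exp(X)$. By midpoint convexity,
\[
\tfrac12\Tr\exp(A_+)+\tfrac12\Tr\exp(A_-)\ \ge\ \Tr\exp\!\big(\tfrac12(A_++A_-)\big)=\Tr\exp(V_\sigma).
\]
This gives $\Phi(\sigma)\ge 2\Tr\exp(V_\sigma)$.

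The second step observes that $V_\sigma=\sum_{e\notin S}Y_e=\log n\sum_{e\notin S}\Lambda_e$ is a nonnegative combination of the PSD matrices $\Lambda_e$. Hence $V_\sigma\succcurlyeq 0$, and \Cref{fact:mono} gives $\Tr\exp(V_\sigma)\ge \Tr\exp(0)=\Tr I=n$. Combining the two steps yields $\Phi(\sigma)\ge 2n$.

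A more elementary route avoids Klein's lemma. One can use $\Tr e^{X}\ge \sum_i(1+\lambda_i(X))=n+\Tr X$, which holds because $e^u\ge 1+u$. Summing over $X=A_\pm$ gives $\Phi(\sigma)\ge 2n+2\Tr V_\sigma\ge 2n$, since the $\Tr M_\sigma$ terms cancel and $\Tr V_\sigma\ge 0$. I expect no real obstacle; the only point needing care is that the $M_\sigma$ contributions cancel between the two branches, and both arguments handle this.
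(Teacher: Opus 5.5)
Your main argument is the paper's proof: midpoint convexity of $X\mapsto\Tr\exp(X)$ (\Cref{fct:TrExpconvex}) gives $\Phi(\sigma)\ge 2\Tr\exp(V_\sigma)$, and then $V_\sigma\succcurlyeq 0$ together with \Cref{fact:mono} gives $2\Tr\exp(V_\sigma)\ge 2n$. Your alternative route, using $e^u\ge 1+u$ to get $\Phi(\sigma)\ge 2n+2\Tr V_\sigma\ge 2n$, is also correct and avoids Klein's lemma entirely.
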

\begin{proof}
    \[
    \Phi(\sigma) = \Tr\exp(cM_\sigma+V_{\sigma})+\Tr\exp(-cM_\sigma+V_{\sigma})
    \ge 2\Tr\exp(V_\sigma) \ge 2\Tr\exp(0) = 2n,
    \]
    where the first inequality is~\Cref{fct:TrExpconvex} and the second
    is~\Cref{fact:mono} together with $V_\sigma \succcurlyeq 0$.
\end{proof}

We can then prove the item.
\begin{proof}[Proof of~\Cref{itm:comp}]
Let $\Phi$ be the potential function of~\Cref{def:potential}, let
$A(\Phi,\sigma)$ denote the output of the algorithm of~\Cref{prop:phicomp}
on $(G,\{q_e\},\sigma)$ with $N := n^2+2$, and define
\[
\Phi'(\sigma) := A(\Phi,\sigma) + (|E|-|S|)\cdot 2^{-n^2},
\]
i.e.\ we add $2^{-n^2}$ for each edge not yet decided. Since a catalytic
machine computing a function must return the same output on every initial
catalytic tape, $A(\Phi,\sigma)$ --- and hence $\Phi'$ --- is a
well-defined function of $\sigma$. By~\Cref{prop:phicomp},
\begin{equation}\label{eq:abs-err}
\big|A(\Phi,\sigma) - \Phi(\sigma)\big| \le 2^{-n^2-2}
\qquad\text{for every state }\sigma,
\end{equation}
and in particular $\Phi'(\sigma) \ge \Phi(\sigma) - 1 > 0$
by~\Cref{clm:PhiLB}.
 
\textbf{$\Phi'$ satisfies~\Cref{itm:comp}} $A(\Phi,\sigma)$ and the slack term are computable in $\CL$, hence
$\Phi'$ is computable exactly in $\CL$ (and the output has bitlength $\poly(n)$).
 
\textbf{$\Phi'$ satisfies~\Cref{itm:super}.} Let $\sigma=(S,\chi)$ and
$e\notin S$; extending $\sigma$ decreases the slack term by exactly
$2^{-n^2}$. Then
\begin{align*}
\E_{b\sim\Bern(q_e)}\big[\Phi'(\sigma\cup(e\mapsto b))\big]
&= \E_b\big[A(\Phi,\sigma\cup(e\mapsto b))\big]
   + (|E|-|S|-1)\cdot 2^{-n^2}\\
&\le \E_b\big[\Phi(\sigma\cup(e\mapsto b))\big] + 2^{-n^2-2}
   + (|E|-|S|-1)\cdot 2^{-n^2} && \text{\Cref{eq:abs-err}}\\
&\le \Phi(\sigma) + 2^{-n^2-2} + (|E|-|S|-1)\cdot 2^{-n^2}
   && \text{\Cref{itm:super} for $\Phi$}\\
&\le A(\Phi,\sigma) + 2^{-n^2-1} + (|E|-|S|-1)\cdot 2^{-n^2}
   && \text{\Cref{eq:abs-err}}\\
&\le \Phi'(\sigma).
\end{align*}
 
\textbf{$\Phi'$ satisfies~\Cref{itm:impliesgood}}
Let $\sigma$ be a complete state with $\Phi'(\sigma)\le n^K\Phi'(\emptyset)$,
where $2^{K}\ep<1/2$. As $\sigma$ is complete its slack vanishes, so
by~\Cref{eq:abs-err},
\[
\Phi(\sigma) \le \Phi'(\sigma) + 2^{-n^2-2} \le n^K\Phi'(\emptyset)+1 ,
\qquad
\Phi'(\emptyset) \le \Phi(\emptyset) + 2^{-n^2-2} + n^2\cdot 2^{-n^2}
\le \Phi(\emptyset)+1 .
\]
Combining, and using $\Phi(\emptyset)\ge 2n$ (\Cref{clm:PhiLB}),
\[
\Phi(\sigma) \le n^K\Phi(\emptyset) + n^K + 1
\le n^K\Phi(\emptyset) + n^{K-1}\Phi(\emptyset)
\le n^{K+1}\Phi(\emptyset),
\]
so~\Cref{itm:impliesgood} for $\Phi$ applied with parameter $K+1$ shows $H(\sigma)$ is an
$\ep\cdot 2^{(K+1)/2}\le \ep\cdot 2^K$-spectral approximation of $G$ with
$O(n\ep^{-2}\log n)$ edges.

Finally, we let the potential function $\Phi$ of~\Cref{thm:pot} to be this function $\Phi'$, so we are done.
\end{proof}

\section{From the potential function to a prefix measure}\label{sec:measure}
Using the potential function, we define a measure on strings.
For a graph $G$, fix a global order $e_1,\ldots,e_m$ over the edges and a good set of probabilities $q_e$, and recall $q_e = 2^{-j_e}$ for $j_e\in \N$. For a string $x$ in
\[
\{0,1\}^{j_1}\times \ldots \{0,1\}^{j_m} =: \zo^B
\]
we define the state $\sigma_x$ in the natural way: we include edge $e$ if and only if the bits of $x$ in block $e$ are equal to $1^{j_e}$ (and if $j_e=0$ i.e. $e$ is a non-randomized edge, we always include the edge). Note that if we think of $x$ as a random string, this corresponds to including $e$ with probability $q_e$, exactly as in the leverage-score sparsifier.
\begin{theorem}\label{thm:measure}
    Fix $\ep\in (0,1/2)$, $K\in \N$ and a graph $G$. There is a function $f:\zo^*\ra\R$ defined on prefixes of $\zo^B$ with the following properties:
    \begin{enumerate}
        \item\label{itm:semi} $f$ is a semimeasure.
        \item\label{itm:comp2} Given $G$, $\eps$, and a prefix $x$, the value $f(x)$ is computable in $\CL$.
        \item\label{itm:impliesgood2} For a string $x\in \zo^B$, if $f(x) \le  n^K\cdot 2^{-|x|}$ then $\sigma_x$ defines a sparsifier $H(\sigma_x)$ with $O(n\ep^{-2}\log n)$ edges and $L_{H(\sigma_x)}\approx_{2^K \ep}L_G$.
    \end{enumerate}
\end{theorem}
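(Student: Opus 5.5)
We define $f$ directly from the potential $\Phi$ of \Cref{thm:pot}, taken to be the $\CL$-computable version $\Phi'$. The key difficulty is that a prefix $x$ of $\zo^B$ may end partway through the block of some edge $e_k$. There the edge is not yet decided, but we have partial information: if the bits of block $e_k$ seen so far are all $1$, the edge may still be included; otherwise it is certainly excluded.

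For a prefix $x$ consisting of complete blocks for $e_1,\ldots,e_{k-1}$ and a partial block $u\in\zo^{i}$ of $e_k$ with $0\le i<j_k$, we set:
\[
f(x) := 2^{-|x|}\cdot\frac{1}{\Phi(\emptyset)}\cdot
\begin{cases}
\Phi(\sigma_{<k}) & \text{if } i=0,\\
\Phi(\sigma_{<k}\cup(e_k\mapsto 0)) & \text{if } u\ne 1^i,\\
2^{i}\big[q_e\Phi(\sigma_{<k}\cup(e_k\mapsto 1))+(2^{-i}-q_e)\Phi(\sigma_{<k}\cup(e_k\mapsto 0))\big] & \text{if } u=1^i,\ i\ge 1.
\end{cases}
\]
Here $\sigma_{<k}$ is the state determined by the first $k-1$ blocks, and $q_e=2^{-j_k}$. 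The third case is exactly the conditional expectation of $\Phi$ over the remaining bits of the block, given that the prefix is $1^i$. Non-randomized edges ($j_e=0$) contribute no bits and are simply folded into $\sigma_{<k}$ as included.

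The plan is to verify the three items in order.

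\textbf{Item~\ref{itm:semi} (semimeasure).} First, $f(\emptyset)=1$ by construction. For the inequality $f(u0)+f(u1)\le f(u)$, multiply through by $2^{|u|+1}\Phi(\emptyset)$. Inside a block, this reduces to a martingale identity: for $u=1^i$ with $i+1<j_k$, the child $1^{i+1}$ gets the refined conditional average and the child $1^i0$ gets $\Phi(\cdot\mapsto 0)$. Their average is exactly the conditional expectation at $1^i$, so equality holds. The same equality holds when $u\ne 1^i$.

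At block boundaries, completing block $e_k$ gives children whose average is $q_e\Phi(\sigma\mapsto1)+(1-q_e)\Phi(\sigma\mapsto 0)$ at $i=0$, or the corresponding conditional version. Inequality is needed only at $i=0$, where $\E_{b\sim\Bern(q_e)}\Phi(\sigma\cup(e\mapsto b))\le\Phi(\sigma)$ by \Cref{itm:super}. Finishing a block maps the child to the fully decided state, which matches the $i=0$ formula of the next block. Some index bookkeeping is needed to check that the intermediate values are consistent, but there is no real obstacle.

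We also need $f\le 1$ and $f\ge 0$. Nonnegativity holds because $\Phi'>0$. The bound $f\le1$ follows from the semimeasure inequality by induction.

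\textbf{Item~\ref{itm:comp2} (computability).} Given $x$, a logspace machine determines $k$, $i$, and whether $u=1^i$, using the $q_e$ computed in $\CL$ by \Cref{cor:CLqe}. It then evaluates $\Phi$ on at most three states using \Cref{itm:comp}, and combines the results with rational arithmetic. Powers of two and a division by $\Phi(\emptyset)$ are fine, since we can output $f$ as an exact rational with $\poly(n)$-bit numerator and denominator, or as a truncated binary expansion.

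The subtle point is that the $q_e$ and $\Phi'$ must be \emph{fixed functions} independent of the catalytic tape. This holds because $\CL$ function outputs are tape-independent, as already noted in the proof of \Cref{itm:comp}. The footnote's bitlength bound follows from the $\poly(n)$ bitlengths.

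\textbf{Item~\ref{itm:impliesgood2} (small measure implies a good sparsifier).} For complete $x\in\zo^B$, we have $f(x)=2^{-|x|}\Phi(\sigma_x)/\Phi(\emptyset)$. So $f(x)\le n^K2^{-|x|}$ means exactly $\Phi(\sigma_x)\le n^K\Phi(\emptyset)$. \Cref{itm:impliesgood} then gives a $2^K\ep$-approximation with $O(n\ep^{-2}\log n)$ edges. This requires $2^K\ep<1/2$, which we take as implicit in the parameters; otherwise we rescale $\ep$.

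The main obstacle is the within-block interpolation: designing the partial-block values so that the inequality is tight inside blocks and uses the supermartingale property only once per edge. The formula above is the first thing to try.
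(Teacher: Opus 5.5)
Your proposal is correct and follows the paper's proof. Your closed-form partial-block values are exactly what the paper's definition $f(x)=\sum_{s} f(x||s)$ (summing over completions of the current block) evaluates to, and your semimeasure check, computability argument and item~3 argument all go the same way: exact equality inside a block, the supermartingale property of $\Phi$ used once at each block boundary, $\Phi$ evaluated on two states via your closed form where the paper enumerates the polynomially many block completions, and item~3 obtained by applying the pessimistic-estimator theorem directly.
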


\subsection{Defining the measure}
We first define the semimeasure, using the potential function $\Phi$ as defined in~\Cref{thm:pot}. We call the bits corresponding to $\{0,1\}^{j_i}$ block $i$. Note that every block is of length $O(\log n)$ and hence $|\chi|\le O(n^2\log n)$ by~\Cref{fct:pe-nonneg}. We define our measure as follows. First, assume the string $x$ is exactly an assignment to the first $k$ blocks. Let $\chi_x$ be the state where $\chi_e=1$ if and only if $x_{B_i}=1^{j_e}$. Then
\[
f(x) := 2^{-|x|}\frac{\Phi(\chi_x)}{\Phi(\emptyset)} = \prod_{i\in [k]}q_{e_i}\cdot \frac{\Phi(\chi_x)}{\Phi(\emptyset)}.
\]
For $x$ that consists of the first $k$ blocks and then $d<j_{k+1}$ bits of $B_{k+1}$, define
\[
f(x) = \sum_{s\in \zo^{j_{k+1}-d}}f(x||s).
\]
We encourage the reader to think of $f$ as defined over the blocks, with the bits being an implementation detail. We first show $f$ is in fact a semimeasure:
\begin{proof}[Proof of~\Cref{itm:semi}]
    Without essential loss of generality, assume that $x$ exactly fills the first $k$ blocks. Let $\chi^0$ and $\chi^1$ denote $\chi_x$ extended by setting $e_{k+1}$ to $0$ and $1$ respectively. For convenience, let $\gamma = \Phi(\emptyset)\cdot 2^{|x|}$ be a normalization factor. Since setting the first bit of block $k+1$ to $0$ ensures we will always set $\chi_{e_{k+1}}=0$, we have 
    \[\gamma f(x||0)=\sum_{s\in \{0,1\}^{j_{k+1}-1}}\gamma f(x||0||s) = 2^{j_{k+1}-1}\cdot 2^{-j_{k+1}}\Phi(\chi^0).\]
    Next, we have
    \[
    \gamma f(x||1)=\sum_{s}\gamma f(x||1||s) = 2^{-j_{k+1}}\Phi(\chi^1)+(2^{j_{k+1}-1}-1)2^{-j_{k+1}}\Phi(\chi^0)
    \]
    and hence
    \[
    \gamma f(x||0)+\gamma f(x||1) = q_{e_{k+1}} \Phi(\chi^1)+(1-q_{e_{k+1}})\Phi(\chi^0) \le \Phi(\chi_x) = \gamma f(x)
    \] 
    where the inequality follows from~\Cref{itm:super}, so we are done.
\end{proof}

Next, we show that given $G$ and $\{q_e\}_{e\in E}$, the semimeasure is computable in $\CL$.
\begin{proof}[Proof of~\Cref{itm:comp2}]
    Since $j_e=O(\log n)$, for a string $x$ that does not exactly fill the first $k$ blocks, we can enumerate over block suffixes $s$ using $O(\log n)$ space, so it suffices to compute $f$ on strings aligned to a block boundary. This follows immediately from $\Phi$ being computable in $\CL$ by~\Cref{itm:comp}.
\end{proof}

Finally, we show that any string that is not compressible according to $f$ must give a good sparsifier.
\begin{proof}[Proof of~\Cref{itm:impliesgood2}]
    Let $x$ be such a string where $f(x)\le n^K\cdot 2^{-|x|}$. This immediately implies $\Phi(\sigma_x)\le n^K \Phi(\emptyset)$. By~\Cref{itm:impliesgood} of~\Cref{thm:pot}, we have that $H$ satisfies the claimed properties.
\end{proof}

\section{In-place compression of a prefix measure}\label{sec:semicomp}

We show that a computable prefix semimeasure gives a $\CL$-computable compression scheme:
\semicomp*
For a string $x=x_1,\ldots,x_t$, we write $x_{\le i}$ for $x_1,\ldots,x_{i}$. For $y\in \zo^*$, we let $D_y$ be the dyadic interval represented by $y$, i.e. $[0.y,0.y+2^{-|y|})$.

We first define a compression scheme based on $f$, then show that we can compute it in-place. 
We use arithmetic coding~\cite{DBLP:journals/cacm/WittenNC87}, where we associate to each binary string a (unique, non-overlapping) interval in $[0,1]$, and encode a string $x$ by giving a dyadic interval that lies inside its interval. We define for every $x\in \zo^*$ an interval $I_x$, where $I_{\emptyset}=[0,1]$ and $I_{x||0}$ and $I_{x||1}$ are disjoint subintervals of $I_x$ of length $f(x||0)$ and $f(x||1)$ respectively. We adopt the convention that $I_{x||0}$ is the leftmost length-$f(x||0)$ subinterval of
$I_x$ and $I_{x||1}$ the length-$f(x||1)$ subinterval immediately to its right, so that
any slack $f(x)-f(x||0)-f(x||1)\ge 0$ sits at the right end of $I_x$.
\begin{claim}\label{clm:compI}
    There exists a $\CL$ algorithm that, given $x$, outputs (the endpoints of) $I_x$. 
\end{claim}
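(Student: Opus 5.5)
The plan is to derive an explicit closed form for the left endpoint of $I_x$ and evaluate it with the closure properties of $\CL$. Unwinding the convention that $I_{u\|0}$ is the leftmost subinterval of $I_u$ and $I_{u\|1}$ sits immediately to its right, the left endpoint $\ell(x)$ of $I_x$ for $x=x_1\cdots x_k$ should satisfy
\[
\ell(x) \;=\; \sum_{i\in[k]\,:\,x_i=1} f(x_{<i}\|0),
\]
since each step that goes right shifts the left endpoint by exactly the length of the sibling $0$-interval, and a step that goes left leaves it unchanged. The right endpoint is then $\ell(x)+f(x)$. I would prove this formula by a one-line induction on $|x|$ from the recursive definition of the intervals: $\ell(\emptyset)=0$, $\ell(u\|0)=\ell(u)$, $\ell(u\|1)=\ell(u)+f(u\|0)$. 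Because $f$ is a semimeasure, these intervals are correctly nested, but the formula itself does not depend on that.

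To compute the endpoints, I would proceed in three steps. First, for each $i\le k$ with $x_i=1$, form the prefix $x_{<i}\|0$. It is specified by the index $i$, which takes $O(\log t)$ bits, together with read-only access to $x$, so it never has to be written down. Second, evaluate $f(x_{<i}\|0)$ in $\CL$, feeding the machine computing $f$ an implicit input whose bits are produced on demand from $x$ and $i$. Third, sum these at most $t$ rationals, each of bitlength at most $|x|^{c}$, and add $f(x)$ to get the right endpoint.

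The main obstacle is composition. A $\CL$ machine has to output the bits of a sum of $t$ values that are each only bit-computable in $\CL$, and $\CL$ is only known to be closed under logspace-style composition. I would handle this in two ways. \begin{enumerate}
\item Each oracle call to $f$ is a full $\CL$ subroutine that restores the catalytic tape. We can therefore run the subroutines sequentially, reusing the same catalytic tape, and recompute any bit whenever it is needed, as in the standard composition of logspace-computable functions.
\item Iterated addition of $t$ numbers of $\poly(t)$ bits is in $\TC^0\subseteq$ logspace, with bit access to the summands.
\end{enumerate}
Composing a logspace computation with bit-oracles that are in $\CL$ stays in $\CL$, since each oracle query is answered by a catalytic subroutine that cleans up after itself.

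If $f$ is given by a common-denominator representation, as it is here (all values have the form $2^{-|x|}\Phi'/\Phi'(\emptyset)$ with $\Phi'$ being $\poly$-bit rationals), I would write the endpoints as exact rationals in that representation, to avoid rounding issues in the later arithmetic-coding steps. When $G$ is extra read-only input, it is simply passed through to every call.
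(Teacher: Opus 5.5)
Your proposal is correct and takes essentially the same approach as the paper. You derive the same closed form $a_x=\sum_{i:x_i=1} f(x_{<i}\|0)$ for the left endpoint and right endpoint $a_x+f(x)$, then compose the $\CL$-computable summands with logspace iterated addition, with your extra remarks on reusing the catalytic tape across subroutine calls and on common denominators making explicit details the paper leaves implicit.
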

\begin{proof}
Write $x_{<i}:=x_1\cdots x_{i-1}$. By our convention on defining the intervals, passing
from $I_w$ to $I_{w||1}$ shifts the left endpoint right by exactly $f(w||0)$, while passing
to $I_{w||0}$ leaves it fixed; hence letting $a_x$ be the left endpoint of $I_{x}$ we have
\[
  a_x = \sum_{i:x_i=1} f(x_{<i}||0).
\]
The right endpoint is then $a_x+f(x)$ which we can calculate in $\CL$ by assumption. Finally, the left endpoint is a sum of at most $t$ numbers $a_i=f(x_{<i}||0)$ of bitlength $t^c$ that are computable in $\CL$ and the map $(a_1,\ldots,a_t)\ra a_1+\ldots+a_t$ is computable in logspace. Thus we have that $a_x$ is computable in $\CL$ by composition. 
\end{proof}

The arithmetic encoding $A(x)$ is the largest dyadic interval contained in $I_x$, which we specify with a string $y$ where $D_y\subseteq I_x$. It is easy to see that such a $y$ exists with $|y|\le -\log f(x)+2$. The main component of the proof is showing that we can perform encoding and decoding in-place. 

\subsection{The block structure of the encoding}
Fix a string $x=x_1,\ldots,x_t$, and for $0\le i\le t$ write
\[
\lambda(i):=-\log_2 f(x_{\le i}),
\qquad
s(i):=i-\lambda(i).
\]
We assume without loss of generality that $f(x_{\le i})>0$ for every prefix under consideration. Intuitively, $s(i)$ is measuring how compressible $x_1 x_2 \dots x_i$ is, where $t$ measures the original length and $\lambda(i)$ represents (approximately) the size of the compressed representation of $x_1 x_2 \dots x_i$. Thus, $s(i)$ represents the number of bits freed up by compressing $x_1 x_2 \dots x_i$.

We first describe an encoder that outputs $A(x)$ given $x$ but does not transform $x$ into $A(x)$ in-place, then show how to implement it in-place (and how to implement the decoder). For $x_{\le i}$, let $p_i$ be the longest binary string where $D_{p_i}\supseteq I_{x_{\le i}}$,
and let
$\mu_{p_i}$ denote the midpoint of $D_{p_i}$. For $S \ge 0$, let
\[
D_{p_i}^{(S)} := \big[\mu_{p_i} - 2^{-|p_i|-S-1},\ \mu_{p_i} + 2^{-|p_i|-S-1}\big)
\]
be the interval of length $2^{-S}\cdot|D_{p_i}|$ centered at $\mu_{p_i}$,
so that $D_{p_i}^{(0)} = D_{p_i}$. We define $S(i)$ to be the largest
integer such that $D_{p_i}^{(S(i))} \supseteq I_{x_{\le i}}$.

We think of the output of the encoder on $x_1,\ldots,x_i$ as
\[
\oU_{\le i}:= \begin{cases}
    p_i & S(i)=0\\
    p_i||u||\bar{u}^{S(i)-1} & S(i)>0
\end{cases}
\]
where $u$ is a special undetermined symbol will be fixed to $\sigma \in \zo$ by future output, and we define $\oA_{\le i}$ to be the string with this value fixed. In both cases $|\oU_{\le i}| = |p_i| + S(i)$.

Note that $S(i)>0$ if the current interval $I_{x_{\le i}}$ is small relative to the containing interval $D_{p_i}$, but still contains its midpoint $\mu_{p_i}$ of $D_{p_i}$.
\begin{claim}
    There is a $\CL$ algorithm that, given $x_{\le i}$, outputs $p_i$ and $S(i)$.
\end{claim}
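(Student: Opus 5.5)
The plan is to reduce everything to \Cref{clm:compI}: once the endpoints $[a,b)$ of $I_{x_{\le i}}$ are available, computing $p_i$ and $S(i)$ needs only comparisons of $\poly(t)$-bit rationals. Both quantities have length $O(t^c)$: $p_i$ because $|D_{p_i}|\ge |I_{x_{\le i}}| = f(x_{\le i})$, which is at least $2^{-t^c}$ by the bitlength assumption, and $S(i)$ for the same reason. The algorithm therefore loops over a counter of $O(\log t)$ bits and, whenever it needs $a$ or $b$, recomputes the relevant bit through the $\CL$ algorithm of \Cref{clm:compI}. This composition is legitimate because $\CL$ is closed under composition with logspace post-processing, which is the same argument used in the proof of \Cref{clm:compI}.

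\textbf{Computing $p_i$.} The dyadic intervals containing $I_{x_{\le i}}=[a,b)$ form a chain. The interval of length $2^{-k}$ contains $[a,b)$ exactly when $\lfloor a2^k\rfloor = \lceil b2^k\rceil - 1$, that is, when $a$ and the point just below $b$ agree in their first $k$ binary digits. So $p_i$ is the longest common prefix of the binary expansions of $a$ and of $b^- := b - 2^{-L}$, where $L$ is a bound on the bitlength of $b$; here the half-open convention has to be handled correctly. To output bit $k$ of $p_i$, the algorithm first checks that $a$ and $b^-$ agree on all of bits $1,\ldots,k$ and then prints bit $k$ of $a$. Subtracting $2^{-L}$ and reading off individual bits of $a$ and $b$ are logspace operations on $\CL$-computable numbers. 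The length $|p_i|$ is the first index where the two expansions disagree, which a counter scan finds.

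\textbf{Computing $S(i)$.} The centered interval $D^{(S)}_{p_i}$ is $[\mu-2^{-|p_i|-S-1},\mu+2^{-|p_i|-S-1})$, where $\mu = 0.p_i1$. This interval contains $[a,b)$ exactly when $\mu - a \le 2^{-|p_i|-S-1}$ and $b-\mu \le 2^{-|p_i|-S-1}$. The containment is monotone in $S$, so $S(i)$ is the largest $S$ satisfying both inequalities, and the algorithm finds it by iterating $S=0,1,2,\ldots$ up to $O(t^c)$. Each test is a comparison of $\poly(t)$-bit numbers built from $a$, $b$, and $p_i$, all of which are $\CL$-computable, so each test is in logspace relative to them. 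The condition always holds at $S=0$ by the definition of $p_i$, so $S(i)$ is well defined. It is finite because maximality of $p_i$ forces $a<\mu\le b$ with $a<b$.

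\textbf{Main obstacle.} The only genuinely delicate point is the boundary conventions: half-open intervals, and the case where $b$ is itself a dyadic endpoint. These must be treated consistently so that $p_i$ and $S(i)$ match the definitions exactly. The complexity side is routine composition of $\CL$ with logspace, as in \Cref{clm:compI}.
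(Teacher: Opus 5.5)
Your proposal is correct and follows the paper's route: the paper's proof is just the observation that the endpoints of $I_{x_{\le i}}$ are $\CL$-computable by \Cref{clm:compI} and the rest is logspace arithmetic, and you have written that arithmetic out explicitly. One small fix is needed: $L$ must bound the fractional bitlength of both $a$ and $b$, not just $b$, and the common-prefix scan must be capped at $L$, since when $b-a=2^{-L}$ the two expansions never disagree. Alternatively, test $\lfloor a2^k\rfloor=\lceil b2^k\rceil-1$ directly by integer division, which also handles non-dyadic rational endpoints.
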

\begin{proof}
    Since we can compute $I_{x_{\le i}}$ in $\CL$ by~\Cref{clm:compI}, the rest follows because arithmetic is in logspace.
\end{proof}
We prove a basic bound on the length of $E_{\le i}$:
\begin{lemma}\label{lem:Elength}
For every $i$ we have $|\oU_{\le i}|\in\bigl(\lambda(i)-2,\ \lambda(i)\bigr]$.
\end{lemma}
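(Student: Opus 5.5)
The plan is to compare the length $f(x_{\le i})=2^{-\lambda(i)}$ of $I:=I_{x_{\le i}}$ with the lengths of the centered intervals $D_{p_i}^{(S)}$. Write $k:=|p_i|$, $S:=S(i)$, $\mu:=\mu_{p_i}$, and $I=[a,b)$ with $b-a=f(x_{\le i})$. Since $|\oU_{\le i}|=k+S$ in both cases of the definition, it suffices to show
\[
2^{-(k+S+2)}<b-a\le 2^{-(k+S)} .
\]
Taking $-\log_2$ of both sides then gives $\lambda(i)-2<k+S\le\lambda(i)$. The cases $S(i)=0$ and $S(i)>0$ need no separate treatment.

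\textbf{Upper bound.} By the definition of $S(i)$ we have $I\subseteq D_{p_i}^{(S)}$, and $D_{p_i}^{(S)}$ has length $2^{-S}|D_{p_i}|=2^{-k-S}$. Hence $b-a\le 2^{-(k+S)}$, which gives $k+S\le\lambda(i)$.

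\textbf{Lower bound.} First I use the maximality of $p_i$ to show $a<\mu<b$. The two children of $D_{p_i}$ are $D_{p_i\|0}=[0.p_i,\mu)$ and $D_{p_i\|1}=[\mu,0.p_i+2^{-k})$. Because $p_i$ is the longest string with $D_{p_i}\supseteq I$, the interval $I$ lies in neither child.
\begin{itemize}
\item Since $I\not\subseteq[0.p_i,\mu)$ and $I\subseteq D_{p_i}$, we get $b>\mu$.
\item Since $I\not\subseteq[\mu,\cdot)$, we get $a<\mu$.
\end{itemize}
Next I use the maximality of $S$. It gives $I\not\subseteq D_{p_i}^{(S+1)}=[\mu-h,\mu+h)$ with $h:=2^{-k-S-2}$. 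Because $I=[a,b)$ is half-open, this means $a<\mu-h$ or $b>\mu+h$.
\begin{itemize}
\item If $a<\mu-h$, then together with $b>\mu$ we get $b-a>h$.
\item If $b>\mu+h$, then together with $a<\mu$ we get $b-a>h$.
\end{itemize}
In either case $f(x_{\le i})>2^{-(k+S+2)}$, i.e. $k+S>\lambda(i)-2$.

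\textbf{Main obstacle.} The delicate point is the strictness of the lower bound. It rests on the half-open conventions for dyadic intervals and for $D^{(S)}$. For example, $I=[\mu,\mu+h)$ does lie in $D_{p_i}^{(S+1)}$, but it is already excluded by $a<\mu$.

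I would also check the degenerate case $p_i=\emptyset$ (for example $I=[0,1]$, or $I$ straddling $1/2$). If $I=[0,1]$ exactly, then $k=S=0$ and $\lambda=0$. Here the lower bound $-2<0$ holds trivially and the upper bound is immediate. Otherwise the argument above applies verbatim. The maximum defining $S(i)$ exists because $S=0$ is admissible and $|D_{p_i}^{(S)}|\to 0$ while $f(x_{\le i})>0$.
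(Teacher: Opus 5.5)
Your proof is correct and is essentially the paper's argument: maximality of $p_i$ forces $I_{x_{\le i}}$ to straddle $\mu_{p_i}$, and containment in $D_{p_i}^{(S(i))}$ but not in $D_{p_i}^{(S(i)+1)}$ sandwiches $f(x_{\le i})$ strictly above $2^{-|p_i|-S(i)-2}$ and at most $2^{-|p_i|-S(i)}$. The paper packages this by writing $I_{x_{\le i}}=[\mu-\alpha,\mu+\beta)$ and bounding $M=\max(\alpha,\beta)$, whereas you compare endpoints directly (your upper bound does not even need the straddling); the difference is only cosmetic.
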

\begin{proof}
    Write $p := p_i$, $S := S(i)$, $\mu := \mu_{p_i}$, and
    $\lambda := \lambda(i)$; recall $|\oU_{\le i}| = |p| + S$. By maximality
    of $p$, the interval $I_{x_{\le i}}$ is contained in neither dyadic
    child of $D_p$, so we may write
    \[
    I_{x_{\le i}} = [\mu - \alpha,\ \mu+\beta), \qquad \alpha,\beta > 0,
    \qquad \alpha + \beta = |I_{x_{\le i}}| = 2^{-\lambda}.
    \]
    Set $M := \max(\alpha,\beta)$, and note
    \begin{equation}\label{eq:M-sandwich}
    2^{-\lambda-1} \le M \le 2^{-\lambda},
    \end{equation}
    the lower bound since $M \ge (\alpha+\beta)/2$ and the upper since
    $M \le \alpha+\beta$.
 
    Then note that the containment
    $I_{x_{\le i}} \subseteq D_p^{(S')}$ holds if and only if both $\alpha$
    and $\beta$ are at most $2^{-|p|-S'-1}$, i.e.\ if and only if
    \[
    M \le 2^{-|p|-S'-1} .
    \]
    Since $S$ is the largest such $S'$, we obtain
    \begin{equation}\label{eq:S-sandwich}
    2^{-|p|-S-2} < M \le 2^{-|p|-S-1}.
    \end{equation}
    and combining these bounds with the bounds of~\Cref{eq:M-sandwich} gives the result. 
\end{proof}

We think of the encoder as printing $|\oU_{\le i}|$ symbols after processing $x_{\le i}$ (despite the fact that some symbols may not yet be determined). We show that this is well-defined by showing that subsequent output always extends these prefixes in a consistent way.
\begin{fact}[Midpoint chains]\label{fct:midpoint-chains}
    For every $p\in\zo^*$ and $k\ge 1$,
    \[
    D_{p||1||0^{k-1}} = \big[\mu_p,\ \mu_p+2^{-|p|-k}\big)
    \qquad\text{and}\qquad
    D_{p||0||1^{k-1}} = \big[\mu_p-2^{-|p|-k},\ \mu_p\big).
    \]
\end{fact}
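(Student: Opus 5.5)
This is a direct computation with binary expansions. The plan is to write $p\in\zo^{\ell}$ with $\ell=|p|$ and use the definition $D_y=[0.y,\,0.y+2^{-|y|})$. From it we record
\[
D_p=[0.p,\ 0.p+2^{-\ell}),\qquad \mu_p=0.p+2^{-\ell-1}.
\]

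For the first identity, take $y=p||1||0^{k-1}$, which has length $\ell+k$. Its binary value is
\[
0.y = 0.p + 2^{-(\ell+1)} + 0 = \mu_p,
\]
since the trailing zeros contribute nothing. Hence
\[
D_y=[\mu_p,\ \mu_p+2^{-\ell-k}).
\]
The case $k=1$ is the degenerate case $y=p||1$ and fits the same formula.

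For the second identity, take $y=p||0||1^{k-1}$. Its value is
\[
0.y=0.p+\sum_{j=\ell+2}^{\ell+k}2^{-j} = 0.p+2^{-\ell-1}-2^{-\ell-k}=\mu_p-2^{-\ell-k},
\]
using the finite geometric sum $\sum_{j=\ell+2}^{\ell+k}2^{-j}=2^{-\ell-1}-2^{-\ell-k}$ (which is $0$ when $k=1$). Adding the length $2^{-(\ell+k)}$ then gives the right endpoint $\mu_p$, so
\[
D_y=[\mu_p-2^{-\ell-k},\ \mu_p).
\]

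There is no real obstacle here. The only care needed is the geometric-sum bookkeeping in the second identity and checking that the $k=1$ edge case agrees with the general formula, which it does in both identities.
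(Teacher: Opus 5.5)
Your computation is correct: $0.(p||1||0^{k-1})=\mu_p$ and $0.(p||0||1^{k-1})=\mu_p-2^{-|p|-k}$, and adding the length $2^{-|p|-k}$ gives both identities, including the $k=1$ case. The paper states this as a fact without proof. Your direct binary-expansion check is exactly the routine verification it leaves implicit.
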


\begin{claim}\label{clm:extend}
For every $i$, either $p_{i+1}=p_i$, or $p_{i+1}$ is a proper extension of
\[
p_i||\sigma||\bar\sigma^{S(i)-1}
\]
for some $\sigma\in\zo$ (where for $S(i)=0$ this string is read as $p_i$).
In the latter case, $\sigma$ is the unique bit with
$I_{x_{\le i+1}}\subseteq D_{p_i||\sigma}$.
\end{claim}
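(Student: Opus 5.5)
We argue purely from nesting of intervals. Write $I := I_{x_{\le i}}$, $I' := I_{x_{\le i+1}}$, $p := p_i$, $S := S(i)$ and $\mu := \mu_p$. By construction $I' \subseteq I \subseteq D_p$, and by maximality of $p$ the interval $I$ contains points on both sides of $\mu$ (as in the proof of \Cref{lem:Elength}, $I = [\mu-\alpha,\mu+\beta)$ with $\alpha,\beta>0$). Moreover, by the definition of $S$ we have $I \subseteq D_p^{(S)}$, so both $\alpha$ and $\beta$ are at most $2^{-|p|-S-1}$. Since dyadic intervals form a laminar family, and $D_p \supseteq I'$, the longest dyadic prefix $p_{i+1}$ for $I'$ is an extension of $p$ (possibly equal). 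This gives the first dichotomy.

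\textbf{Case $p_{i+1}=p$.} There is nothing to prove.

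\textbf{Case $p_{i+1}\ne p$.} Then $p_{i+1}$ properly extends $p$, so $I' \subseteq D_{p_{i+1}} \subseteq D_{p||\sigma}$ for the bit $\sigma$ that is the $(|p|+1)$st symbol of $p_{i+1}$. Uniqueness of $\sigma$ is immediate, since $D_{p||0}$ and $D_{p||1}$ are disjoint and $I'$ is nonempty (we assumed $f>0$ on all prefixes under consideration).

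It remains to show that $p_{i+1}$ extends $p||\sigma||\bar\sigma^{S-1}$ when $S\ge1$, and that the extension is proper. Take $\sigma=1$; the case $\sigma=0$ is symmetric. The interval $I'$ lies in $I\cap D_{p||1}\subseteq[\mu,\mu+2^{-|p|-S-1})$. By \Cref{fct:midpoint-chains} with $k=S+1$, this interval is exactly $D_{p||1||0^{S}}$. Hence $I'$ is contained in the dyadic interval of $p||1||0^{S}$. By laminarity and the maximality of $p_{i+1}$, it follows that $p_{i+1}$ extends $p||1||0^{S}$, which is a proper extension of $p||1||0^{S-1}$. (When $S=0$ the statement only asks that $p_{i+1}$ properly extends $p$, which we already have.)

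\textbf{Main obstacle.} The only delicate step is the endpoint bookkeeping, namely showing that $I\cap D_{p||\sigma}$ fits inside a half-open interval of length $2^{-|p|-S-1}$ that starts exactly at $\mu$. For $\sigma=1$, use $\beta \le 2^{-|p|-S-1}$ together with the half-open convention $D_p^{(S)}=[\mu-2^{-|p|-S-1},\,\mu+2^{-|p|-S-1})$. For $\sigma=0$, use $\alpha \le 2^{-|p|-S-1}$ in the same way.
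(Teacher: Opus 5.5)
Your proof is correct and follows the paper's argument essentially step for step. You use laminarity plus maximality to get that $p_{i+1}$ extends $p_i$, take $\sigma$ as the first extension bit with uniqueness from the disjointness of the two children, and then show $I_{x_{\le i+1}}\subseteq[\mu,\mu+2^{-|p|-S-1}) = D_{p||1||0^{S}}$ via \Cref{fct:midpoint-chains} before invoking maximality of $p_{i+1}$. The only differences are that you apply the fact directly at $k=S+1$ rather than for every $k\le S+1$, and that you spell out the final laminarity/maximality step which the paper leaves as ``straightforward''.
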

\begin{proof}
    Write $p:=p_i$, $p'=p_{i+1}$, $S:=S(i)$, $\mu:=\mu_{p}$, and
    $I':=I_{x_{\le i+1}}$, so that $I'\subseteq I_{x_{\le i}}\subseteq
    D_{p}^{(S)}$.
     
    \textbf{Step 1: $p'$ extends $p$.}
    Both $D_{p'}$ and $D_{p}$ contain the nonempty interval $I'$, and any
    two dyadic intervals are either nested or disjoint; hence they are nested.
    If $D_{p}\subsetneq D_{p'}$ then $|p'|<|p|$, contradicting that
    $p'$ is the longest string whose interval contains $I'$ (as
    $D_p\supseteq I'$). So $D_{p'}\subseteq D_{p}$, i.e.\ $p_{i+1}$
    extends $p$. If $p'=p$ we are in the first case, so
    assume $p_{i+1}$ extends $p$ by $j\ge 1$ bits, and let $\sigma$ be the
    first extension bit. Then $I'\subseteq D_{p'}\subseteq D_{p||\sigma}$,
    and $\sigma$ is unique since the two children of $D_p$ are disjoint.
     
    \textbf{Step 2: for every $1\le k\le S+1$ we have
    $I'\subseteq D_{p||\sigma||\bar\sigma^{k-1}}$.}
    We give the case $\sigma=1$; the case $\sigma=0$ is symmetric. Since
    $I'\subseteq D_{p||1}$, every point of $I'$ is $\ge\mu$. Since
    $I'\subseteq D_{p}^{(S)}$, every point of $I'$ is
    $<\mu+2^{-|p|-S-1}\le \mu+2^{-|p|-k}$ for $k\le S+1$. Hence
    $I'\subseteq[\mu,\ \mu+2^{-|p|-k}) = D_{p||1||0^{k-1}}$
    by~\Cref{fct:midpoint-chains}. It is thus straightforward to show that $p'$ extends $p||1||0^{S}$ (and hence is a proper extension of $p||1||0^{S-1}$), since otherwise there would be a longer $p''$ whose interval contains $I'$.
\end{proof}
Note that the latter case occurs if $I_{x_{\le i+1}}$ no longer contains the midpoint $\mu_{p_i}$ of $D_{p_i}$. Once this occurs, we think of the encoder as going back and replacing the previously outputted string $u||\bar{u}^{S(i)-1}$ with $\sigma||\bar{\sigma}^{S(i)-1}$ for $\sigma\in \zo$ (and then outputting any further bits of $p_{i+j}$). We formally define this as the idea of a resolving bit:
\begin{definition}[Resolution]\label{def:resolution}
  The resolving bit of $\oU_{\le i}$ (when
  $S(i)\ge 1$) is, letting $j>i$ be the first index where $p_j$ strictly extends $p_i$, the first bit of this extension. If no such $j$ exists, it is the bit $\sigma$ of
  \Cref{lem:final-block}. We write
  $\oA_{\le i}$ for $\oU_{\le i}$ with $u$ resolved in this way.
\end{definition}
\begin{corollary}\label{cor:R-prefix}
    If $p_{i+1}=p_i$ then $S(i+1)\ge S(i)$, and $u$ resolves to the same bit $\sigma$ in both blocks. Consequently, $\oA_{\le i}$ is a
    (not necessarily proper) prefix of $\oA_{\le i+1}$ for every $i$, and we can define
    \[
    \oA_i := \oA_{\le i}-\oA_{\le i-1}\in \zo^*
    \]
\end{corollary}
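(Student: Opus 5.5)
The plan is to argue directly from the geometry of the nested intervals, assuming $p_{i+1}=p_i=:p$ with $S(i)\ge 1$ (if $S(i)=0$ there is no undetermined symbol and the monotonicity claim $S(i+1)\ge 0$ is trivial). Write $\mu:=\mu_p$. The first step is monotonicity of $S$. By definition $S(i)$ is the largest integer with $D_p^{(S(i))}\supseteq I_{x_{\le i}}$, and since $I_{x_{\le i+1}}\subseteq I_{x_{\le i}}$ (the intervals are nested by construction), the centred interval $D_p^{(S(i))}$ also contains $I_{x_{\le i+1}}$. Because $p_{i+1}=p$, the quantity $S(i+1)$ is defined relative to the same midpoint $\mu$ and the same family $D_p^{(S')}$. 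Maximality of $S(i+1)$ then gives $S(i+1)\ge S(i)$.

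The second step is that $u$ resolves identically in both blocks. By \Cref{def:resolution}, the resolving bit of $\oU_{\le i}$ is the first bit of the extension at the first index $j>i$ where $p_j$ strictly extends $p_i$. Since $p_{i+1}=p_i$, that first index $j$ is also the first index $j>i+1$ where $p_j$ strictly extends $p_{i+1}$. The two definitions therefore refer to the same $j$ and the same extension bit, so the resolving bits coincide. If no such $j$ exists, both resolutions fall back to the final-block bit $\sigma$ of \Cref{lem:final-block}, which depends only on the common $p$ and the final interval, so they again agree.

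With $\sigma$ fixed, the third step is the prefix relation. In the case $p_{i+1}=p_i$ we have $\oA_{\le i}=p\|\sigma\|\bar\sigma^{S(i)-1}$ and $\oA_{\le i+1}=p\|\sigma\|\bar\sigma^{S(i+1)-1}$, and the first is a prefix of the second because $S(i+1)\ge S(i)$. In the other case of \Cref{clm:extend}, $p_{i+1}$ properly extends $p_i\|\sigma\|\bar\sigma^{S(i)-1}$, where $\sigma$ is the first extension bit. By \Cref{def:resolution} with $j=i+1$, this $\sigma$ is exactly the resolving bit of $\oU_{\le i}$, so $\oA_{\le i}$ is a prefix of $p_{i+1}$. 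Finally $p_{i+1}$ is a prefix of $\oA_{\le i+1}$ by the form of $\oU_{\le i+1}$. Hence $\oA_{\le i}$ is a prefix of $\oA_{\le i+1}$ for every $i$, and the differences $\oA_i$ are well defined.

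The main subtlety I expect is the fallback case in which no strict extension ever occurs. There I need \Cref{lem:final-block} to assign a bit that depends only on $p$, and not on the index $i$, so that consecutive blocks resolve consistently. I would check this by confirming that the lemma's $\sigma$ is determined by the final interval $I_x$ relative to $\mu_p$.
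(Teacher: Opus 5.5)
Your proposal is correct and follows essentially the argument the paper leaves implicit: $S(i+1)\ge S(i)$ from the nesting $I_{x_{\le i+1}}\subseteq I_{x_{\le i}}$ against the fixed family $D_p^{(S')}$; agreement of the resolving bits because \Cref{def:resolution} picks out the same first strict-extension index $j$ for both $i$ and $i+1$; and the prefix relation via \Cref{clm:extend} in the jump case. The fallback subtlety you flag is not an issue, since the bit of \Cref{lem:final-block} is a single global bit determined by $p_t$ and $I_x$ and does not depend on $i$.
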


Finally, once we finish processing $x_t$, the current output $\oU_{\le t}=p_t||u||\bar{u}^{S(t)-1}$
still has $S(t)$ pending symbols undetermined, and the interval $D_{p_t}$ contains $I_{x}$, rather than lying inside it. We show that the final encoding extends $\oU_{\le t}$ in the natural way:
\begin{lemma}[Termination]\label{lem:final-block}
Write $I_x = [\mu_{p_t}-\alpha,\ \mu_{p_t}+\beta)$ and let $\sigma := 1$
if $\beta\ge\alpha$ and $\sigma:=0$ otherwise. Then
$y := p_t||\sigma||\bar\sigma^{S(t)+1}$ satisfies $D_y\subseteq I_x$
and $|y| = |p_t|+S(t)+2$.
\end{lemma}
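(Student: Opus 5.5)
The plan is to reduce the lemma to the midpoint-chain fact (\Cref{fct:midpoint-chains}) together with the two sandwich bounds established in the proof of \Cref{lem:Elength}. Set $p:=p_t$, $S:=S(t)$, $\mu:=\mu_{p}$. By maximality of $p$, the interval $I_x$ straddles $\mu$, so $I_x=[\mu-\alpha,\mu+\beta)$ with $\alpha,\beta>0$. The length claim is then immediate: $|y|=|p|+1+(S+1)=|p|+S+2$. Everything else is the containment $D_y\subseteq I_x$, which we obtain by identifying $D_y$ explicitly and comparing its length with $\max(\alpha,\beta)$.

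We treat the case $\sigma=1$, i.e.\ $\beta\ge\alpha$; the case $\sigma=0$ is the mirror image. Applying \Cref{fct:midpoint-chains} with $k=S+2$ gives
\[
D_{p||1||0^{S+1}}=[\mu,\ \mu+2^{-|p|-S-2}).
\]
This interval starts at $\mu\ge\mu-\alpha$, so its left endpoint already lies in $I_x$. It therefore suffices to show $2^{-|p|-S-2}\le\beta$.

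Here we use the strict lower bound from \Cref{eq:S-sandwich}: with $M=\max(\alpha,\beta)$ we have $2^{-|p|-S-2}<M$. Since $\beta\ge\alpha$, $M=\beta$, which gives exactly the needed inequality $2^{-|p|-S-2}<\beta$. In the case $\sigma=0$, we have $\alpha>\beta$ and $M=\alpha$. Here $D_{p||0||1^{S+1}}=[\mu-2^{-|p|-S-2},\mu)$, and the same inequality $2^{-|p|-S-2}<\alpha$ places this interval inside $[\mu-\alpha,\mu)\subseteq I_x$.

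The only point requiring care is whether \Cref{eq:S-sandwich} applies at index $t$. It does, because that derivation used only the maximality of $p_i$ and of $S(i)$, so it holds for every $i$ including $i=t$. The half-open endpoint conventions also need checking: $D_y$ is closed on the left and open on the right, matching $I_x$. Two remarks are worth noting. The choice $\sigma$ of the longer side is exactly what makes $M$ equal to the relevant side length. The exponent $S+1$, rather than $S$, is needed because the sandwich gives $M>2^{-|p|-S-2}$ but not $M\ge2^{-|p|-S-1}$.
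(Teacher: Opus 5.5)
Your proof is correct and is essentially the paper's argument: apply the midpoint-chain fact with $k=S(t)+2$ to identify $D_y$ as $[\mu_{p_t},\mu_{p_t}+2^{-|p_t|-S(t)-2})$ (or its mirror image), then use the strict lower bound $\max(\alpha,\beta)>2^{-|p_t|-S(t)-2}$ from \Cref{eq:S-sandwich}, with $\sigma$ chosen so that the maximum is the side being used. Your explicit treatment of the $\sigma=0$ case and your check that the sandwich bound holds at index $t$ simply spell out what the paper leaves as ``symmetric.''
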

\begin{proof}
We give the case $\sigma=1$; the case $\sigma=0$ is symmetric. By the
choice of $\sigma$ and~\Cref{eq:S-sandwich},
$\beta=\max(\alpha,\beta)>2^{-|p_t|-S(t)-2}$. By~\Cref{fct:midpoint-chains}
with $k=S(t)+2$,
\[
D_y = \big[\mu_{p_t},\ \mu_{p_t}+2^{-|p_t|-S(t)-2}\big)
\subseteq \big[\mu_{p_t},\ \mu_{p_t}+\beta\big)\subseteq I_x. \qedhere
\]
\end{proof}
We let $\oA_f:=y-\oA_{\le t}$, so the final encoding is 
\[\oA(x) := \oA_{\le t}||\oA_{f}.\]

\subsubsection{In-place encoding and decoding}
The reverse encoder for a string $x\in \zo^t$ works as follows. For $i=t,t-1,\ldots,1$ it replaces $x_{i+1:t}$\footnote{We actually add a constant amount of padding to ensure there is always sufficient space} with $\oA_{i+1,\ldots,t}||\oA_{f}$, while always storing the last committed bit $\sigma$. Since we can compute $\oU_{\le i}$ using only $x_{1:i}$ and the bit $\sigma$, after losing access to $x_{i+1,\ldots,t}$ we retain the ability to compute $\oA_{\le i}$. The final step is to show that for each $i$, the number of bits we have output does not exceed (by more than a constant) the number of bits we have freed up. This is \textit{not} true for a generic $x$, but it does occur if $x$ has a particular property. 
\begin{definition}
    We say a string $x\in \zo^t$ has a \emphdef{final compression record (FCR)} if $s(t)\ge s(i)$ for every $i\le t$.
\end{definition}
\begin{proposition}\label{prop:records}
Suppose $x\in\zo^t$ has an FCR. Then for every
$0\le i\le t$,
\[
\big|\oA_{i+1,\ldots,t}||\oA_f\big| \le t-i+4.
\]
\end{proposition}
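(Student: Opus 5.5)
The plan is to express the length of the suffix encoding as a difference of prefix lengths, and then use the FCR hypothesis together with \Cref{lem:Elength}.

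By \Cref{cor:R-prefix}, the strings $\oA_{\le i}$ form a prefix chain with $\oA_{\le i+1}=\oA_{\le i}||\oA_{i+1}$. Hence $\oA_{i+1,\ldots,t}=\oA_{\le t}-\oA_{\le i}$ and
\[
\big|\oA_{i+1,\ldots,t}||\oA_f\big| = |y| - |\oA_{\le i}|,
\]
where $y$ is the final codeword of \Cref{lem:final-block}. Resolving the symbol $u$ does not change length, so $|\oA_{\le i}|=|\oU_{\le i}|$. By \Cref{lem:final-block}, $|y|=|p_t|+S(t)+2=|\oU_{\le t}|+2$.

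\textbf{Main estimate.} By \Cref{lem:Elength}, $|\oU_{\le t}|\le\lambda(t)$ and $|\oU_{\le i}|>\lambda(i)-2$. This gives
\[
|y|-|\oA_{\le i}| < \lambda(t)+2-\lambda(i)+2 = \lambda(t)-\lambda(i)+4.
\]
Write $\lambda(j)=j-s(j)$. Then $\lambda(t)-\lambda(i)=(t-i)-(s(t)-s(i))$. The FCR hypothesis says $s(t)\ge s(i)$, so this is at most $t-i$, and the total length is at most $t-i+4$, as claimed. For $i=0$ we use $\oA_{\le 0}=\emptyset$ (since $p_0$ is the empty string and $I_\emptyset=[0,1]$) together with $\lambda(0)=0$. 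In that case the bound reads $|y|\le\lambda(t)+2\le t+4$, which also follows from the FCR with $s(0)=0$.

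\textbf{Main obstacle.} The delicate step is justifying the identification $|\oA_{\le t}||\oA_f|=|y|$. This requires $\oA_{\le t}$ to be a prefix of $y$. Concretely, when no later extension occurs, the pending block $u||\bar u^{S(t)-1}$ must resolve to exactly the $\sigma$ chosen in \Cref{lem:final-block}. That is how \Cref{def:resolution} is set up, so $y=p_t||\sigma||\bar\sigma^{S(t)-1}||\bar\sigma^{2}$ extends $\oA_{\le t}$. I would check that the $\bar\sigma^{S(t)-1}$ part matches; when $S(t)=0$ the pending block is empty.

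The same care is needed for intermediate $i$. Here I would check that $\oA_{\le i}$ is a prefix of $\oA_{\le t}$ even across blocks that resolve, which is what \Cref{clm:extend} and \Cref{cor:R-prefix} provide. Once these prefix relations are secured, the rest is the two-line arithmetic above.
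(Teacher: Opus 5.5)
Your proof is correct and is the paper's argument: you write the suffix length as $|\oA(x)|-|\oA_{\le i}|$, bound it by $(\lambda(t)+2)-(\lambda(i)-2)$ using \Cref{lem:final-block} and \Cref{lem:Elength}, and finish with $s(t)\ge s(i)$. Your checks of the prefix relations via \Cref{cor:R-prefix} and \Cref{def:resolution} are sound, and they spell out what the paper leaves implicit; the separate $i=0$ case is unnecessary, since \Cref{lem:Elength} already covers $i=0$.
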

\begin{proof}
By \Cref{lem:final-block} and \Cref{lem:Elength},
\[
\big|\oA_{i+1,\ldots,t}||\oA_f\big| = |\oA(x)|-|\oA_{\le i}|
< \big(\lambda(t)+2\big)-\big(\lambda(i)-2\big)
= (t-i)-\big(s(t)-s(i)\big)+4 \le t-i+4,
\]
using $s(t)\ge s(i)$.
\end{proof}

We then give the encoder algorithm, where we assume that $x$ has an FCR and $-\log f(x)<t-C\log t$.
\begin{remark}[Subroutine convention]\label{rem:subroutine}
    Throughout, ``compute $Z$ from $x_{\le i}$ in $\CL$'' means that we run the
    corresponding $\CL$ algorithm with read--only access to the prefix of the
    primary tape holding $x_{\le i}$, using the catalytic tape of the
    $\inplaceFCL$ model as the catalytic tape for the subroutine (and since the subroutine resets the tape, this does not affect the invoking algorithm).
\end{remark}

\newcommand{\Buf}{\mathrm{Buf}}
\newcommand{\Inv}{\mathrm{Inv}}

\begin{algorithm}
\caption{\textsc{ReverseEncode}: in-place arithmetic encoding}\label{alg:reverse-encode}
\begin{algorithmic}[1]
\Require primary tape holds $x\in\zo^t$ with a final compression record
\Ensure primary tape holds $0^{g}||\oA_{\le t}||\oA_{f}=0^g||\oA(x)$, where $g = t-|\oA_{\le t}|-2$
\State compute $p_t, S(t)$ and the final block: $\sigma^{\mathrm{cur}}\gets$ heavier-side bit, $F\in\zo^2$ \Comment{\Cref{lem:final-block}}
\State $\Buf\gets F$;\quad $w\gets 2$ \Comment{$\Buf$: the $\le 5$ leftmost code bits; $w$: total bits emitted}
\For{$i = t$ \textbf{downto} $1$}
    \State compute $p_i, S(i), p_{i-1}, S(i-1)$ from $x_{\le i}$ in $\CL$
    \If{$p_i \ne p_{i-1}$} \Comment{prefix jump; \Cref{clm:extend} applies at index $i-1$}
        \State $\sigma^{\mathrm{prev}}\gets$ first bit of $p_i$ after the prefix $p_{i-1}$
    \Else
        \State $\sigma^{\mathrm{prev}}\gets \sigma^{\mathrm{cur}}$
    \EndIf
    \State \parbox[t]{0.9\linewidth}{$\oA_i\gets$ the suffix of
    $p_i||\sigma^{\mathrm{cur}}||\overline{\sigma^{\mathrm{cur}}}^{S(i)-1}$
    obtained by removing the prefix
    $p_{i-1}||\sigma^{\mathrm{prev}}||\overline{\sigma^{\mathrm{prev}}}^{S(i-1)-1}$
    \Comment{well defined by \Cref{cor:R-prefix}}}
    \State prepend $\oA_i$ to the code block, bit by bit: new leftmost bits enter $\Buf$, and bits displaced from $\Buf$ spill onto the tape immediately left of the on-tape code block
    \State $w\gets w+|\oA_i|$;\quad $\sigma^{\mathrm{cur}}\gets\sigma^{\mathrm{prev}}$
\EndFor
\State flush $\Buf$ onto the tape and write $0^{t-w}$ on the prefix
\end{algorithmic}
\end{algorithm}

\begin{lemma}[Correctness of \Cref{alg:reverse-encode}]\label{lem:reverse-encode}
\Cref{alg:reverse-encode} is an $\inplaceFCL$ algorithm satisfying its
specification.
\end{lemma}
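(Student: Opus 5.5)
We will prove correctness by maintaining a loop invariant and then checking that every step uses only $O(\log t)$ auxiliary space plus $\CL$ subroutine calls on read-only prefixes, as in \Cref{rem:subroutine}.

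\textbf{Invariant.} At the start of the iteration with index $i$ (after the previous iteration's updates), three things hold:
\begin{itemize}
\item[(a)] The primary tape holds $x_{\le i}$ in cells $1..i$. To the right of it is some unused space, then the on-tape code block, and $\Buf$ holds the remaining leftmost bits. Together, $\Buf$ followed by the on-tape code block equals $\oA_{i+1,\ldots,t}||\oA_f$.
\item[(b)] The counter satisfies $w=|\oA_{i+1,\ldots,t}||\oA_f|$.
\item[(c)] $\sigma^{\mathrm{cur}}$ equals the resolving bit of $\oU_{\le i}$ from \Cref{def:resolution} (irrelevant if $S(i)=0$).
\end{itemize}
The base case $i=t$ is exactly \Cref{lem:final-block}. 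The heavier-side bit $\sigma$ is the resolving bit of $\oU_{\le t}$ by definition, and $F=\oA_f$ consists of the last two bits of $y$. These are computable from $x$ in $\CL$ via \Cref{clm:compI} together with the claim computing $p_t,S(t)$.

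\textbf{Inductive step.} We show that line 6 or line 8 correctly computes the resolving bit of $\oU_{\le i-1}$.
\begin{itemize}
\item If $p_i\neq p_{i-1}$, then $i$ is the first index after $i-1$ at which the prefix strictly extends. By \Cref{clm:extend}, the resolving bit is the first extension bit.
\item If $p_i=p_{i-1}$, then by \Cref{cor:R-prefix} the undetermined symbol $u$ resolves identically in both cases, so the resolving bit is $\sigma^{\mathrm{cur}}$.
\end{itemize}
In either case, $\oA_{\le i}$ and $\oA_{\le i-1}$ are obtained by substituting the correct bits. \Cref{cor:R-prefix} then gives the prefix relation, so the computed $\oA_i$ equals the true difference. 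Prepending $\oA_i$ restores (a) and (b) at index $i-1$. All quantities used in this step are computed from $x_{\le i}$ alone, which is still intact on the tape.

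\textbf{Main obstacle: space.} The delicate point is that prepending never overwrites $x_{\le i-1}$ or still-needed cells, even before $x_i$ is consumed. Cell $i$ becomes free once $p_i,S(i)$ and $\oA_i$ are computed, and $\oA_i$ has length $O(\log t)$, so it can be held on the worktape. By \Cref{prop:records} applied at index $i-1$ (using the FCR hypothesis), the total emitted length satisfies $w\le t-(i-1)+4$. The code must occupy cells $i..t$, which number $t-i+1$, plus the at most $5$ bits held in $\Buf$. This is exactly why $\Buf$ absorbs the constant slack of $4$ (more precisely, of at most $5$ bits).

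We will also check two further points.
\begin{itemize}
\item Spilling from $\Buf$ writes only into cells at positions $\ge i$.
\item The tape positions can be maintained with $O(\log t)$-bit counters: the right end is fixed at $t$ and the code starts at $t-w+|\Buf|+1$.
\end{itemize}

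\textbf{Final step.} After the loop, $w=|\oA(x)|\le t-2$ by \Cref{lem:Elength} (since $\lambda(t)<t$). Flushing $\Buf$ and zeroing the prefix then yields $0^{t-w}||\oA(x)$. Since $|\oA(x)|=|\oA_{\le t}|+2$, we have $t-w=g$, matching the specification. The catalytic tape is restored because it is used only by subroutines that restore it.
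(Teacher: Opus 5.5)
Your invariant, your resolving-bit update via \Cref{clm:extend} and \Cref{cor:R-prefix}, and your appeal to \Cref{prop:records} at index $i-1$ all match the paper's proof. The space step, however, rests on a false claim: $\oA_i$ need not have length $O(\log t)$, so it cannot be buffered on the worktape.

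\textbf{Why the claim fails.} By \Cref{lem:Elength}, $|\oA_i| > \lambda(i)-\lambda(i-1)-2$. An FCR does not force this jump to be $O(\log t)$; it only bounds it by roughly $t-i$, because $s$ can climb back by one per step afterwards. Concretely, let $t$ be even and take the semimeasure with $f(1)=1/2$, $f(0^j)=2^{-t/2}$ for $1\le j\le t$, and $f(0^j1)=0$. For $x=0^t$:
\begin{itemize}
\item[(i)] $s(i)=i-t/2$ for $i\ge 1$, so $x$ has an FCR, and $-\log f(x)=t/2<t-C\log t$;
\item[(ii)] yet $I_{x_{\le 1}}=D_{0^{t/2}}$, so $p_1=0^{t/2}$, $S(1)=0$, and $\oA_1=0^{t/2}$.
\end{itemize}
So your argument for why cell $i$ may be overwritten does not go through. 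Without storing $\oA_i$, the algorithm needs $x_i$ to generate the bits of $\oA_i$ during the prepend, so cell $i$ must not be touched.

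\textbf{The repair.} Fortunately cell $i$ is never needed; your count ($t-i+1$ cells plus $\mathrm{Buf}$) is simply loose by one cell. After iteration $i$ we have $w\le t-i+5$, and $\mathrm{Buf}$ holds the leftmost $\min(w,5)$ bits. Hence the on-tape part has length at most $t-i$ and occupies positions $\ge i+1$. Since that block only grows leftward during the iteration, cells $1,\ldots,i$ are untouched throughout. This is exactly the paper's count, and it is what lets the algorithm generate $\oA_i$ right to left, one bit at a time, storing only $O(\log t)$-bit indices. Each such bit is either a bit of $p_i$, obtained by a $\CL$ call on $x_{\le i}$, or is determined by $\sigma^{\mathrm{cur}}$.

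\textbf{A smaller slip in your final step.} The bound $\lambda(t)<t$ only gives $|\oA(x)|\le \lambda(t)+2<t+2$, not $|\oA(x)|\le t-2$. To ensure $g=t-|\oA(x)|\ge 0$, you need the standing hypothesis $-\log f(x)<t-C\log t$.
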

\begin{proof}
Let $\Inv(i)$ be the following claim. \emph{positions $1,\ldots,i$ of the main tape
hold $x_{\le i}$; the string $\oA_{i+1,\ldots,t}||\oA_{f}$ (of length $w$) is
stored with its leftmost $\min(w,5)$ bits in $\Buf$ and the
remainder right-aligned on the main tape; and $\sigma^{\mathrm{cur}}$ is the
resolving bit of $\oU_{\le i}$ (and is unused if $S(i)=0$).} It is immediate that $\Inv(t)$ holds at the start of the algorithm, and that $\Inv(0)$ implies correctness. We show that assuming $\Inv(i)$ holds at the start of the loop for $i$, then $\Inv(i-1)$ holds at the end of this loop execution.

\begin{itemize}
    \item By \Cref{cor:R-prefix}, $\oA_{\le i-1}$ is a prefix of $\oA_{\le i}$, and its resolving bit is equal to $\sigma^{\mathrm{cur}}$ itself when $p_i=p_{i-1}$, and otherwise equal to the first bit of $p_i$ not contained in $p_{i-1}$ (\Cref{clm:extend}), which is itself equal to $\sigma^{\mathrm{prev}}$. Hence line~9
    computes $\oA_i = \oA_{\le i}-\oA_{\le i-1}$, and prepending it extends
    $\oA_{i+1,\ldots,t}||\oA_{f}$ to $\oA_{i,\ldots,t}||\oA_{f}$ and hence $\Inv_2(i-1)$ holds.
    \item It is easy to see that $\sigma^{\mathrm{cur}}$ is updated to the
    level-$(i-1)$ resolving bit so $\Inv_3(i-1)$ holds.
    \item At the end of the iteration, the code block
    $\oA_{i,\ldots,t}||\oA_f$ has length at most $t-i+5$
    by~\Cref{prop:records} applied at index $i-1$. Since $\Buf$
    holds its $5$ leftmost bits, the on-tape portion occupies positions
    $\ge t-(t-i)+1 = i+1$, and hence $x_{\le i-1}$ is unmodified into the next iteration so $\Inv_1(i-1)$ holds.
\end{itemize}
Finally, it is clear that the algorithm can be implemented in-place, storing $i$, $w$, $\sigma^{\mathrm{cur}}$,
$\Buf$ and pointers on the worktape using $O(\log t)$ space.
\end{proof}

The decoder algorithm works similarly, except that we iteratively replace $\oA_{\le i}$ with $x_{\le i}$. We first show that in $\CL$ we can decode the next symbol given the decoded previous symbols and the remainder of the encoding.
\begin{claim}\label{clm:dec-map}
There is a $\CL$ algorithm that, given $x_{<i}$, the bit $\sigma$
resolving the pending block of $\oU_{\le i-1}$ (when $S(i-1)\ge 1$), and
the string $\oA_{i,\ldots,t}||\oA_f$, outputs $x_i$.
\end{claim}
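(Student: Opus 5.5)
The approach is to recover the full code word $y=\oA(x)=\oA_{\le i-1}||\oA_{i,\ldots,t}||\oA_f$ implicitly, and then decide $x_i$ by checking which child interval of $I_{x_{<i}}$ contains the dyadic interval $D_y$. Since $D_y\subseteq I_x\subseteq I_{x_{\le i}}$, and $I_{x_{<i}||0}$ and $I_{x_{<i}||1}$ are disjoint, exactly one of them contains $D_y$. That child determines $x_i$.

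First, compute $\oA_{\le i-1}$ from $x_{<i}$ and $\sigma$.
\begin{itemize}
\item Using the claim computing $p_{i-1}$ and $S(i-1)$ in $\CL$, form $\oU_{\le i-1}=p_{i-1}||u||\bar u^{S(i-1)-1}$.
\item Substitute $u=\sigma$, which by Definition~\ref{def:resolution} and Corollary~\ref{cor:R-prefix} gives exactly $\oA_{\le i-1}$.
\end{itemize}
Each bit of this string is computable in $\CL$, given read-only access to $x_{<i}$ and the stored bit $\sigma$. Concatenating with the given suffix $\oA_{i,\ldots,t}||\oA_f$ yields read access to every bit of $y$. This is an index computation: bit $j$ of $y$ comes from the first part if $j\le |p_{i-1}|+S(i-1)$, and from the given string otherwise. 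The numbers $0.y$ and $0.y+2^{-|y|}$ are therefore available bitwise in $\CL$.

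Next, compute the endpoints of $I_{x_{<i}||0}$ and $I_{x_{<i}||1}$ via Claim~\ref{clm:compI}. Then compare the rational $0.y$, which has at most $t+O(1)$ bits, against the boundary point $a_{x_{<i}}+f(x_{<i}||0)$. Output $x_i=0$ if $0.y$ lies below this boundary, and $x_i=1$ otherwise. Comparison of two $\poly(t)$-bit numbers with bitwise $\CL$ access composes to a $\CL$ algorithm, since the composition of $\CL$ functions with logspace post-processing stays in $\CL$.

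The main obstacle is justifying that $y$ really is the code of the unknown full $x$ consistent with the given data. The encoder's output $\oA(x)$ has $\oA_{\le i-1}$ as a prefix by Corollary~\ref{cor:R-prefix}, and the suffix supplied is by definition $\oA(x)-\oA_{\le i-1}$. So the concatenation equals $\oA(x)$ exactly, with no dependence on the unknown $x_{\ge i}$. Combined with Lemma~\ref{lem:final-block}, this gives $D_y\subseteq I_x$. Some care is needed when $S(i-1)=0$, where $\sigma$ is unused, and in the degenerate case where a child interval is empty or $D_y$ touches the slack at the right end. In the latter case containment in $I_{x_{<i}||x_i}$ still forces the correct answer, because $D_y$ is nonempty and lies inside $I_{x_{\le i}}$.
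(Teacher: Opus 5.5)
Your proposal is correct and takes essentially the same approach as the paper. Like the paper, you rebuild $\oA_{\le i-1}=p_{i-1}||\sigma||\bar\sigma^{S(i-1)-1}$ from $x_{<i}$ and $\sigma$, concatenate it with the given suffix to obtain bit access to $\oA(x)$, and read off $x_i$ as the child of $I_{x_{<i}}$ that contains $D_{\oA(x)}$. Your test of $0.y$ against the split point $a_{x_{<i}}+f(x_{<i}||0)$ is just a simpler form of the paper's containment check, and it is valid because $D_y\subseteq I_{x_{\le i}}$.
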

\begin{proof}
    From $x_{<i}$ we compute $p_{i-1}$ and $S(i-1)$ in $\CL$, and hence, with
    $\sigma$, the resolved prefix
    $\oA_{\le i-1} = p_{i-1}||\sigma||\bar\sigma^{S(i-1)-1}$; concatenating
    the given suffix yields bit access to the full code $\oA(x)$. It is straightforward to show that
    $x_i$ is the unique bit $b$ with $D_{\oA(x)} \subseteq I_{x_{<i}||b}$. Since we can compute these intervals in $\CL$ by~\Cref{clm:compI} and we can determine $D_{\oA(x)}$ from $\oA(x)$ in logspace, we are done.
\end{proof}

We now give the implementation of the decoder, which works in the opposite direction: for $i=1,\ldots,t$ we replace the $i$-bit prefix of the tape with $x_{\le i}$. We again show that if $x$ has an FCR, such a process can be implemented with a constant sized buffer. 
\begin{algorithm}
\caption{\textsc{ReverseDecode}: in-place arithmetic decoding}\label{alg:reverse-decode}
\begin{algorithmic}[1]
\Require primary tape holds $0^{g}||\oA(x)$, the output of \Cref{alg:reverse-encode} on some $x\in\zo^t$ with a final compression record
\Ensure primary tape holds $x$
\State $\sigma^{\mathrm{cur}}\gets\bot$;\quad $\Buf\gets\varepsilon$ \Comment{$\Buf$: live code bits about to be overwritten, $\le 5$}
\For{$i = 1$ \textbf{to} $t$}
    \State compute $x_i$ from $x_{<i}$, $\sigma^{\mathrm{cur}}$, and the code (on tape / in $\Buf$ / reconstructed) via \Cref{clm:dec-map}
    \State move any still-live code bit at position $i$ into $\Buf$; write $x_i$ at position $i$
    \State compute $p_i, S(i), p_{i-1}, S(i-1)$ from $x_{\le i}$ in $\CL$
\If{$S(i)\ge 1$ \textbf{and} ($p_i \ne p_{i-1}$ \textbf{or} $S(i-1)=0$)}
    \State $\sigma^{\mathrm{cur}}\gets$ code bit at index $|p_i|+1$, read from the tape or $\Buf$
\ElsIf{$S(i)=0$}
    \State $\sigma^{\mathrm{cur}}\gets\bot$
\EndIf \Comment{otherwise $p_i=p_{i-1}$, $S(i-1)\ge1$ hence $\sigma^{\mathrm{cur}}$ unchanged}
\State discard from $\Buf$ any bit whose code index is now $\le |\oA_{\le i}|$ \Comment{after the read above}
\EndFor
\end{algorithmic}
\end{algorithm}
 
\begin{lemma}[Correctness of \Cref{alg:reverse-decode}]\label{lem:reverse-decode}
\Cref{alg:reverse-decode} is an $\inplaceFCL$ algorithm satisfying its
specification.
\end{lemma}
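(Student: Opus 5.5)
We prove \Cref{lem:reverse-decode} the same way as \Cref{lem:reverse-encode}, by maintaining an invariant across loop iterations. Let $\Inv'(i)$ be the following claim.
\begin{itemize}
\item[(1)] Positions $1,\ldots,i$ of the primary tape hold $x_{\le i}$.
\item[(2)] Every code bit of $\oA(x)$ with index greater than $|\oA_{\le i}|$ is available, either at its original position (the position it occupies in the layout $0^g||\oA(x)$) or in $\Buf$, and $\Buf$ holds at most $5$ bits.
\item[(3)] $\sigma^{\mathrm{cur}}$ equals the resolving bit of $\oU_{\le i}$ when $S(i)\ge 1$, and $\bot$ otherwise.
\end{itemize}

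\textbf{Base and conclusion.} $\Inv'(0)$ holds trivially: $\oA_{\le 0}$ is the empty string, $\Buf$ is empty, and the whole code is on the tape. $\Inv'(t)$ gives the output $x$. Every code bit is discarded by the end, since the final bits $\oA_f$ lie at positions $>t-2\ge$ the positions being written only if they are overwritten. At worst they remain in $\Buf$, which is simply dropped.

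\textbf{Inductive step.} Assume $\Inv'(i-1)$ at the start of iteration $i$.
\begin{itemize}
\item \emph{Computing $x_i$.} From $x_{<i}$ and $\sigma^{\mathrm{cur}}$ we reconstruct $\oA_{\le i-1}=p_{i-1}||\sigma||\bar\sigma^{S(i-1)-1}$, as in \Cref{clm:dec-map}. By (2), all later code bits are available on the tape or in $\Buf$. So \Cref{clm:dec-map} yields $x_i$, using the catalytic tape per \Cref{rem:subroutine}. Bit access to the code is a logspace-computable address translation (tape, $\Buf$, or recomputation from the prefix), so composition keeps us in $\CL$.
\item \emph{Buffer bound.} This is the key step. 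The code bit with index $j$ sits at tape position $g+j$, where $g=t-|\oA(x)|$. Writing $x_i$ at position $i$ destroys the code bit with index $i-g$. We must show that at most $5$ code bits with index $>|\oA_{\le i}|$ and position $\le i$ are live, i.e.\ that $i-g-|\oA_{\le i}|\le 5$. Rearranged, this is $|\oA(x)|-|\oA_{\le i}|\ge t-i-5$, which is a \emph{lower} bound. \Cref{prop:records} gives the upper bound, so we need a different argument. Using \Cref{lem:Elength} and \Cref{lem:final-block},
\[
|\oA(x)|-|\oA_{\le i}|>\lambda(t)-\lambda(i).
\]
So live bits number at most $(i-g)-|\oA_{\le i}|<i-t+|\oA(x)|-\lambda(i)+2\le i-\lambda(i)-(t-\lambda(t))+4=s(i)-s(t)+4\le 4$, by the FCR property. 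Hence at most $5$ bits are ever held in $\Buf$. The line ``discard from $\Buf$'' keeps only indices $>|\oA_{\le i}|$, which restores (2).
\item \emph{Updating $\sigma^{\mathrm{cur}}$.} We use \Cref{clm:extend} and \Cref{cor:R-prefix}. If $p_i=p_{i-1}$ and $S(i-1)\ge1$, the resolving bit is unchanged. If a new pending block starts, its resolving bit is by definition code index $|p_i|+1$ of the final code. Since $|p_i|+1\le|\oA_{\le i}|$, this bit must be read before the discard step. The read is valid because that index exceeds $|\oA_{\le i-1}|$: a new pending block starts beyond the previous resolved prefix. So (2) guarantees the bit is still available.
\end{itemize}

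\textbf{Space.} The worktape stores only $i$, $\Buf$, $\sigma^{\mathrm{cur}}$, $g$ and pointers, which is $O(\log t)$ bits. The main obstacle is the buffer accounting in the inductive step. It requires the reverse inequality to \Cref{prop:records}, which comes from the lower bound in \Cref{lem:Elength}.
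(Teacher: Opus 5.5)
Your proof is correct and is essentially the paper's argument: your $\mathrm{Inv}'(i)$ is the paper's invariant at index $i+1$; you compute $x_i$ via \Cref{clm:dec-map} and update $\sigma^{\mathrm{cur}}$ via \Cref{clm:extend} in the same way; and your buffer chain ending in $s(i)-s(t)+4\le 4$ is the same as the paper's. One correction to the surrounding commentary: since $g=t-|\oA(x)|$, the condition $(i-g)-|\oA_{\le i}|\le 5$ rearranges to the \emph{upper} bound $|\oA(x)|-|\oA_{\le i}|\le t-i+5$, which is exactly what \Cref{prop:records} gives. Your displayed chain correctly uses $|\oA_{\le i}|>\lambda(i)-2$ and $|\oA(x)|\le\lambda(t)+2$, while the lower bound $|\oA(x)|-|\oA_{\le i}|>\lambda(t)-\lambda(i)$ you state is never used, so the remarks about needing the ``reverse inequality'' should be deleted.
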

\begin{proof}
Index the bits of the code $\oA(x)$ by $1,\ldots,|\oA(x)|$, so that bit
$j$ sits at tape position $g+j$ initially. Let $\Inv(i)$ be the following claim: \emph{positions $1,\ldots,i-1$ hold $x_{<i}$;
every code bit with index $j > |\oA_{\le i-1}|$ is available, either on
the tape at position $g+j$ (if $g+j\ge i$) or in $\Buf$; and
$\sigma^{\mathrm{cur}}$ resolves the pending block of $\oU_{\le i-1}$
(unused if $S(i-1)=0$).} It is immediate that $\Inv(1)$ holds at the start of the algorithm, and that $\Inv(t+1)$ implies correctness. We show that assuming $\Inv(i)$ holds at the start of the loop for $i$, then $\Inv(i+1)$ holds at the end of this loop execution.

\begin{itemize}
    \item line~3 correctly computes $x_i$ by~\Cref{clm:dec-map} and hence $\Inv_2(i+1)$ holds by Line 4.
    \item For the $\sigma^{\mathrm{cur}}$ update, there are three cases. If
    $p_i=p_{i-1}$ and $S(i-1)\ge 1$, then the pending block of $\oU_{\le i}$
    is the same block as that of $\oU_{\le i-1}$ (\Cref{cor:R-prefix}), so
    keeping $\sigma^{\mathrm{cur}}$ is correct. If $S(i)=0$ there is no
    pending block and $\sigma^{\mathrm{cur}}$ is unused. Otherwise the resolving bit of $\oU_{\le i}$ is equal to bit
    $|p_i|+1$ of the code (\Cref{def:resolution}). We claim this bit has not yet been erased. If
    $p_i\neq p_{i-1}$ then $p_i$ properly extends $p_{i-1}||\sigma||\bar\sigma^{S(i-1)-1}$
    (\Cref{clm:extend}), so $|p_i|+1 \ge |\oA_{\le i-1}|+2$, and otherwise $S(i-1)=0$ so $|p_i|+1 = |\oA_{\le i-1}|+1$. Thus in both cases the index exceeds $|\oA_{\le i-1}|$, so by $\Inv(i)$
    the bit is on the tape or in $\Buf$, and it is read before the
    discard step, so $\Inv_3(i+1)$ holds. 
    \item  The bits held in $\Buf$ after iteration $i$
    are those with code index $j$ satisfying $g+j \le i$ and
    $j > |\oA_{\le i}|$, so
    \[
    |\Buf| \le (i-g) - |\oA_{\le i}|
    < i - g - \lambda(i)+2
    = s(i) - g + 2
    \le s(t) - g + 2
    \le 4,
    \]
    using \Cref{lem:Elength}, the final compression record, and
    $g = t-|\oA(x)| \ge t-\lambda(t)-2 = s(t)-2$, and hence $\Inv_1(i+1)$ holds. 
\end{itemize}
Finally, it is clear that the algorithm can be implemented in-place storing $i$, $\sigma^{\mathrm{cur}}$, $\Buf$ and pointers on the worktape using $O(\log t)$ space.
\end{proof}

\subsection{Putting it all together}
Recall that we are given $x\in \zo^t$ where $-\log f(x)<t-2C\log t$, but $x$ does not necessarily have an FCR. However, it is straightforward to deal with this: there is a \textit{prefix} $x'=x_{\le t'}$ of $x$ that is a compression record and $-\log f(x')<t'-2C\log t'$. Our final algorithm $\Enc$ simply encodes this prefix using~\Cref{alg:reverse-encode}, then preserves the suffix in the clear. We give the encoding and decoding algorithm below, then the proof of~\Cref{thm:semicomp} simply verifies that they obey their specification:
\begin{algorithm}
\caption{$\Enc$: the full in-place encoder}\label{alg:enc}
\begin{algorithmic}[1]
\Require primary tape holds $x\in\zo^t$ with $-\log f(x)<t-2C\log t$
\Ensure primary tape holds $\oE(x)||0^{C\log t}$
\State $t'\gets$ the least index with $s(t')\ge (C+3)\log t$ \Comment{exists since $s(t)>2C\log t$}
\State run \Cref{alg:reverse-encode} on the tape region $[1,t']$, leaving $0^{g'}||\oA(x_{\le t'})||x_{t'+1,\ldots,t}$
\State $\ell\gets |\oA(x_{\le t'})|$
\State shift the code $\oA(x_{\le t'})$ left so that it occupies positions $h+1,\ldots,h+\ell$, and write $\langle t'\rangle||\langle\ell\rangle$ in positions $1,\ldots,h$
\State shift the suffix $x_{t'+1,\ldots,t}$ left so that it occupies positions $h+\ell+1,\ldots,h+\ell+(t-t')$, and write $0$'s on the vacated cells
\end{algorithmic}
\end{algorithm}

\begin{algorithm}
\caption{$\Dec$: the full in-place decoder}\label{alg:dec}
\begin{algorithmic}[1]
\Require primary tape holds $\oE(x)||0^{C\log t}$
\Ensure primary tape holds $x$
\State read $t'$ and $\ell$ from the header in positions $1,\ldots,h$
\State shift the raw suffix (positions $h+\ell+1,\ldots,h+\ell+(t-t')$) right so that it occupies positions $t'+1,\ldots,t$ \Comment{copy in decreasing order}
\State shift the code (positions $h+1,\ldots,h+\ell$) right so that it occupies positions $t'-\ell+1,\ldots,t'$, and write $0$'s on positions $1,\ldots,t'-\ell$
\State run \Cref{alg:reverse-decode} on the tape region $[1,t']$
\end{algorithmic}
\end{algorithm}

\begin{proof}[Proof of~\Cref{thm:semicomp}]~
Let $h=2\log t$ for clarity.
\paragraph{Correctness of the encoder}
The index $t'$ exists since
$s(t)>2C\log t \ge (C+3)\log t$, so \Cref{alg:reverse-encode}
can be validly invoked on the prefix. For the layout,
\[
t'-\ell \ge t'-\lambda(t')-2 = s(t')-2
\ge C\log t + h,
\]
so after lines~4--5 the tape holds
\[
\oE(x)||0^{C\log t},
\qquad
\oE(x) := \langle t'\rangle||\langle\ell\rangle||\oA(x_{\le t'})||
x_{t'+1,\ldots,t}||0^{t'-\ell-h-C\log t}
 \in \zo^{t-C\log t}.
\]
Both shifts move contiguous
blocks left into free space and are performed by copying in increasing
position order with $O(\log n)$ counters.

\paragraph{Correctness of the decoder}
The header determines the boundaries of all three
blocks exactly, so lines~2--3 invert the two shifts of \Cref{alg:enc}. After
line~3 the region $[1,t']$ holds $0^{g'}||\oA(x_{\le t'})$, exactly the
postcondition of \Cref{alg:reverse-encode} on $x_{\le t'}$, so line~4
restores $x_{\le t'}$ by \Cref{lem:reverse-decode}, and the tape holds
$x_{\le t'}||x_{t'+1,\ldots,t} = x$.
\end{proof}

\section{Putting it all together}\label{sec:final}

\begin{proof}[Proof of~\Cref{thm:main}]
    Let $G=(V,E)$ be the graph on $n$ vertices given as input and let $m=|E|$. Let $\{q_e\}_{e\in E}$ be the set of good probabilities that exist and can be computed in $\CL$ (\Cref{cor:CLqe}), and recall that $q_e = 2^{-j_e}$. We think of our catalytic tape as $\tau_1,\ldots,\tau_B,\tau'$ where
    \[
    \tau_i \in \zo^{j_1}\times\ldots\times \zo^{j_m}
    \]
    where if $j_e=0$ we do not allocate any space on the tape (corresponding to edges that are always included),
    and $\tau'$ is used as additional catalytic space for subroutines as necessary. 
    Note that $B=\sum_e j_e\le n^3$.
    Moreover, let $f$ be the function of~\Cref{thm:measure} with $K=18$ and error parameter $\eps/2^K$. The algorithm first tests if for any $\tau_i$ we have $f(\tau_i)\le n^K2^{-|\tau_i|}$ (using that $f$ is computable in $\CL$). If this holds, we output the sparsifier $H(\chi_{\tau_i})$ (where we include edge $e$ if the relevant block is equal to $1^{j_e}$), and by Item 3 we have that this graph has $O(n\eps^{-2}\log n)$ edges and $L_{H(\tau_i)}\approx_\eps L_G$. Otherwise, we must have
    \[
    -\log f(\tau_i)< B-\log(n^{18})\le B-6\log B, 
    \]
    so we invoke the algorithm $\Enc$ of~\Cref{thm:semicomp} with $C=3$ on $\tau_i$ for every $i$. This means that the $i$th block of the tape now has the final $0^{3\log B}$ bits free, and hence there are $B3\log B\ge B$ bits free on the tape.\footnote{We can assume WLOG that $3\log B\ge 1$, since otherwise we can brute force over the $B=O(1)$ bits corresponding to randomized edges using the worktape.} Thus, we use this free space to brute force over $\chi\in \zo^B$ to find a $\chi$ where $f(\chi)\le n^K2^{-|\chi|}$ (which exists by an averaging argument), then output the sparsifier $H(\chi)$. Finally, we run the algorithm $\Dec$ of~\Cref{thm:semicomp} on $\tau_i$ for every $i$, which resets the tape.
\end{proof}

\section*{Statement on AI Use}
Compressing high-potential strings using in-place arithmetic coding (both the idea and initial writeup) was developed jointly with Fable 5. The authors take full responsibility for the content of the document.

\section*{Acknowledgments}
Edward Pyne thanks Dean Doron and Aryan Agarwala for conversations about sparsification in $\CL$. This material is based upon work supported by the Air Force Office of Scientific Research under award number FA9550-23-F-0014 in the amount of \$139,400 to Cassandra Marcussen. Cassandra Marcussen is also supported in part by NSF Award 2152413 and a
Simons Investigator Award to Madhu Sudan. Edward Pyne is supported by an NSF Graduate Research Fellowship. Ronitt Rubinfeld is supported by the NSF TRIPODS program (award DMS-2022448) and CCF-2310818.

\bibliographystyle{alpha}
\bibliography{ref}

\end{document}

%% file: 0-macros.tex
\usepackage{amsmath,amsfonts,amssymb}
\usepackage{amsthm}
\usepackage{bbm}
\usepackage{bm}
\usepackage{latexsym}
\usepackage{tikz,circuitikz}
\usepackage[noadjust]{cite}
\usepackage{graphicx}
\usepackage{sepfootnotes}

\usepackage[colorlinks=true,linkcolor=blue,citecolor=blue]{hyperref}
\usepackage{xcolor}
\usepackage{dirtytalk}
\usepackage{mathtools}
\usepackage[margin=1in]{geometry}
\usepackage{thm-restate}
\usepackage{enumitem}
\usepackage{tikz-cd}
\usepackage{algorithm}
\usepackage{algpseudocode}
\usepackage{comment}
\newtheorem{theorem-intro}{Theorem}
\newtheorem{corollary-intro}{Corollary}
\newtheorem{theorem}{Theorem}[section]
\newtheorem{corollary}[theorem]{Corollary}

\newtheorem{lemma}[theorem]{Lemma}

\newtheorem{proposition}[theorem]{Proposition}
\newtheorem{claim}[theorem]{Claim}
\theoremstyle{definition}
\newtheorem{definition}[theorem]{Definition}
\newtheorem{remark}[theorem]{Remark}

\newtheorem{fact}[theorem]{Fact}

\usepackage[nameinlink]{cleveref}
\Crefname{assumption}{Assumption}{Assumptions}
\Crefname{openprob}{Open Problem}{Open Problems}
\Crefname{observation}{Observation}{Observations}
\Crefname{claim}{Claim}{Claims}
\Crefname{theorem}{Theorem}{Theorems}
\Crefname{corollary-intro}{Corollary}{Corollaries}
\Crefname{theorem-intro}{Theorem}{Theorems}
\Crefname{problem}{Problem}{Problem}
\Crefname{proposition}{Proposition}{Propositions}
\Crefname{definition}{Definition}{Definition}
\Crefname{lemma}{Lemma}{Lemma}
\Crefname{conjecture}{Conjecture}{Conjecture}
\Crefname{fact}{Fact}{Fact}
\Crefname{corollary}{Corollary}{Corollaries}
\Crefname{ineq}{inequality}{inequalities}
\Crefname{equation}{Equation}{Equations}

\newcommand{\N}{\mathbb{N}}

\newcommand{\R}{\mathbb{R}}

\newcommand{\one}{\mathbf{1}}
\newcommand{\ep}{\eps}
\newcommand{\ra}{\rightarrow}

\newcommand{\zo}{\{0,1\}}
\newcommand{\eps}{\varepsilon}

\newcommand{\tO}{\widetilde{O}}

\DeclareMathOperator{\polylog}{polylog}
\DeclareMathOperator{\poly}{poly}
\DeclareMathOperator*{\E}{\mathbb{E}}

\newcommand{\SC}{\bm{\mathsf{SC}}}

\newcommand{\NL}{\bm{\mathsf{NL}}}
\newcommand{\DET}{\bm{\mathsf{DET}}}

\newcommand{\CL}{\bm{\mathsf{CL}}}

\newcommand{\inplaceFCL}{\bm{\mathsf{inplaceFCL}}}
\newcommand{\searchCL}{\bm{\mathsf{searchCL}}}
\newcommand{\LOSSY}{\bm{\mathsf{LOSSY}}}

\renewcommand{\P}{\bm{\mathsf{P}}}

\newcommand{\NC}{\bm{\mathsf{NC}}}
\newcommand{\TC}{\bm{\mathsf{TC}}}
\newcommand{\RNC}{\bm{\mathsf{RNC}}}
\newcommand{\ZPNC}{\bm{\mathsf{ZPNC}}}

\DeclareMathOperator{\Tr}{Tr}

\newcommand{\F}{\ensuremath{\mathbb{F}}}

\newcommand{\emphdef}[1]{{\sf {#1}}} % or \sf, or \textbf

\newcommand{\Bern}{\textsc{Bern}}

\newcommand{\Enc}{\textsc{Enc}}
\newcommand{\Dec}{\textsc{Dec}}

\newcommand{\oU}{\mathcal{U}}
\newcommand{\oA}{\mathcal{A}}
\newcommand{\oE}{\mathcal{E}}